\documentclass[11pt,reqno,notitlepage]{article}

\usepackage{amssymb}
\usepackage{amsthm}
\usepackage{graphicx}
\usepackage{amsmath}
\usepackage{fancyhdr}
\usepackage{natbib}
\usepackage{datetime}
\usepackage[toc]{appendix}
\usepackage{tikz}
\usepackage{mathrsfs}
\usepackage[normalem]{ulem}
\usepackage{graphicx, xcolor}
\usepackage[breaklinks]{hyperref}

\usepackage{setspace}
\onehalfspacing
\usepackage{graphicx, xcolor}												% need for figures
\usepackage[mathscr]{eucal}												% gives you \mathscr font
\usepackage{subcaption}
%\documentclass[numbook, envcountsect, envcountsame, envcountreset, runningheads, smallextended]{svjour3}
%\usepackage{amssymb,amsmath}
%\usepackage{mathptmx}
%\usepackage{graphicx}
%\usepackage[normalem]{ulem}
%\usepackage{graphicx}
%\usepackage[arrow, matrix, curve]{xy}
%\usepackage{txfonts}
%\usepackage[usenames,dvipsnames]{color}
%\usepackage{hyperref}
% use option [mathcal] to make \mathcal command use eucal mathscr fonts
%\usepackage{hyperref}
%\usepackage{cancel}																% gives you the ability to visibly cross out terms in equations}
%\usepackage{pstricks}
%\usepackage{rotating}
%\usepackage{lscape}
%\usepackage[paperwidth=8.5in,paperheight=11in,top=1.25in, bottom=1.25in, left=1.00in, right=1.00in]{geometry}
%\usepackage{color}
\usepackage{epstopdf}

\usetikzlibrary{arrows,decorations.pathreplacing}
%
%da
%
%
%
\usepackage[top=1in, bottom=1in, left=1in, right=1in]{geometry}
\usepackage{mathtools}									%
\mathtoolsset{showonlyrefs=true}							%

\newtheorem{theorem}{Theorem}
\newtheorem{proposition}[theorem]{Proposition}	
	
\newtheorem{assumption}[theorem]{Assumption}	
\newtheorem{definition}[theorem]{Definition}
\newtheorem{remark}{Remark}
\numberwithin{equation}{section}	
\numberwithin{theorem}{section}
%\numberwithin{remark}{section}
%
%
%
%
%

%

\newcommand{\blue}[1]{{\color{blue}{{#1}}}}

\newcommand{\bp}{\begin{pmatrix} } 
\newcommand{\ep}{\end{pmatrix}} 

\renewcommand{\(}{\left(}				
\renewcommand{\)}{\right)}
\renewcommand{\[}{\left[}
\renewcommand{\]}{\right]}	
\renewcommand{\vec}[1]{\mathbf{#1}}	
\def \norm#1{\left\| #1 \right\| }
\def\q{\vec{q}}
\def\s{\vec{s}}
\def\f{\vec{f}}
\def\h{\vec{h}}
\def\a{\vec{a}}
\def\Phiv{\vec{\Phi}}

\def\R{\mathbb{R}}
											
\def\I{\mathbb{I}}

\def\H{\mathcal{H}}

\def\D{\mathcal{D}}	
\def\Q{\mathcal{Q}}

\def\Qh{\widehat{\Q}}

\def\d{\partial}		

\def\ind{\mathbb{I}}

\def \abs#1{\left| #1 \right| }

\newcommand{\diag}{\operatorname{diag}}

\def\define{:=}

\def\T{\top}

\DeclareMathOperator*{\argmin}{arg\,min}

\newtheorem{example}[theorem]{Example}

\DeclareMathOperator*{\FIX}{FIX}

\begin{document}

\title{A Repo Model of Fire Sales with VWAP and LOB Pricing Mechanisms}

\author{
Maxim Bichuch
\thanks{
Department of Applied Mathematics and Statistics,
Johns Hopkins University
3400 North Charles Street, 
Baltimore, MD 21218. 
{\tt mbichuch@jhu.edu}. Work  is partially supported by NSF grant DMS-1736414. Research is partially supported by the Acheson J. Duncan Fund for the Advancement of Research in Statistics.}
\and  Zachary Feinstein
\thanks{
School of Business
Stevens Institute of Technology, Hoboken
NJ 07030, USA,
{\tt  zfeinste@stevens.edu}. \emph{Corresponding author}. }
}
\date{\today}
\maketitle

\begin{abstract}
We consider a network of banks that optimally choose a strategy of asset liquidations and borrowing in order to cover short term obligations. The borrowing is done in the form of collateralized repurchase agreements, the haircut level of which depends on the total liquidations of all the banks. Similarly the fire-sale price of the asset obtained by each of the banks depends on the amount of assets liquidated by the bank itself and by other banks. By nature of this setup, banks' behavior is considered as a Nash equilibrium. This paper provides two forms for market clearing to occur: through a common closing price and through an application of the limit order book.  The main results of this work are providing the existence of maximal and minimal clearing solutions (i.e., liquidations, borrowing, fire sale prices, and haircut levels) as well as sufficient conditions for uniqueness of the clearing solutions. 
%%%
%%%This paper provides a model of a repurchase agreement market with price-mediated contagion due to asset liquidations.  In this model, firms optimally choose a strategy of asset liquidations and borrowing in the form of repurchase agreements in order to cover short term obligations.  By nature of this setup, firm behavior is considered as a Nash equilibrium. This paper provides two forms for market clearing to occur: through a common closing price and through an application of the limit order book.  The main results of this work are providing sufficient conditions for existence and uniqueness of the clearing solutions (i.e., liquidations and borrowing). %as well as the sensitivity of these results with respect to the prevailing interest rates.
\end{abstract}
%{\bf AMS subject classification} 91G99, 90B10, 91A06.\\
%\indent {\bf JEL subject classification} G32.\\
\indent {\bf Keywords} Finance, Systemic Risk, Price-Mediated Contagion, Repurchase Agreements.\\

\section{Introduction}
Historically, financial risk was typically measured for individual firms separately. After the financial crisis of 2007-2009, a new understanding that risk can spread through the entire financial system has emerged. This is referred to as systemic risk -- the risk that the distress of several banks can spread throughout the system to a degree that it may affect the viability of the entire system or a significant part of it. Such a propagation of risk is known as financial contagion. Two types of contagion are usually distinguished: those that happen due to local connections (e.g., obligations between banks in the network), and those that happen due to their influence on the entire network (e.g., impact to asset prices). This study focuses on a form of global contagion through asset prices, and investigates the existence maximal and minimal Nash equilibria in a model of fire sales and collateralized borrowing. We further provide sufficient conditions for uniqueness of the Nash equilibrium. 

%A seminal paper in the systemic risk literature is \cite{EN01}; that work studied an equilibrium payment model in a network of banks.  That network model of interbank payments has been applied in numerous follow-up studies, e.g., \cite{ACP14,HK15,BEST04,ELS13,U11,G11,BBCH17}.  The \cite{EN01} model has also been extended to consider bankruptcy costs and asset fire sale dynamics (see \cite{E07,RV13,EGJ14,GY14,AW_15,CCY16, E07,EGJ14,AW_15,GHM12,CFS05,NYYA07,GK10,AFM13,CLY14,AW_15,AFM16,feinstein2015illiquid,feinstein2015leverage,feinstein2017multilayer}).  We refer to, e.g., \cite{AW_15,Staum,H16} for detailed review of this literature.

Price mediated contagion, i.e., systemic risk spreads through the market by impacting  the asset liquidations (and purchases).  
As prices drop due to a fire sale by one market participant to meet an obligation or satisfy some regulation, the value of the assets held by all other institutions are also impacted due to, e.g., mark-to-market accounting rules. 
Due to these writedowns in the value of assets, these other institutions in turn may themselves need to sell assets in a fire sale to meet their own obligations and satisfy regulatory constraints.  This is a type of global contagion, as these writedowns impact all banks that hold the asset, and therefore it is fundamental to systemic risk.  Price mediated contagion through regulatory constraints such as leverage requirements were studied in equilibrium models by, e.g.,~\cite{CFS05,GLT15,CL15,feinstein2015leverage,braouezec2016risk,braouezec2017strategic,DE18,feinstein2019leverage,CS19,banerjee2019price}.  In this work we focus instead on fire sales which are precipitated by the need to cover short-term obligations; this problem was studied in equilibrium models by, e.g.,~\cite{caccioli2014,AFM16,AW_15,feinstein2015illiquid,feinstein2017multilayer,CLY14}.  We especially wish to highlight the works of, e.g.,~\cite{caballero2013fire,feinstein2015illiquid,feinstein2017multilayer,braouezec2017strategic,FH2020} which model the fire sales as a Nash equilibrium.

In this paper we extend the model of \cite{bichuch2019optimization}; that work considered a network of banks facing shortfalls on their obligations which can be met through borrowing or by liquidating assets in which each firm had an infinite capacity to borrow.  (We wish to note that none of the references provided in the prior paragraph allow for firms to borrow to cover their obligations.) 
The primary goal of this current paper is to consider the effects of repurchase agreement (repo) markets on financial stability.  Such markets require banks to post collateral above the value of the loan in order to secure short term financing; this is described by a \emph{haircut} on the value of the asset when used as collateral.  In this construction, each bank seeks to optimize their strategy between asset liquidations and borrowing in the repo market.  As in the traditional fire sale literature (see, e.g., \cite{AFM16}), asset liquidations cause price impacts and, thus, the actions of one bank influence the decisions of all other institutions as well, i.e., we consider the Nash equilibrium of strategies.  
Based on the static setting traditionally followed in the literature (see, e.g.,~\cite{CFS05,GLT15,AFM16}), we assume that all the trading happens simultaneously and instantaneously.  

It is often the case that there is no unique Nash equilibrium, or alternatively the conditions for uniqueness turn out to be much stronger than the ones required for existence. An important intermediate step is to establish a monotonicity result, i.e. maximal and minimal solutions. This is done in many classical works, e.g. \cite{EN01,RV13,braouezec2019strategic}. In this paper, in addition to providing a set of sufficient conditions for uniqueness of Nash equilibrium, we also show find the maximal and minimal Nash equilibria under the same conditions as the one we assume for its existence.

Previously, prices were provided by an inverse demand function which was used to price liquidations as well as provide mark-to-market accounting. We refer to, e.g.,~\cite{CFS05,GLT15,AFM16} as well as in the prior modeling work of~\cite{bichuch,bichuch2019optimization} which we extend.
In undertaking this study, we consider two classical and realistic pricing functions in the fire sale process: \emph{Volume Weighted Average Pricing} (VWAP) and a \emph{Limit Order Book} (LOB) based pricing scheme.  
Both of these schemes can be viewed as pricing limits as order sizes decrease to zero, but with different rates of liquidation.  This allows us to incorporate notions of time dynamics into the static model proposed.
The VWAP scheme determines prices if firms place orders at a rate proportional to their total desired liquidations; this, ultimately, results in the same average price for every bank.  Such a pricing scheme was introduced in~\cite{banerjee2019price}.
The LOB setting distinguishes prices by assuming all firms place orders at the same speed; banks with smaller order volumes will receive a higher price than those with a larger order volume (as the latter will continue to eat through the book even after the former are done liquidating).
%\xblue{The VWAP scheme determines the capital raised by each bank based on the fraction of its volume sold to the total volume sold by all other banks.  Such a pricing scheme was introduced in~\cite{banerjee2019price}.  The LOB setting distinguishes prices to capture a notion of price dynamics in time.  Consider each bank repeatedly placing infinitesimal orders until it no longer needs to do so; each order will reduce the outstanding size of the book.  Therefore banks with smaller order volume will receive a higher price than banks with a larger order volume (as the latter will continue to eat through the book even after the former are done liquidating).}

As highlighted above, the innovation of this work is two-fold.  First, we consider realistic pricing schemes that allow for banks to receive different prices based on the quantity of assets sold instead of the, more standard, assumption that there is a unique price at which all transactions occur.  Second, we consider collateralized borrowing of illiquid assets in a repo market in which the haircut of this collateral also depends on the mark-to-market value of the asset.  As opposed to the realized liquidation prices, the haircut remains bank independent and only depends on the entire sale volume of the entire banking system since the deal depends on the value of the collateralized asset rather than the riskiness of the individual banks.  As in reality, because the repo loan is collateralized, it is considered (practically) risk-free. Therefore, just like the loan, the collateral value also needs to be bank-independent. 
Under these constructions, we are able to investigate the sensitivity of the resulting market prices to the prevailing repo interest rate.  In particular, regulators use interest rates as the primary control for financial stability.  This was seen in the emergency liquidity injection by the Federal Reserve in September 2019, in order to stabilize the repo market \cite{ihrig2020fed,afonso2020monetary}.  In fact, \cite{GM09,B09} consider the 2007-2009 financial crisis as a run on the repo market.  Therefore systematic consideration of repo markets and the impact of interest rates is of paramount importance.

The organization of this paper is as follows. 
Section \ref{sec:model} introduces the general model with general inverse demand pricing functions.  In that section we provide the existence of Nash equilibrium under a minimal set of assumptions.  Section \ref{sec:main} introduces the VWAP and LOB based inverse demand pricing functions and discusses the conditions needed for maximal clearing solutions and uniqueness of Nash equilibrium.  
Numerical case studies and comparison of VWAP and LOB inverse demand pricing functions is in Section \ref{sec:cs}. The proofs for the main results are provided in the Appendix.
Additionally in the Appendix, under the uniqueness conditions, we investigate the sensitivity of the clearing solutions to the prevailing repo rate.

\section{Financial setting}\label{sec:model}
We begin by assuming a system of $n$ banks. In contrast to works that explicitly depend on the network of interbank obligations, e.g., in \cite{EN01,CFS05,AFM16,feinstein2015illiquid}, herein we will consider only fire sale effects and price mediated contagion as in, e.g., \cite{GLT15,braouezec2016risk,braouezec2017strategic,feinstein2019leverage,banerjee2019price}.  We will, for simplicity, assume that all the banks $i=1, ..., n$ are facing a (cash) shortfall $h_i>0$, all while holding $a_i > 0$ shares of illiquid assets; any banks without either a shortfall or illiquid asset holdings will not participate in any fire sale or borrowing and thus are extraneous to the considerations of this model. The banks are faced with the task of finding the optimal strategy to raise $h_i$ cash in order to cover this shortfall. We assume that they can do so by either selling their illiquid asset, borrowing, or both. It will be assumed that the borrowing is going to be collateralized using the same illiquid asset. 
As is standard in the literature, due to the illiquidity, the price of the illiquid asset declines as assets are being sold; this is due to supply-demand dynamics so that the equilibrium is maintained.  The same effect is assumed for the collateral value of the asset. 
To simplify the setting we only seek to model those institutions with shortfall $h_i > 0$.  That is, we are only modeling borrowers in this work; we refer to Remark~\ref{rem:r} and Appendix~\ref{sec:r} for brief discussion of how lenders may impact the model herein.
%Throughout this work we will denote $x \wedge y = (\min(x_1,y_1),...,\min(x_n,y_n))^\T$ for $x,y \in \mathbb{R}^n$. There is an implicit no short selling constraint in this model.

Herein we introduce two ``pricing'' functions.  Let $\bar f_i: \R_+^n \to [0,1]$ denote the average price obtained by bank $i=1, ..., n$ given the set of system liquidations $(s_1, ..., s_n) \in \D\define\prod_{j=1}^n [0,a_j]$.  Note that we implicitly impose a no short selling constraint throughout this work. Here, without loss of generality, it was assumed that the current, highest price of the asset, before any sales happened is $1$, and it can only decrease thereafter. Notably, the construction of $\bar f_i$ implies that different banks may obtain different prices in the market due to the market design or different order sizes.  Let $g: \R_+ \to [0,1]$ denote the price of the collateralized asset in the repurchase agreements under study, i.e., the function $g\(\sum_{j=1}^n s_j\)$ encodes the haircut on the asset as a mapping of the total liquidations by all the banks. Note that while the price obtained by bank $i$ may be unique due to the different quantities different banks are selling, since the repo transaction is collateralized it is assumed that the repo market offers the same repo rate $r$ to all banks and uses the same haircut $g\(\sum_{j=1}^n s_j\)$. Though we call $g$ the ``haircut'', it is more appropriate to denote $1-g$ to be the true haircut on the asset in the repo market.  At various times in this work we will refer to $g$ as the haircut and others $1-g$ will be given that name. 

Assuming banks sell $\s\define(s_1, ..., s_n) \in \D$, the realized loss to bank $i$ from the sale is $s_i(1-\bar f_i(\s))$. The bank obtained $s_i\bar f_i(\s)$ through this sale, therefore it needs to borrow an additional $(h_i -  s_i \bar f_i(\s))$ for the cost of $r(h_i -  s_i \bar f_i(\s))$. We will abuse notation and denote for convenience $\bar f_i$ to be both $\bar f_i(s_i, \s_{-i})$ and $\bar f_i(\s)$ where $\s_{-i} := (s_1,...,s_{i-1},s_{i+1},...,s_n) \in \prod_{j = 1, j \neq i}^n [0,a_j]$. Therefore, bank $i$ seeks to optimize:
\begin{align}
\label{eq:s*}\begin{split}
s_i^* = s_i^{*}(\s_{-i})&= \argmin_{s_i \in [0,a_i]} \; s_i \left(1 - \bar f_i(s_i,\s_{-i})\right) + r\left(h_i - s_i \bar f_i(s_i,\s_{-i})\right)\\
&\qquad \text{s.t.} \quad s_i \leq \frac{h_i}{\bar f_i(s_i,\s_{-i})}, \; s_i \geq \frac{h_i - a_i g\(\sum_{j=1}^n s_j\)}{\bar f_i(s_i,\s_{-i}) - g\(\sum_{j=1}^n s_j\)}.
\end{split}
\end{align}
Here, the first inequality ensures that bank $i$ does not obtain more than $h_i$ through the asset sale, and the second inequality constraint is used to ensure that $h_i - s_i \bar f_i(s_i,\s_{-i}) \leq (a_i - s_i) g\(\sum_{j = 1}^n s_j\)$, i.e., after the sale, bank $i$ has enough collateral $(a_i-s_i)g\(\sum_{j=1}^n s_j\)$ to cover its loan. 
The paper of \cite{bichuch2019optimization} considers the case in which no haircut is taken, i.e., $g\equiv 1$.  

In~\eqref{eq:s*}, it follows that bank $i$ is solvent if and only if 
\begin{equation*}
\frac{h_i - a_i g\(\sum_{j = 1}^n s_j\)}{\bar f_i(s_i,\s_{-i}) - g\(\sum_{j = 1}^n s_j\)} \leq \frac{h_i}{\bar f_i(s_i,\s_{-i})} \leq a_i.
\end{equation*}
By construction of the haircut for repurchase agreements $0 \leq g\(\sum_{j = 1}^n s_j\) < \bar f_i(s_i,\s_{-i})$.  Under such a construction bank $i$ is solvent if and only if $h_i \leq a_i \bar f_i(s_i,\s_{-i})$, i.e., if at the current price realized by bank $i$ it is possible for said bank to cover its shortfall by liquidations alone.  If bank $i$ is insolvent then we will assume that it is forced to liquidate all of its asset holdings, i.e., $s_i^* = a_i$.

\begin{remark}
With the construction of~\eqref{eq:s*}, we wish to highlight the innovations of this model.  
Fire sales with borrowing was introduced in~\cite{bichuch2019optimization}; however, that paper assumed that borrowing could occur costlessly, i.e., without any haircut.  The introduction of the haircut function and the associated constraint $s_i \geq \frac{h_i - a_i g\(\sum_{j=1}^n s_j\)}{\bar f_i(s_i,\s_{-i}) - g\(\sum_{j=1}^n s_j\)}$ encodes the key notion of collateralized borrowing in a repo market.  In fact, this haircut constraint puts a \emph{lower bound} on the liquidations of the banks in the system (dependent on the total amount of assets being sold) which was not considered previously in~\cite{bichuch2019optimization}.  This new constraint, additionally, allows us to have an endogenous definition for bank solvency ($h_i \leq a_i \bar f_i(s_i,\s_{-i})$); this is in contrast with~\cite{bichuch2019optimization} which, a priori, assumed all banks under consideration that needed to borrow could do so with a haircut of $g \equiv 1$.  Not only does this haircut function more accurately model collateralized borrowing in a repo market, it also allows for a notion of loans dependent on the ``quality'' of the asset as measured by the total amount being liquidated (or, alternatively, a function of the last price quoted in the market).
Additionally, as compared to the traditional fire sale literature (e.g.,~\cite{CFS05,GLT15,AFM16}), we consider the problem in which each bank may have a different price as determined by the collection of inverse demand functions $\bar f_i$.  This can be due to, e.g., the use of limit order book for all liquidations as occurs in reality; such a formulation for that specific setting is presented in the next section.  
\end{remark}
For convenience, for the remainder of this work, denote $\bar q_i = \bar f_i(s_i,\s_{-i}),~i=1, ...,n,$ and $q = g\(\sum_{j = 1}^n s_j\)$.  With this notation, we modify~\eqref{eq:s*} (similarly as in~\cite{bichuch2019optimization}) such that we seek a Nash equilibrium of the game for each bank $i$
\begin{align}
\label{eq:s*q}
s_i^* =s_i^* (\s_{-i}, q, \bar \q)&= \argmin_{s_i \in [0,a_i]} \; s_i \left(1 - \bar f_i(s_i,\s_{-i})\right) + r\left(h_i - s_i \bar f_i(s_i,\s_{-i})\right) \\
&\qquad \text{s.t.} \quad s_i \leq \frac{h_i}{\bar q_i}, \; s_i \geq \frac{h_i - a_i q}{\bar q_i - q}.
\end{align}
The goal is then to find a Nash equilibrium for \eqref{eq:s*q}, such that $(q,\bar \q)$ are, additionally, fixed points of 
\begin{align}
q &= g\(\sum_{j = 1}^n s_j^*\), \qquad \bar q_i = \bar f_i(\s^*).
\label{eq:q-fixed}
\end{align}

As noted above, bank $i$ is defaulting if $h_i \geq a_i \bar q_i$ and, in such a situation, $s_i^* = a_i$. Our goal is primarily to find conditions for existence and uniqueness of this Nash game in the financial system. In order to do that we need assumptions on the inverse demand functions $\bar f_i$ and $g$.

\begin{assumption}\label{ass:bar-f-g}
Let $M \geq \sum_{i = 1}^n a_i$ be the total initial market capitalization of the illiquid asset. For $i=1, ..., n$ we assume that 
\begin{enumerate}
\item $\bar f_i\colon \D \to (0,1]$ are each continuous and non-increasing in every argument with $\bar f_i(0, ..., 0) = 1$. 
\item For $\s_{-i} \in \prod_{j=1,j\neq i}^n [0,a_j]$ we assume that $s_i \in[0,a_i]\mapsto s_i \bar f_i(s_i,\s_{-i})$ is strictly increasing and concave.
\item\label{ass:bar-f-g-3} The haircut function $g: [0,M] \to (0,1]$ is continuous, convex, and strictly decreasing, with $\min_{1\le i \le n}\bar f_i(\s) > g\(\sum_{j = 1}^n s_j\)$ for every $\s \in \D$.
\item\label{ass:bar-f-g-4} For $\s_{-i} \in \prod_{j=1,j\neq i}^n [0,a_j]$, the mapping $s_i \in[0,a_i]\mapsto s_i \bar f_i(s_i,\s_{-i}) + (a_i-s_i)g\(\sum_{j = 1}^n s_j\)$ is strictly increasing.
\end{enumerate}
\end{assumption}

The intuitive meaning of the first assumption is that the prices are declining as sales increase. The intuition behind the second assumption is that the greater quantity being sold, the more cash can be obtained. The third assumption is similar to the first, and additionally, we assume that the haircut, is strictly less than the smallest price of the asset. Finally, the intuition behind the last assumption is that the total cash that can be raised and borrowed increases with the amount being sold.

We are interested in investigating the equilibrium of problem \eqref{eq:s*}, however because of the dependency of the constraints on the actions of the other banks, this is not so easy. Therefore we investigate the equilibria of the problem \eqref{eq:s*q} and \eqref{eq:q-fixed} instead. While the two problems are not identical, the connection between them, under the Assumptions \ref{ass:bar-f-g}, is given in the following proposition, the proof of which is delayed until Appendix~\ref{sec:proof-sec2}.
\begin{proposition}\label{prop:nash-equiv}
Under Assumption \ref{ass:bar-f-g} a Nash equilibrium  $\s^{**} \in\D$ of \eqref{eq:s*q} with equilibrium prices $(q^{**},\bar q^{**}_1, ..., \bar q^{**}_n) = \(g\(\sum_{i = 1}^n s_i^{**}\) , \bar f_1\(\s^{**}\), ...,  \bar f_n\(\s^{**}\)\)$ is also a Nash equilibrium of problem \eqref{eq:s*}, and vice versa.
\end{proposition}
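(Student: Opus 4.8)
The plan is to show that the two problems~\eqref{eq:s*} and~\eqref{eq:s*q}--\eqref{eq:q-fixed} have the same equilibria by arguing that, at any candidate equilibrium strategy profile $\s \in \D$, the constraint sets (and objectives) of the two optimization problems that each bank faces coincide once we substitute the fixed-point values $\bar q_i = \bar f_i(\s)$ and $q = g(\sum_j s_j)$. The objectives are already literally identical (the objective in~\eqref{eq:s*q} is written in terms of $\bar f_i(s_i,\s_{-i})$, exactly as in~\eqref{eq:s*}), so the only issue is the constraint sets. The subtle point is that in~\eqref{eq:s*} the constraints depend on $s_i$ through $\bar f_i(s_i,\s_{-i})$ and through $g(\sum_j s_j)$, whereas in~\eqref{eq:s*q} the quantities $\bar q_i$ and $q$ are \emph{frozen} at their equilibrium values and do not move with the bank's own deviation $s_i$. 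So the real content is: a unilateral deviation that is feasible in one problem is feasible in the other.

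First I would fix a Nash equilibrium $\s^{**}$ of~\eqref{eq:s*q}, set $\bar q_i^{**} = \bar f_i(\s^{**})$ and $q^{**} = g(\sum_j s_j^{**})$, and check that $\s^{**}$ is feasible for and optimal in~\eqref{eq:s*}. For feasibility, note that at $s_i = s_i^{**}$ the two constraint sets agree exactly (both evaluate $\bar f_i$ and $g$ at the equilibrium point), so $\s^{**}$ satisfies the constraints of~\eqref{eq:s*}. The main work is optimality: I need to rule out a profitable deviation $s_i$ that is feasible in~\eqref{eq:s*} but whose objective value is lower. Here is where Assumption~\ref{ass:bar-f-g} enters. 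The constraint $s_i \le h_i/\bar f_i(s_i,\s_{-i})$ is equivalent to $s_i \bar f_i(s_i,\s_{-i}) \le h_i$; by part~2, $s_i \mapsto s_i \bar f_i(s_i,\s_{-i})$ is strictly increasing, so this constraint is equivalent to $s_i \le \bar s_i$ for a threshold $\bar s_i$, and likewise $s_i \le h_i/\bar q_i^{**}$ is equivalent to $s_i \le s_i^{**}\bar q_i^{**}/\bar q_i^{**}$-type bound — I would compare the thresholds. Similarly, the lower-bound (haircut) constraint $h_i - s_i \bar f_i(s_i,\s_{-i}) \le (a_i - s_i) g(\sum_j s_j)$ is, by part~4, of the form $s_i \ge \underline{s}_i$ since its left side is strictly decreasing in $s_i$ and its right side $s_i\bar f_i + (a_i - s_i)g$ is strictly increasing; the frozen version $s_i \ge (h_i - a_i q^{**})/(\bar q_i^{**} - q^{**})$ is also a lower bound. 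The key observation is that both interval constraint sets contain $s_i^{**}$ as an interior-or-boundary point with the \emph{same} active/inactive pattern there, and the monotonicity from Assumption~\ref{ass:bar-f-g} forces the frozen interval and the true (state-dependent) interval to have the same one-sided behavior at $s_i^{**}$; hence any deviation feasible in~\eqref{eq:s*} that improves the objective would also be feasible in~\eqref{eq:s*q}, contradicting optimality of $s_i^{**}$ there.

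For the converse, I would take a Nash equilibrium $\s^*$ of~\eqref{eq:s*}, define the prices the same way, and run the symmetric argument: feasibility in~\eqref{eq:s*q} holds because at $s_i = s_i^*$ the frozen constraint set contains $s_i^*$ by construction, and optimality follows because any improving deviation feasible in the frozen problem~\eqref{eq:s*q} can be shown — again via the monotonicity in parts~2 and~4 and the concavity in part~2 — to be feasible in~\eqref{eq:s*}, contradicting Nash optimality there. Throughout I would use parts~1 and~3 (continuity, $\bar f_i > g$) to guarantee the constraint sets are nonempty compact intervals so that argmins exist and the thresholds are well defined, and the solvency characterization $h_i \le a_i \bar f_i(s_i,\s_{-i})$ derived in the text to handle the defaulting case $s_i^* = a_i$ (where the argument is immediate since $a_i$ is the largest feasible value in both problems).

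The step I expect to be the main obstacle is the optimality (rather than feasibility) half of each direction — precisely, showing that the state-dependence of the constraints in~\eqref{eq:s*} does not create a profitable deviation that the frozen formulation~\eqref{eq:s*q} misses. This requires a careful comparison of the two constraint intervals at the equilibrium point and leveraging that $s_i\bar f_i(s_i,\s_{-i})$ is strictly increasing and concave so that the feasible region for $s_i$ in~\eqref{eq:s*} is itself a (closed) interval with a predictable relationship to the frozen interval; a naive argument that merely checks the endpoints could fail if the two intervals were nested the "wrong" way, so the monotonicity content of parts~2 and~4 of Assumption~\ref{ass:bar-f-g} is doing real work here and must be invoked explicitly.
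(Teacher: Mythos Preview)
Your proposal is correct and follows essentially the same approach as the paper. Both arguments rest on the same three ingredients: convexity of the objective (from concavity of $s_i \mapsto s_i\bar f_i(s_i,\s_{-i})$ in part~2), strict monotonicity of $s_i \mapsto s_i\bar f_i(s_i,\s_{-i})$ to show the true and frozen upper thresholds coincide at $s_i^{**}$, and strict monotonicity of $s_i \mapsto s_i\bar f_i(s_i,\s_{-i}) + (a_i - s_i)g(s_i + \sum_{j\neq i} s_j)$ (part~4) to do the same for the lower thresholds. The paper organizes this as an explicit case split on where $s_i^{**}$ sits (unconstrained interior optimum, at the upper bound $h_i/\bar q_i$, at the lower bound $[(h_i - a_iq)/(\bar q_i - q)]^+$, or insolvent $a_i$), while you phrase it as a comparison of the two constraint intervals; these are the same argument in different packaging.
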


Existence of a Nash equilibrium easily follows as a consequence of Brouwer's fixed-point theorem:
\begin{theorem}[Existence of Nash Equilibrium]\label{thm:exist}
Assume the inverse demand functions $\bar f_i,~i=1,...,n$ and haircut function $g$ satisfy Assumption \ref{ass:bar-f-g}. Then there exists a Nash equilibrium liquidating strategy $\s^{**} \in\D$ with equilibrium prices $(q^{**},\bar q^{**}_1, ..., \bar q^{**}_n) = \(g\(\sum_{i = 1}^n s_i^{**}\) , \bar f_1\(\s^{**}\), ...,  \bar f_n\(\s^{**}\)\)$.
\end{theorem}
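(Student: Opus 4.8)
The plan is to reduce existence to a fixed point of a best-response map on the compact convex cube $\D=\prod_{j=1}^{n}[0,a_j]$ and invoke Brouwer's theorem. Given a profile $\s\in\D$, read off the prices it induces,
\[
q(\s)\define g\Big(\sum_{j=1}^{n}s_j\Big),\qquad \bar q_i(\s)\define\bar f_i(\s),\quad i=1,\dots,n,
\]
freeze them into the constraints of \eqref{eq:s*q}, and let each bank re-optimize. Writing $\mathcal{F}_i(\s)\define\big\{s_i\in[0,a_i]:\ s_i\le h_i/\bar q_i(\s),\ s_i\ge (h_i-a_iq(\s))/(\bar q_i(\s)-q(\s))\big\}$, set
\[
\Phi_i(\s)\define\argmin_{s_i\in\mathcal{F}_i(\s)}\Big[\,s_i-(1+r)\,s_i\bar f_i(s_i,\s_{-i})+rh_i\,\Big]
\]
when bank $i$ is solvent at $\s$ (that is, $h_i\le a_i\bar q_i(\s)$), and $\Phi_i(\s)\define a_i$ otherwise. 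A fixed point $\s^{**}=\Phi(\s^{**})$ is, after setting $q^{**}\define g(\sum_i s_i^{**})$ and $\bar q_i^{**}\define\bar f_i(\s^{**})$, exactly a Nash equilibrium of \eqref{eq:s*q} whose prices satisfy \eqref{eq:q-fixed}, since $\mathcal{F}_i(\s^{**})$ is built from precisely these prices; Proposition~\ref{prop:nash-equiv} then turns it into a Nash equilibrium of the original problem~\eqref{eq:s*}.

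First I would check that $\Phi$ is a well-defined self-map of $\D$. The objective in \eqref{eq:s*q} rearranges to $s_i(1-\bar f_i)+r(h_i-s_i\bar f_i)=s_i-(1+r)s_i\bar f_i(s_i,\s_{-i})+rh_i$, which is convex in $s_i$ by Assumption~\ref{ass:bar-f-g}(2) (an affine function minus a nonnegative multiple of a concave one). The set $\mathcal{F}_i(\s)$ is a closed subinterval of $[0,a_i]$, and a short computation shows that $\mathcal{F}_i(\s)\neq\emptyset$ if and only if $h_i\le a_i\bar q_i(\s)$, i.e.\ exactly the solvency condition; hence $\Phi_i$ is defined on all of $\D$ with values in $[0,a_i]$. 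A minimizer of a continuous convex function over a nonempty compact interval exists; when it fails to be unique (the flat, non-strictly-convex case) I would adopt the convention of selecting the smallest minimizer.

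The crux is continuity of $\Phi$; once that is in hand, Brouwer applied to $\Phi:\D\to\D$ produces $\s^{**}=\Phi(\s^{**})$ and the theorem follows as above. The maps $\s\mapsto q(\s)$ and $\s\mapsto\bar q_i(\s)$ are continuous by Assumptions~\ref{ass:bar-f-g}(1) and~\ref{ass:bar-f-g}(3), and Assumption~\ref{ass:bar-f-g}(3) keeps $\bar q_i(\s)-q(\s)$ bounded away from $0$, so both endpoints of $\mathcal{F}_i(\s)$ vary continuously over the solvent region; at the solvency boundary $h_i=a_i\bar q_i(\s)$ these endpoints collapse to $a_i$, so $\mathcal{F}_i(\s)=\{a_i\}$ and the forced value $a_i$ used on the insolvent side glues on continuously. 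Thus $\s\mapsto\mathcal{F}_i(\s)$ is a continuous correspondence with nonempty compact convex values, the objective is jointly continuous in $(s_i,\s)$, and (after the tie-break) the minimizer is unique, so Berge's Maximum Theorem gives continuity of each $\Phi_i$, hence of $\Phi$.

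I expect the main obstacle to lie in this continuity step, and within it two delicate points: making sure the forced-liquidation rule $s_i^{*}=a_i$ for insolvent banks meshes continuously with the optimizer on the solvent side, so that $\Phi$ does not jump as a bank crosses its solvency threshold; and supplying the single-valuedness that Brouwer demands, either by verifying that the smallest-minimizer selection stays continuous in the degenerate flat case, or, should that prove awkward, by instead invoking Kakutani's theorem for the convex-valued best-response correspondence and discarding the selection.
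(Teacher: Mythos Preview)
Your approach is essentially the same as the paper's: verify convexity of the objective and the feasible interval, invoke Berge's maximum theorem to get upper hemicontinuity of the best-response correspondence (checking that the insolvency rule $s_i^*=a_i$ glues on at the boundary $h_i=a_i\bar q_i$), and conclude via a fixed-point theorem. The paper works in the joint space $(\s,q,\bar\q)$ rather than composing with the price maps on $\D$ alone, but this is an inessential difference.

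The one point worth flagging is your Brouwer-versus-Kakutani hedge. The paper goes straight to Kakutani, and for good reason: the smallest-minimizer selection need \emph{not} be continuous when the objective has a flat piece (e.g.\ $f_\epsilon(x)=\max(0,|x|-1)+\epsilon x$ on $[-2,2]$ has smallest minimizer jumping from $-1$ to $+1$ as $\epsilon$ crosses $0$). You correctly anticipate this and name Kakutani as the fallback, so there is no gap---but in writing this up you should either verify strict convexity of $s_i\mapsto s_i-(1+r)s_i\bar f_i(s_i,\s_{-i})$ from Assumption~\ref{ass:bar-f-g}(2) (which would make the minimizer unique and Brouwer clean), or simply drop the selection and use Kakutani from the start as the paper does.
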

\begin{proof}[Proof of Theorem \ref{thm:exist}]
Fix bank $i$ and consider \eqref{eq:s*q} as a function of $(\s_{-i},q,\bar q_i)$ such that $0 \leq q < \bar q_i$, with $\bar f_i(a_1,...,a_n) \leq \bar q_i$, and $\s_{-i}^* \in \prod_{j = 1, \neq i}^n [0,a_j]$.  Since the objective function of \eqref{eq:s*q} is convex in $s_i$ and the constraint set is a convex interval, the set of minimizers for a fixed set of parameters $(\s_{-i},q,\bar q_i)$ is convex.  
An application of Berge maximum theorem (on $\bar q_i \geq \frac{h_i}{a_i}$ due to the continuity of the objective and constraint functions) yields upper continuity and convex-valuedness of the set of minimizers.  This is extended for the region of insolvency by the assumption that $s_i^* = a_i$ on $h_i > a_i \bar q_i$.  Thus a joint equilibrium $(\s^{**},q^{**},\bar q_1^{**},...,\bar q_n^{**})$ can be found via Kakutani's fixed point theorem.
\end{proof}

It turns out that the conditions for existence of equilibrium are very mild, compared to the uniqueness conditions.  This is not surprising considering the following example.
\begin{example}\label{ex:counterexample}
Consider an $n = 2$ bank setting with $r = 0$ repo rate.  Let both banks have the same capitalization $a$ and shortfall $h$.  Let $\bar f_1(\s) = \bar f_2(\s) = \hat f(s_1 + s_2)$ for any $\s \in \D$ and such that $a \hat f(2a) < h < a g(0)$ (e.g., $\hat f(s) = 1 - \frac{s}{4a}$ and $g(s) = 0.7 - \frac{s}{4a}$ with $h \in (0.5a , 0.7a)$).  Therefore, two possible solutions exist:
    \begin{enumerate}
    \item If neither bank liquidates any assets then $(q^{**},\bar q^{**}_1,\bar q^{**}_2,s^{**}_1,s^{**}_2) = (g(0),1,1,0,0)$ is an equilibrium solution. Indeed, set $s^{**}_1=s^{**}_2=0$ so that neither bank sells anything and determine the resulting prices $(q^{**},\bar q^{**}_1,\bar q^{**}_2)=(g(s^{**}_1+s^{**}_2),\hat f(s^{**}_1+s^{**}_2),\hat f(s^{**}_1+s^{**}_2))$; it is not difficult to verify that this is a clearing solution. 
    \item If both banks default and liquidate all their assets then an equilibrium solution is given by $(q^{**},\bar q^{**}_1,\bar q^{**}_2,s^{**}_1,s^{**}_2) = (g(2a),\hat f(2a),\hat f(2a), a,a)$. Again, set $s^{**}_1=s^{**}_2=a$ with resulting prices $(q^{**},\bar q^{**}_1,\bar q^{**}_2)=(g(s^{**}_1+s^{**}_2),\hat f(s^{**}_1+s^{**}_2),\hat f(s^{**}_1+s^{**}_2))$ and it is not difficult to verify that this is a clearing solution as well. 
    \end{enumerate}
\end{example}

%
%\begin{example}\label{ex:counterexample}
%Consider an $n = 1$ bank setting with $r = 0$ repo rate.  This bank has assets and shortfall so that $a_1 \bar f_1(a_1) < h_1 < a_1 g(0)$.  Therefore, two possible solutions exist:
%    \begin{enumerate}
%    \item If the bank liquidates no assets then $(q^{**},\bar q^{**}_1,s^{**}_1) = (g(0),1,0)$ is an equilibrium solution;
%    \item If the bank defaults and liquidates all its assets then $(q^{**},\bar q^{**}_1,s^{**}_1) = (g(a_1),\bar f_1(a_1),a_1)$ is an equilibrium solution.
%    \end{enumerate}
%\end{example}

%\xred{We now concentrate our efforts into understanding properties of these equilibria and find conditions to guarantee their uniqueness. In what follows we will concentrate on two examples for the inverse demand functions $\bar f_i,~i=1,...,n$. }

\section{Main results}\label{sec:main}
We now concentrate our efforts into understanding when the above equilibrium is unique. In what follows we will investigate two specific sample functions. However, instead of specifying the inverse demand functions $\bar f_i$ directly, we derive them from a density function of limit order book together with some trading rules. Let this density be given by $f: \R_+ \to [0,1]$. Alternatively, this $f$ can be viewed as the price of the next infinitely small trade. We concentrate on two realistic examples of price constructions given the liquidations, i.e., market rules, to construct the price of the trade with functional forms $\bar f_i: \D \to (0,1]$ which provides the average price obtained by firm $i$ given the set of system liquidations.
\begin{enumerate}
\item {\bf Volume Weighted Average Price (VWAP):} For $i=1, ..., n, ~\s\in\D$ set $\bar f_i(\s) = 1$, if $\s=0$, otherwise let 
$\bar f_i(\s) \define \frac{ \int_0^{\sum_{j = 1}^n s_j} f(\sigma) d\sigma}{\sum_{j = 1}^n s_j}$. Note that $\bar f_i (\s) = \bar f_j(\s)$ for $i,j \in \{1,2,...,n\}$. 
\item {\bf Limit Order Book Based Price (LOB):} For $i=1, ..., n ~\s\in\D$ set $$\bar f_i(\s) \define \ind_{\{ s_i=0\}} + \ind_{\{s_i>0\}} \frac{1}{s_i} \sum_{j = 1}^k \frac{1}{n - (j-1)} \int_{\sum_{l = 1}^{j-1} (n - (l-1)) (s_{[l]} - s_{[l-1]})}^{\sum_{l = 1}^{j} (n - (l-1)) (s_{[l]} - s_{[l-1]})} f(\sigma) d\sigma,$$ where $0 =: s_{[0]} \leq s_{[1]} \leq s_{[2]} \leq ... \leq s_{[n]}$ are the order statistics and $s_i = s_{[k]}$.
\end{enumerate}

Note that the VWAP example corresponds to how some exchanges calculate the closing price (e.g., in Mexico, India and Saudi Arabia\footnote{\url{research.ftserussell.com/products/downloads/Closing_Prices_Used_For_Index_Calculation.pdf}}). Therefore, given our assumption that this is an illiquid asset, this is a good representation of price paid by banks given the amounts of trades they (collectively) want to make. 
Whereas the LOB example is an example of how to price market trades all coming at the same time using an existing limit order trades already in the book. This is a very interesting and novel example, as in this case, different banks pay different prices.
As far as the authors are aware, this LOB construction has never previously been formulated.

Alternatively, these specific pricing functionals can be viewed as a limit as order sizes decrease to zero at different rates. VWAP can be viewed as the limit when all banks submit their orders at a rate proportional to the total desired liquidation; as such, every bank finishes trading at the same ``time'' and thus all banks obtain the same average price. In contrast, the LOB is the limit when all banks submit their orders at the same rate; as such, banks finish their transactions at different ``times'' based on the desired quantity of assets to be liquidated which generates heterogeneous prices for different trading strategies.
Therefore, though this model is static, these constructions allow us to approximate simple time dynamics.

The following assumptions are placed on the order book density function $f$:
\begin{assumption}\label{ass:idf}
Let $M \geq \sum_{i = 1}^n a_i$ be the total initial market capitalization of the illiquid asset.  
The order book density function $f: [0,M] \to (0,1]$ is strictly decreasing and twice continuously differentiable, with $f(0) = 1$. %and $f(s) > 0$ for any $s \in [0,M]$.
Additionally it will be assumed that the first derivative $f': [0,M] \to -\R_+$ is nondecreasing.  %\xred{Further assume that the mapping $s \in [0,M] \mapsto s f(s)$ is strictly increasing and $s \in [0,M] \mapsto \frac{d^2}{ds^2}(s f(s)) = 2f'(s) + s f''(s)<0$ is strictly negative. We also assume that the same assumptions are satisfied by $g$.}
\end{assumption}

\begin{remark}
Under Assumption \ref{ass:idf} both the LOB and VWAP functions $\bar f_i,~i=1, ...,n$ satisfy Assumption \ref{ass:bar-f-g}(1)-(2).
\begin{itemize}
\item Let $\bar f_i$ be the VWAP inverse demand function and let $i = 1,...,n$.  Continuity on $\D\backslash\{\bf{0}\}$ of $\bar f_i$ follows directly from the construction of the VWAP and, by the fundamental theorem of calculus, $\lim\limits_{\s\to\bf{0}} \bar f_i(\s)=1$.  Additionally, we can caclulate $\frac{\d}{\d s_k} \bar f_i(\s) = -\frac{1}{\hat s^2} \int_0^{\hat s} f(\sigma)d\sigma + \frac{1}{\hat s} f\(\hat s\) < -\frac{\hat s f(\hat s)}{\hat s^2} + \frac{f(\hat s)}{\hat s} = 0$ for $\hat s = \sum_{j = 1}^n s_j$ and any bank $k$.  Moreover, $\frac{\d}{\d s_i} (s_i \bar f_i(\s)) = \frac{1}{\hat s}\int_0^{\hat s} f(\sigma)d\sigma - \frac{s_i}{\hat s^2}\int_0^{\hat s}f(\sigma)d\sigma + \frac{s_i f(\hat s)}{\hat s} > (\hat s - s_i) \frac{\hat s f(\hat s)}{\hat s^2} + s_i \frac{f(\hat s)}{\hat s} = f(\hat s) > 0$ for $\hat s = \sum_{j = 1}^n s_j \geq s_i$.  Finally, it is also easily seen that $\frac{\d}{\d s_i^2} (s_i \bar f_i(\s)) < \frac{s_i}{\sum_{j = 1}^n s_j} f'\(\sum_{j = 1}^n s_j\) \leq 0$.
\item Let $\bar f_i$ be the LOB inverse demand function and let $i = 1,...,n$.  Continuity on $(0,a_i] \times \prod_{j \neq i} [0,a_j]$ of $\bar f_i$ follows directly from the construction of the LOB.  When $\tilde \s$ is on the boundary of $\D$, assume by renaming that w.l.o.g.\ that $\tilde s_1 = ... =\tilde s_k=0,$ where $1\le k\le n$. It is then easily seen that for $1\le i\le k$, we have that $\lim\limits_{ \s\to \tilde\s} \bar f_i(\s) = 1 =\bar f_i(\tilde \s) $, and for $k<i\le n$, $\lim\limits_{ \s\to \tilde\s} \bar f_i(\s) =\bar f_i(\tilde \s)$, and we conclude the continuity on $\D$, as desired.  
%If $s_i = s_{[1]}$, we have $\lim\limits_{\s\to\bf{0}} \bar f_i(\s) = 1$, and similarly if $s_i = s_{[k]}$ in the limit $s_i = s_{[1]}$ for $s_i$ small enough.  Therefore continuity of $\bar f_i$ holds on $\D$.  
Additionally, $\bar f_i$ is non-increasing in $\D$.  Fix $\s\in\D$ and let $\hat\s^j = (\hat s_j,\s_{-j})\in\D$ differ from $\s$ only in the $j$th component; without loss of generality let $\hat s_j > s_j$.  If $s_j \geq s_i$ with $j\neq i$ then $\bar f_i(\s) = \bar f_i(\hat \s^j)$ by construction of LOB.  If $s_j < \hat s_j\le s_i$ then $\bar f_i(\s) > \bar f_i(\hat \s^j)$ because $f$ is decreasing; similarly, if $j = i$ then $\bar f_i(\s) > \bar f_i(\hat s^i)$ since $f$ is decreasing.  If $s_j < s_i < \hat s_j$ then the result follows from a combination of the previous two cases.
Finally, $s_i \bar f_i(\s) = \sum_{j = 1}^k \frac{1}{n-(j-1)} \int_{\sum_{l = 1}^{j-1}(n-(l-1))(s_{[l]}-s_{[l-1]})}^{\sum_{l = 1}^j (n-(l-1))(s_{[l]}-s_{[l-1]})} f(\sigma)d\sigma$ is strictly increasing and concave because $f$ is strictly positive and $f' \leq 0$.
%Next, $\d_{s_{[j]}} \bar f_{[i]}(s) =\begin{cases}  0 & \text{if } i<j,\\
%-\frac{\bar f_{[i]}(\s)}{s_[i]} +\frac{ f((n-k-1) (s_{[j]}- s_{[j-1]})) }{s_{i}} & \text{if } i=j, \\
%\frac{ f((n-k-1) (s_{[j]}- s_{[j-1]})) -  f((n-k) (s_{[j+1]}- s_{[j]}))   }{s_{i}}    & \text{if } i>j.
%\end{cases}$ with $s_k = s_{[i]},$ and each of the terms is non-positive, because $f$ is strictly positive and decreasing.
\end{itemize}
\end{remark}

Throughout the remainder of this work we often wish to consider a comparison of vectors of $(q,\bar \q)$; this is accomplished in the usual way, i.e., $(q^1,\bar \q^1) \geq (q^2,\bar \q^2)$ if and only if $q^1 \geq q^2$ and $\bar q^1_i \geq \bar q^2_i$ for every $i = 1,...,n$.

Our next goal is to ultimately establish uniqueness-type properties of the Nash equilibrium.  In order to do so, similarly to \cite{bichuch2019optimization}, we consider the problem with fixed liquidation price(s) and the haircut value as described in~\eqref{eq:s*q}.  As opposed to Theorem~\ref{thm:exist} above, we first show that there exist unique Nash equilibrium liquidations for these fixed prices as shown in Proposition~\ref{prop:dsc} below, the proof of which is delayed until Appendix \ref{sec:proof1}.

%From Example \ref{ex:counterexample} it is clear that uniqueness of equilibrium does not hold without further assumptions, but we can show as done in Theorem  \ref{thm:tarski} that the set of all fixed points prices $(\bar\q,q)$ in the Nash equilibrium of \ref{eq:s*q} is a lattice. Finally, we introduce additional assumptions and establish uniqueness of the equilibrium in Theorem \ref{thm:unique}.
\begin{proposition}\label{prop:dsc}
Let $\Qh := \left\{(q,\bar \q) \in (0,1] \times (0,1]^n \; | \; q < \bar q_i \; \forall i = 1,2,...,n\right\}$. 
Under VWAP or LOB structure and Assumption~\ref{ass:idf}, given $(q,\bar q_1, ..., \bar q_n) \in \Qh$ % := \left\{(q,\bar \q) \in (0,1] \times (0,1]^n \; | \; q < \bar q_i \; \forall i = 1,2,...,n\right\}$ 
there exists a unique set of equilibrium liquidations $\bar \s(q,\bar q_1, ..., \bar q_n) = \FIX_{\bar\s \in \D} \s^*(\bar \s,q,\bar q_1,...,\bar q_n)$ to \eqref{eq:s*q}, i.e., $\bar s_i(q,\bar \q) = s_i^*(\bar \s_{-i}(q,\bar \q) , q , \bar \q)$ for every bank $i$.
\end{proposition}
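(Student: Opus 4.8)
The plan is to exploit that, once the prices $(q,\bar\q)\in\Qh$ are \emph{fixed}, the constraints in \eqref{eq:s*q} decouple the banks. After discarding the constant $r h_i$, bank $i$ maximizes $\Pi_i(s_i;\s_{-i}):=(1+r)\,s_i\,\bar f_i(s_i,\s_{-i})-s_i$ over $s_i\in C_i:=[\ell_i,u_i]\cap[0,a_i]$, where $\ell_i:=(h_i-a_iq)/(\bar q_i-q)$ and $u_i:=h_i/\bar q_i$ depend only on the fixed parameters $(q,\bar q_i)$. Since $\bar q_i>q$, one checks that $C_i\neq\varnothing$ iff $h_i\le a_i\bar q_i$ (bank $i$ solvent), in which case $C_i=[\max\{0,\ell_i\},u_i]\subseteq[0,a_i]$, while for an insolvent bank one uses the prescribed $\bar s_i(q,\bar\q):=a_i$. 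By Assumption~\ref{ass:bar-f-g}(2) — which VWAP and LOB satisfy under Assumption~\ref{ass:idf}, see the Remark after Assumption~\ref{ass:idf} — the map $s_i\mapsto s_i\bar f_i(s_i,\s_{-i})$ is strictly concave, hence so is $\Pi_i(\cdot;\s_{-i})$.

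\textbf{Single-valued best response and existence of a fixed point (Step 2).} Strict concavity on the convex compact interval $C_i$ gives a unique maximizer $s_i^*(\s_{-i},q,\bar\q)$, and since $C_i$ does not depend on $\s_{-i}$ while $\Pi_i$ is jointly continuous, Berge's maximum theorem makes $\s_{-i}\mapsto s_i^*(\s_{-i},q,\bar\q)$ continuous. Thus $T(\s):=(s_i^*(\s_{-i},q,\bar\q))_{i=1}^n$ is a continuous self-map of $\D$, so a fixed point $\bar\s(q,\bar\q)$ exists (also a consequence of Theorem~\ref{thm:exist}); it remains to show it is unique.

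\textbf{Uniqueness (Step 3).} For VWAP all banks share the price $\bar f_i(\s)=\frac1{\hat s}\int_0^{\hat s}f(\sigma)\,d\sigma$ with $\hat s:=\sum_{j=1}^n s_j$, and the fixed-point equations force every solvent bank's liquidation to equal $\mathrm{proj}_{C_i}$ of a single target that depends only on the aggregate $\hat s$; the problem then collapses to a scalar equation for $\hat s$, whose solution I would show to be unique by a monotonicity argument in $\hat s$, after which the whole profile is determined. For LOB, $\bar f_i$ depends on $\s_{-i}$ only through the order statistics of $\s_{-i}$ relative to $s_i$; I would argue directly: supposing two fixed points $\s^1\neq\s^2$, passing to a bank $k$ realizing the largest coordinate discrepancy, and comparing the first-order optimality conditions for $s_k^1$ and $s_k^2$ on the common interval $C_k$, using strict concavity of $\Pi_k$ together with the monotonicity of $\bar f_k$ in $\s_{-k}$ to reach a contradiction, following the comparison scheme of \cite{bichuch2019optimization}.

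\textbf{Main obstacle.} Step 3 is the crux. The $\Pi_i$ are not jointly concave, and the dependence of $\bar f_i$ on $\s_{-i}$ (for LOB, through order statistics) has no clean submodularity, so no strategic-substitutes result from the literature applies off the shelf; one has to combine strict concavity, the convex-decreasing shape of $f$ from Assumption~\ref{ass:idf}, and a careful case split according to whether each equilibrium liquidation is interior to $C_i$ or pinned at $\ell_i$ or at $u_i$. I expect the haircut-induced lower bound $\ell_i$ — absent in \cite{bichuch2019optimization} — to be the new technical wrinkle that must be shown not to disrupt the argument.
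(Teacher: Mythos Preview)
Your Steps~1 and~2 are correct and the key observation --- that once $(q,\bar\q)$ is fixed the constraint interval $C_i$ depends only on $(q,\bar q_i)$ and not on $\s_{-i}$ --- is exactly the right starting point; the paper also gets existence the same way.

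For Step~3 the paper takes a different and more systematic route: it invokes Rosen's theorem on diagonally strictly convex games (\cite{rosen65}, Theorem~2). In the VWAP case it verifies that the aggregated cost function $\hat f$ satisfies the hypotheses of \cite{bichuch2019optimization}, Theorem~3.2 (this uses $f$ convex from Assumption~\ref{ass:idf} to get $\hat f''\ge 0$). In the LOB case the paper computes the Jacobian $\Gamma(\s;{\boldsymbol\rho})$ of the pseudo-gradient and shows that $\Gamma+\Gamma^\T$ decomposes as a strictly positive diagonal matrix plus a sum of positive semidefinite rank-one blocks, where positive semidefiniteness of the off-diagonal pieces is exactly the nondecreasing property of $f'$ from Assumption~\ref{ass:idf}. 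This is the step that consumes the convexity hypothesis on $f$, and it gives uniqueness in one stroke for any constraint set, so the lower bound $\ell_i$ causes no extra work.

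Your VWAP reduction to a scalar fixed-point in $\hat s$ is a reasonable alternative and can be made to work, but as written it is incomplete: you would need to show that $\hat s\mapsto \sum_i \mathrm{proj}_{C_i}\bigl(s^0(\hat s)\bigr)$ crosses the diagonal at most once, and the map $s^0(\hat s)=\bigl(1-(1+r)\hat f(\hat s)\bigr)/\bigl((1+r)\hat f'(\hat s)\bigr)$ is not monotone without further argument.

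Your LOB argument has a genuine gap. The ``largest coordinate discrepancy'' comparison requires, at the extremal index $k$, a relation between $\d_{s_k}\Pi_k(s_k^1;\s_{-k}^1)$ and $\d_{s_k}\Pi_k(s_k^1;\s_{-k}^2)$ that goes the right way; monotonicity of $\bar f_k$ in $\s_{-k}$ only helps if $\s_{-k}^1$ and $\s_{-k}^2$ are ordered, which they need not be. Because the LOB marginal $\d_{s_i}\Pi_i$ depends on $\s_{-i}$ only through the banks with $\bar s_j<s_i$, the cross-effects are neither uniformly substitutes nor complements, and a coordinate-wise extremal argument does not close. This is precisely what the Rosen positive-definiteness computation handles: it bounds all cross-partials simultaneously through the decomposition $\Gamma+\Gamma^\T=A(\s)+\sum_j B_j(\s)$, with $B_j(\s)\succeq 0$ following from $f'$ nondecreasing. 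I would recommend you adopt that route for LOB rather than try to salvage the direct comparison.
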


From Example \ref{ex:counterexample} it is clear that uniqueness of the equilibrium does not hold without further assumptions. 
However, we show in Theorem~\ref{thm:tarski} that the set of all fixed point prices $(q^*,\bar\q^*)$ in the Nash equilibrium of \eqref{eq:s*q} is a lattice under a VWAP pricing scheme.
In contrast, we show in Theorem~\ref{thm:maximal} that the set of all fixed point liquidations $\s^*$ in the Nash equilibrium of \eqref{eq:s*} is a lattice under a LOB pricing scheme.
We wish to stress that while the uniqueness of Nash equilibrium is a very important question, arguably the more desired property is to be the ``best'' Nash equilibrium, i.e.\ the largest clearing payment vector of \cite{EN01}. For this result to hold, it is imperative to know that they have a lattice structure, which is exactly what is shown in Theorems~\ref{thm:tarski} and \ref{thm:maximal}. This result becomes even more important since there are no additional assumptions required beyond those already imposed for existence in Theorem \ref{thm:exist}. 
%but we can show, as done in Theorem \ref{thm:tarski}, that the set of all fixed points prices $(q^*,\bar\q^*)$ in the Nash equilibrium of \eqref{eq:s*q} is a lattice under a VWAP pricing scheme. 
%Notably, as demonstrated below in Example~\ref{ex:counter-tarski}, \blue{though} the LOB pricing scheme does \emph{not} satisfy the typical conditions for this result\blue{, the LOB pricing scheme \emph{does} permit a greatest clearing haircut and set of clearing prices}. 
The proof of these theorems are presented in Appendices \ref{sec:proof2} and \ref{sec:proof2lob}.
\begin{theorem}\label{thm:tarski}
Under the VWAP structure and Assumptions~\ref{ass:bar-f-g}\eqref{ass:bar-f-g-3}-\eqref{ass:bar-f-g-4} and~\ref{ass:idf}, the set of clearing haircuts and prices is a lattice; in particular, there exists a greatest and least clearing haircut and set of clearing prices: 
$(q^\uparrow,\bar q_1^\uparrow, ..., \bar q_n^\uparrow)\ge (q^\downarrow,\bar q_1^\downarrow, ..., \bar q_n^\downarrow)$.
\end{theorem}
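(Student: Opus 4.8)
The plan is to use the VWAP structure to collapse the clearing problem to a one-dimensional fixed-point problem and then read the lattice structure off it. Under VWAP all banks share the common price $\bar f_1(\s)=\dots=\bar f_n(\s)=\hat f\big(\sum_{j=1}^n s_j\big)$ with $\hat f(t):=\tfrac1t\int_0^t f(\sigma)\,d\sigma$ for $t>0$ and $\hat f(0):=1$; write $\bar M:=\sum_{j=1}^n a_j$. By Assumption~\ref{ass:bar-f-g}\eqref{ass:bar-f-g-3} (applied to any $\s\in\D$ with $\sum_j s_j=t$) we have $g(t)<\hat f(t)$ for every $t\in[0,\bar M]$, so $\psi(t):=(g(t),\hat f(t),\dots,\hat f(t))\in\Qh$ and Proposition~\ref{prop:dsc} supplies the unique equilibrium liquidations $\bar\s(\psi(t))$. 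I would then define
\[
\Phi:[0,\bar M]\to[0,\bar M],\qquad \Phi(t):=\sum_{j=1}^n \bar s_j\big(\psi(t)\big),
\]
which is $[0,\bar M]$-valued since $\bar s_j\in[0,a_j]$. Using~\eqref{eq:q-fixed}, the uniqueness in Proposition~\ref{prop:dsc}, Proposition~\ref{prop:nash-equiv}, and the injectivity of the strictly decreasing $g$, one verifies that $t\in\FIX(\Phi)$ if and only if $(\psi(t),\bar\s(\psi(t)))$ is a clearing solution; hence $\psi$ restricts to a bijection from $\FIX(\Phi)$ onto the set $\mathcal C$ of clearing haircuts and prices, and since $g$ is strictly decreasing while $g,\hat f$ are non-increasing, this restriction is an order-anti-isomorphism onto $\mathcal C$.

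\textbf{Finishing via Tarski (or compactness).} Granting that $\Phi$ is non-decreasing, Tarski's fixed-point theorem on the complete lattice $[0,\bar M]$ gives that $\FIX(\Phi)$ is a nonempty complete lattice, hence has a greatest element $\hat s^{\uparrow}$ and a least element $\hat s^{\downarrow}$; transporting through the order-anti-isomorphism $\psi$ then shows $\mathcal C$ is a complete lattice with greatest element $(q^{\uparrow},\bar q_1^{\uparrow},\dots,\bar q_n^{\uparrow}):=\psi(\hat s^{\downarrow})$ and least element $(q^{\downarrow},\bar q_1^{\downarrow},\dots,\bar q_n^{\downarrow}):=\psi(\hat s^{\uparrow})$, and $(q^{\uparrow},\bar q_1^{\uparrow},\dots,\bar q_n^{\uparrow})\ge(q^{\downarrow},\bar q_1^{\downarrow},\dots,\bar q_n^{\downarrow})$ because $\hat s^{\downarrow}\le\hat s^{\uparrow}$ and $g,\hat f$ are non-increasing. (Nonemptiness is in any case immediate from Theorem~\ref{thm:exist}, via the fixed point $t=\sum_j s_j^{**}$. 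In fact monotonicity is not strictly needed: if $\Phi$ is merely continuous then $\FIX(\Phi)=\{t\in[0,\bar M]:\Phi(t)=t\}$ is a nonempty compact subset of $\R$, which is automatically a complete lattice in its induced order since suprema and infima of its subsets exist in $\R$ and, by closedness, lie in the set — so the same conclusion follows.)

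\textbf{The main obstacle.} The real work is in the two facts used above about the map $t\mapsto\bar\s(\psi(t))$. Continuity of the unique Nash equilibrium of~\eqref{eq:s*q} in the fixed prices reduces (as $g,\hat f$ are continuous) to continuity of $(q,\bar\q)\mapsto\bar\s(q,\bar\q)$ on $\Qh$: the best-response correspondence is upper hemicontinuous with nonempty compact convex values (Berge's maximum theorem, exactly as in the proof of Theorem~\ref{thm:exist}, together with the forced response $s_i^*=a_i$ on $\{h_i>a_i\bar q_i\}$), the equilibrium correspondence of such a game on a compact strategy space is therefore upper hemicontinuous in the parameters, and Proposition~\ref{prop:dsc} makes it single-valued, hence continuous — the one thing to check being that the patching across each solvency threshold $h_i=a_i\bar q_i$ is continuous, which holds because the feasible interval there collapses to $\{a_i\}$. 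The harder point, and the one needed if one wants the Tarski route suggested by the result's name (and which parallels the LOB statement of Theorem~\ref{thm:maximal}), is monotonicity of $\Phi$: raising the fictitious prices shifts each bank's feasible interval $\big[\tfrac{h_i-a_iq}{\bar q_i-q}\vee 0,\ \tfrac{h_i}{\bar q_i}\wedge a_i\big]$ in the direction relevant to liquidations while the objective in~\eqref{eq:s*q} is price-independent, but this does not transfer to the equilibrium by a generic comparative-statics argument, since the bank subgame at fixed prices need not be supermodular — a larger liquidation by one bank lowers the common VWAP and can move another bank's unconstrained optimum either way. A proof that $\Phi$ is monotone must therefore be extracted from the explicit VWAP structure (in particular from the fact that, at fixed prices, all interior banks liquidate the identical amount $\hat s\,\tfrac{\hat f(\hat s)-1/(1+r)}{\hat f(\hat s)-f(\hat s)}$, reducing the subgame to a scalar consistency equation) together with the regularity imposed in Assumption~\ref{ass:idf}.
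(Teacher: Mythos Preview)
Your scalar reduction via $\Phi(t)=\sum_j\bar s_j(\psi(t))$ is exactly the structure the paper exploits, and your Tarski route is the one the paper takes --- but where you flag monotonicity as ``the main obstacle'' and leave it open, the paper actually carries it out. It shows $\sum_i\partial_{\bar q}\bar s_i\le 0$ and $\sum_i\partial_q\bar s_i\le 0$ by case analysis over the four regimes $\{a_i\}$, $\{h_i/\bar q\}$, $\{(h_i-a_iq)/(\bar q-q)\}$, $\{s_i^0\}$; the delicate case is the interior set $I_0$, where the coupled implicit system for $\partial_{\bar q}\bar s_{I_0}$ is inverted and a Sherman--Morrison identity yields the aggregate inequality $\mathbf{1}^\top\bigl(\diag(\mathbf{1}-\vec c)+\vec c\,\mathbf{1}^\top\bigr)^{-1}\vec c\le 1$, which is exactly what is needed to show that the \emph{total} liquidations move the right way even though individual $s_i^0$ may not (your observation that $(s_i^0)'\in(-1,0]$ is the starting point here).

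Your compactness route is a genuine and valid shortcut the paper does not take. It works precisely because under VWAP all clearing price vectors lie on the one-dimensional curve $t\mapsto\psi(t)$, so $\mathcal{C}$ is automatically a chain in $\R^{n+1}$; once $\Phi$ is continuous --- and your Berge/closed-graph/single-valuedness sketch is sound, including the patching at the solvency threshold --- $\FIX(\Phi)$ is a nonempty compact subset of $\R$, hence a complete lattice, and the order-anti-isomorphism $\psi$ transports this to $\mathcal{C}$ with no derivative computation whatsoever. The trade-off is that the paper's monotonicity proof establishes something strictly stronger than the theorem as stated: it shows the map $(q,\bar q)\mapsto(g(\sum_i\bar s_i),\hat f(\sum_i\bar s_i))$ is order-preserving on all of $\Qh$, and this is what justifies the convergence of the Picard iteration from $(q^0,\bar q^0)=\mathbf{1}$ to the maximal clearing price used in the numerics of Section~\ref{sec:cs}.
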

\begin{proof}[Sketch of proof]
Taking advantage of Proposition~\ref{prop:dsc}, we find that the sum $\sum_{i = 1}^n \bar s_i$ is monotonic in $(q,\bar q_1,...,\bar q_n)$.  Therefore we apply Tarski's fixed point theorem.  The details are provided in Appendix \ref{sec:proof2}.
\end{proof}

\begin{theorem}\label{thm:maximal}
Under the LOB structure and Assumptions~\ref{ass:bar-f-g}\eqref{ass:bar-f-g-3}-\eqref{ass:bar-f-g-4} and~\ref{ass:idf}, there exists a greatest and least clearing haircut and set of clearing prices: 
$(q^\uparrow,\bar q_1^\uparrow, ..., \bar q_n^\uparrow)\ge (q^\downarrow,\bar q_1^\downarrow, ..., \bar q_n^\downarrow)$.
Furthermore, the set of clearing liquidations $\{\bar\s(q,\bar\q) \; | \; (q,\bar\q) \text{ is equilibrium}\}$ is a lattice.
\end{theorem}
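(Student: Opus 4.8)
The plan is to follow the route of Theorem~\ref{thm:tarski}: realize the clearing problem as a fixed point of a monotone self-map on the space of prices, invoke Tarski, and then transport the resulting lattice back to the liquidation side. The new difficulty in the LOB case is that one must control the entire equilibrium liquidation vector, not merely its sum.

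\textit{Step 1 (the clearing map).} By Proposition~\ref{prop:dsc}, for $(q,\bar\q)\in\Qh$ the subgame~\eqref{eq:s*q} has a unique Nash equilibrium $\bar\s(q,\bar\q)\in\D$, so I would set $\Phi(q,\bar\q):=\bigl(g(\sum_j\bar s_j(q,\bar\q)),\,\bar f_1(\bar\s(q,\bar\q)),\dots,\bar f_n(\bar\s(q,\bar\q))\bigr)$. Since $\min_i\bar f_i(\s)-g(\sum_js_j)$ is continuous and, by Assumption~\ref{ass:bar-f-g}\eqref{ass:bar-f-g-3}, strictly positive on the compact set $\D$, it is bounded below by some $\delta>0$; consequently $\Phi$ maps $\Qh$ into a compact sublattice $K\subseteq\Qh$ of $\R^{n+1}$ (e.g.\ $K=\{(q,\bar\q):g(\sum_ja_j)\le q\le 1-\delta,\ f(\sum_ja_j)\le\bar q_i\le 1,\ q+\delta\le\bar q_i\ \forall i\}$), which is a complete lattice and contains every clearing price vector. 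By Proposition~\ref{prop:nash-equiv} and~\eqref{eq:q-fixed}, the clearing haircuts and prices are exactly the fixed points of $\Phi|_K$, and the clearing liquidations are exactly $\{\bar\s(q,\bar\q):\Phi(q,\bar\q)=(q,\bar\q)\}$.

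\textit{Step 2 (the key lemma, and the main obstacle).} Everything rests on showing that $\bar\s(q,\bar\q)$ is \emph{non-increasing} in $(q,\bar\q)$ componentwise. For fixed $\s_{-i}$ the objective in~\eqref{eq:s*q} is convex in $s_i$ (cf.\ the Remark) with unconstrained minimizer $\hat s_i(\s_{-i})$ that does not depend on $(q,\bar q_i)$, so bank $i$'s best response is the projection of $\hat s_i(\s_{-i})$ onto the feasible interval $[L_i(q,\bar q_i),U_i(\bar q_i)]$ (with $s_i^\ast=a_i$ when $h_i>a_i\bar q_i$); a short computation shows both endpoints $L_i=\tfrac{h_i-a_iq}{\bar q_i-q}$ and $U_i=\min\{a_i,h_i/\bar q_i\}$ are non-increasing in $(q,\bar q_i)$ on the solvency region, so each bank's best response is non-increasing in $(q,\bar q_i)$ for fixed $\s_{-i}$, i.e.\ raising $(q,\bar\q)$ pushes the joint best-response map $T_{(q,\bar\q)}:\D\to\D$ down pointwise. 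The obstacle is that the LOB subgame is one of strategic \emph{substitutes}: from $\partial_{s_i}(s_i\bar f_i(\s))=f\bigl(s_i+\sum_{j\ne i}\min\{s_i,s_j\}\bigr)$ one reads off that $\hat s_i(\s_{-i})$ is non-increasing in $\s_{-i}$, so $T_{(q,\bar\q)}$ is order-\emph{reversing} and the textbook monotone-comparison-of-fixed-points argument does not apply verbatim. I expect this to be the crux: to obtain $\bar\s(q^1,\bar\q^1)\ge\bar\s(q^2,\bar\q^2)$ whenever $(q^1,\bar\q^1)\le(q^2,\bar\q^2)$ I would exploit the explicit ``water-filling'' description of the equilibrium (at any interior optimum each active bank satisfies $\bar s_i+\sum_{j\ne i}\min\{\bar s_i,\bar s_j\}=\beta$ with the \emph{single} constant $\beta=f^{-1}(1/(1+r))$), together with the uniqueness and the contraction estimate underlying Proposition~\ref{prop:dsc}, to compare the two equilibria directly coordinate by coordinate.

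\textit{Step 3 (Tarski, then transport).} Granting Step 2, $\sum_j\bar s_j(q,\bar\q)$ is non-increasing, so $g(\sum_j\bar s_j(q,\bar\q))$ is non-decreasing ($g$ strictly decreasing, Assumption~\ref{ass:bar-f-g}\eqref{ass:bar-f-g-3}), and each $\bar f_i(\bar\s(q,\bar\q))$ is non-decreasing since $\bar f_i$ is non-increasing in every argument (see the Remark for LOB) and $\bar\s$ is non-increasing; hence $\Phi:K\to K$ is monotone on the complete lattice $K$, and Tarski's fixed point theorem yields that the set of clearing haircuts and prices is a nonempty complete lattice, in particular with a greatest element $(q^\uparrow,\bar q_1^\uparrow,\dots,\bar q_n^\uparrow)$ and a least element $(q^\downarrow,\bar q_1^\downarrow,\dots,\bar q_n^\downarrow)$. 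For the final assertion, let $L$ be this lattice of clearing prices and $\mathcal S$ the set of clearing liquidations. The map $\iota:(q,\bar\q)\mapsto\bar\s(q,\bar\q)$ is onto $\mathcal S$; it is injective because $(q,\bar\q)$ is recovered from $\bar\s(q,\bar\q)$ through $q=g(\sum_j\bar s_j)$ and $\bar q_i=\bar f_i(\bar\s)$; it is order-reversing by Step 2; and, by the same two relations together with the monotonicity of $g$ and of the $\bar f_i$, $\bar\s(q^1,\bar\q^1)\le\bar\s(q^2,\bar\q^2)$ forces $(q^1,\bar\q^1)\ge(q^2,\bar\q^2)$, i.e.\ $\iota$ is order-reflecting. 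Thus $\iota$ is an order anti-isomorphism from the complete lattice $L$ onto $\mathcal S$, so $\mathcal S$ is a complete lattice — in particular a lattice — with greatest element $\bar\s(q^\downarrow,\bar\q^\downarrow)$ and least element $\bar\s(q^\uparrow,\bar\q^\uparrow)$.
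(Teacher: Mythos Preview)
Your Step~2 claim --- that $\bar\s(q,\bar\q)$ is \emph{componentwise} non-increasing in $(q,\bar\q)$ --- is false in the LOB setting, and the whole architecture collapses with it. Take $n=2$, linear $f$, and parameters so that at $(q,\bar q_1,\bar q_2)$ bank~1 sits at its upper constraint $\bar s_1=h_1/\bar q_1$ while bank~2 is interior. The LOB first-order condition for bank~2 (with $\bar s_1<\bar s_2$) reads $f(\bar s_1+\bar s_2)=\tfrac{1}{1+r}$, i.e.\ $\bar s_1+\bar s_2=\beta:=f^{-1}(\tfrac{1}{1+r})$. Now raise $\bar q_1$ slightly: $\bar s_1=h_1/\bar q_1$ drops, and to keep the sum equal to $\beta$, $\bar s_2$ \emph{rises}. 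So $\bar\s$ is not antitone. Worse, $\Phi$ itself is not monotone: along this deformation (sum constant) a direct computation for linear $f$ gives
\[
\frac{d}{d\bar s_1}\bar f_2(\bar\s)=\frac{\bar f_2(\bar\s)-f(2\bar s_1)}{\bar s_2}
\quad\text{with}\quad \operatorname{sign}\bigl(\bar f_2-f(2\bar s_1)\bigr)=\operatorname{sign}\!\left(\bar s_1^2-\tfrac12(\beta-2\bar s_1)^2\right),
\]
which is strictly positive whenever $\bar s_1>\beta/(2+\sqrt2)$. Hence for, say, $\bar s_1=0.4\beta$ one gets $\bar\Phi_2=\bar f_2(\bar\s)$ strictly \emph{decreasing} as $\bar q_1$ increases. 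Tarski on the price lattice therefore does not go through, and your ``water-filling'' heuristic cannot rescue it because the failure is exactly in the regime you describe (one bank upper-constrained, one interior).

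The paper avoids this by abandoning the price-space map altogether and working directly in liquidation space. The key idea you are missing is to first strip out the haircut-induced lower bound: the \emph{relaxed} game (only the upper constraint $s_i\bar f_i\le h_i$) has a \emph{unique} LOB equilibrium $\hat\s^*$, provable sequentially from the order structure of Proposition~\ref{prop:lob-order}. One then shows that every equilibrium of the full problem~\eqref{eq:s*} is a fixed point of
\[
L_i(\s)=\inf\Bigl\{s_i\in[\hat s_i^*,a_i]:\ s_i\bar f_i(s_i,\s_{-i})+(a_i-s_i)g\bigl(s_i+\textstyle\sum_{j\ne i}s_j\bigr)\ge h_i\wedge a_i\bar f_i(a_i,\s_{-i})\Bigr\},
\]
and $L:[\hat\s^*,\a]\to[\hat\s^*,\a]$ is \emph{non-decreasing} (here the monotonicity is driven by the lower-bound/insolvency mechanism, since $\bar f_i,g$ are non-increasing in $\s_{-i}$). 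Tarski on this box yields the lattice of clearing liquidations; the greatest and least clearing prices then follow. The point is that the strategic-substitutes obstruction you identified lives entirely in the relaxed problem, where it is neutralised by uniqueness, while the residual map $L$ encodes only the collateral constraint and is genuinely monotone.
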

\begin{proof}[Sketch of proof]
We first show that the relaxed problem, without the lower bound constraint so that there are no forced liquidations, admits a unique clearing solution.  This is used to demonstrate that the clearing liquidations for~\eqref{eq:s*} are monotonic  and we apply Tarski's fixed point theorem.  The details are provided in Appendix \ref{sec:proof2lob}.
\end{proof}
\begin{remark}
Intriguingly, the VWAP structure admits a lattice of clearing prices whereas the LOB structure admits a lattice of clearing liquidations.  Importantly, it does not necessarily follow that the lattice of prices implies a lattice of liquidations or vice versa.
\end{remark}

Finally, we introduce additional assumptions and establish uniqueness of the equilibrium in Theorem \ref{thm:unique} below, the proof of which is delayed until the Appendix \ref{sec:proof3}.
For such a result we introduce a simplified notation, let $\d_x := \frac{\d}{\d x}$ denote the partial derivative operator with respect to some variable $x$.

\begin{definition}
We will say that bank $i\in\{1, ...,n\}$ is fundamentally solvent if it is able to cover its shortfall in any case, that is if
$%\begin{align}
h_i\le a_i\bar f_i(\a),
$ %\end{align}
where $\a=(a_1, ..., a_n)^\T.$
\end{definition}

\begin{remark}\label{remark:solvency}
If bank $i$ is fundamentally solvent then there is a \emph{feasible} solution to the maximization problem \eqref{eq:s*q}, provided $(q,\bar \q) = (g(\sum_{i = 1}^n s_i) , \bar \f(\s))$ for some $\s \in \D$, since the feasible region is non-empty. Indeed,
\begin{enumerate}
\item $\frac{h_i}{\bar q_i} \le a_i$ if and only if $h_i \le a_i \bar q_i$.
\item $\frac{h_i - a_i q}{\bar q_i - q} \le a_i$ if and only if $h_i \leq a_i \bar q_i$. % \; \Leftrightarrow \; h_i - a_i q \le a_i \bar q_i - a_i q \; \Leftrightarrow \; h_i \le a_i \bar q_i$.
\item $\frac{h_i - a_i q}{\bar q_i - q} \le \frac{h_i}{\bar q_i}$ if and only if $h_i \leq a_i \bar q_i$. % \; \Leftrightarrow \; h_i \bar q_i - a_i q \bar q \le h_i \bar q_i - h_i q \; \Leftrightarrow \; h_i q \le a_i q \bar q_i \; \Leftrightarrow \; h_i \le a_i \bar q_i$.
\end{enumerate}
\end{remark}

\begin{theorem}\label{thm:unique}
Assume all banks are fundamentally solvent. 
Under VWAP or LOB structure and Assumptions~\ref{ass:bar-f-g}\eqref{ass:bar-f-g-3}-\eqref{ass:bar-f-g-4} and~\ref{ass:idf},  if additionally,  $-c M \(  c_1 f'(0) 
%\min_{j,k} \d_{s_k} \bar f_j({\bf{0}}_n)
 \wedge g'(0) \) < \min\limits_{j} \min\limits_{\s \in \D} \left(\bar f_j(\s)-g(\sum_{i = 1}^n s_i)\right)$ with $c = 3, c_1=\frac{1}{2}$ and $c = n, c_1 = \frac{n}{2}$ in case of VWAP and LOB, respectively, then there exists a unique clearing haircut and set of actualized prices $(q^*,\bar \q^*)$.  
\end{theorem}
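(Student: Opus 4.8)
The plan is to reduce the clearing problem to a fixed-point equation in the prices alone and then show the resulting self-map is a contraction. Set $\delta\define\min_{j}\min_{\s\in\D}\bigl(\bar f_j(\s)-g(\sum_{i=1}^n s_i)\bigr)>0$ (positivity by Assumption~\ref{ass:bar-f-g}\eqref{ass:bar-f-g-3}) and let
\begin{equation*}
\Qh_*\define\{(q,\bar\q)\in\Qh\ :\ q\ge g(M),\ \bar f_i(\a)\le\bar q_i\ \text{and}\ \bar q_i-q\ge\delta\ \text{for all }i\},
\end{equation*}
a nonempty closed (hence complete) subset of $\R^{n+1}$. By Proposition~\ref{prop:dsc}, for each $(q,\bar\q)\in\Qh$ there is a unique Nash equilibrium $\bar\s(q,\bar\q)\in\D$ of~\eqref{eq:s*q}; since every bank is fundamentally solvent we have $h_i\le a_i\bar f_i(\a)\le a_i\bar q_i$ on $\Qh_*$, so by Remark~\ref{remark:solvency} the feasible intervals are nonempty and no bank is ever forced to the insolvency value $s_i^*=a_i$. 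Consequently clearing solutions are exactly the fixed points of
\begin{equation*}
T(q,\bar\q)\define\Bigl(g\bigl(\sum_{i=1}^n\bar s_i(q,\bar\q)\bigr),\ \bar f_1(\bar\s(q,\bar\q)),\ \dots,\ \bar f_n(\bar\s(q,\bar\q))\Bigr),
\end{equation*}
and a direct check (using $\bar\s\le\a$, $\sum_i\bar s_i\le M$, monotonicity of $\bar f_i$, and the definition of $\delta$) shows $T(\Qh)\subseteq\Qh_*$. It therefore suffices to prove that $T$ is a contraction on $\Qh_*$ and invoke Banach's fixed-point theorem.

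Next I would pin down the structure of the best responses. After dropping the constant $rh_i$, the objective in~\eqref{eq:s*q} equals $s_i-(1+r)s_i\bar f_i(s_i,\s_{-i})$, which is convex in $s_i$ by Assumption~\ref{ass:bar-f-g}(2); hence $s_i^*$ is the Euclidean projection of the parameter-free unconstrained minimizer $\hat s_i(\s_{-i})\in[0,a_i]$ onto $[L_i(q,\bar q_i),U_i(q,\bar q_i)]$, where $L_i=\max\{0,(h_i-a_iq)/(\bar q_i-q)\}$ and $U_i=\min\{a_i,h_i/\bar q_i\}$, and $L_i\le U_i$ on $\Qh_*$. Thus the entire dependence of $\bar\s$ on the parameters is funneled through the interval endpoints (which depend only on $(q,\bar q_i)$) and through the equilibrium feedback inside $\hat s_i$. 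From the identities $\partial_qL_i=(L_i-a_i)/(\bar q_i-q)$, $\partial_{\bar q_i}L_i=-L_i/(\bar q_i-q)$ and $\partial_{\bar q_i}U_i\in\{0,-h_i/\bar q_i^2\}$, together with $0\le L_i\le a_i$, $h_i\le a_i\bar q_i$ and $\bar q_i-q\ge\delta$ (hence $\bar q_i>\delta$), one gets $|\partial_qL_i|,|\partial_{\bar q_i}L_i|,|\partial_{\bar q_i}U_i|\le a_i/\delta$; since projection onto an interval is $1$-Lipschitz in the projected point and in each endpoint, and $\sum_i a_i\le M$, this controls the price-sensitivity of $\bar\s$, up to the feedback term, by $M/\delta$.

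The remaining work is to absorb the feedback and compose. Using Assumption~\ref{ass:idf} — in particular that $f'$ is nondecreasing, which pins down the sign of the second-order terms governing $\hat s_i$ — one shows the inner best-response map $\s\mapsto(\mathrm{proj}_{[L_i,U_i]}\hat s_i(\s_{-i}))_i$ is a strict contraction in $\s$, uniformly over $\Qh_*$; this is essentially the estimate underlying Proposition~\ref{prop:dsc}, and it lets one solve $\bar s_i=\mathrm{proj}_{[L_i,U_i]}(\hat s_i(\bar\s_{-i}))$ for $\bar\s$ with a Lipschitz constant in $(q,\bar\q)$ of the form $(M/\delta)\times(\text{dimensional factor})$. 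Finally, $g$ is $|g'(0)|$-Lipschitz (convexity of $g$ gives $|g'|\le|g'(0)|$), and for $\bar f_i$ one bounds $|\partial_{s_k}\bar f_i|\le c_1|f'(0)|$ with $c_1=\tfrac12$ in the VWAP case (from $\partial_{s_k}\bar f_i=\tfrac1{\hat s}(\bar f_i(\hat s)-f(\hat s))$ and a second-order Taylor estimate that uses $f'$ nondecreasing) and $c_1=\tfrac n2$ in the LOB case (the order-statistic aggregation introduces the extra factor $n$). Putting these together yields $\mathrm{Lip}(T)\le\frac{cM}{\delta}\bigl(c_1|f'(0)|\vee|g'(0)|\bigr)$ with $(c,c_1)=(3,\tfrac12)$ for VWAP and $(c,c_1)=(n,\tfrac n2)$ for LOB, and this is strictly less than $1$ precisely when $-cM\bigl(c_1f'(0)\wedge g'(0)\bigr)<\delta$, i.e.\ under the stated hypothesis. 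Banach's theorem then produces a unique fixed point $(q^*,\bar\q^*)$, and by Proposition~\ref{prop:dsc} this also determines the clearing liquidations uniquely.

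The main obstacle is the third step: obtaining a \emph{uniform} strict-contraction estimate for the inner best-response map in $\s$ and, above all, a clean bound on the resulting feedback amplification that lands exactly at the constants $(c,c_1)$ in the statement. The delicacy is that $\hat s_i$ is characterized by a first-order condition whose associated second derivative $(1+r)\bigl|\partial^2_{s_i}(s_i\bar f_i)\bigr|$ degenerates as $s_i\to0$, so one must either show that regime is excluded by the feasibility constraints $[L_i,U_i]$ (it is, for instance, when $r=0$, since then the lower bound is always binding and there is no feedback at all) or extract enough sign information from Assumption~\ref{ass:idf} to keep the amplification bounded; for LOB this must additionally be propagated through the order statistics, which is the source of the $n$-dependence there. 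The other ingredients — the reduction, the projection characterization of best responses, the endpoint identities, the Lipschitz bounds on $g$ and on $\bar f_i$, and the verification $T(\Qh)\subseteq\Qh_*$ — are routine.
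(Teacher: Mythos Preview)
Your overall strategy---reduce to a fixed point in the prices and show the map $T$ is a Banach contraction---is exactly what the paper does, and your endpoint bounds $|\partial_qL_i|,|\partial_{\bar q_i}L_i|,|\partial_{\bar q_i}U_i|\le a_i/\delta$ as well as the Lipschitz bounds on $g$ and $\bar f_i$ (in particular $\hat f'(0)=\tfrac12 f'(0)$ for VWAP) are correct and match the paper.

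The gap is precisely the step you yourself flag as the main obstacle, and the resolution you sketch does not work. You appeal to ``the estimate underlying Proposition~\ref{prop:dsc}'' to argue that the inner best-response map $\s\mapsto(\mathrm{proj}_{[L_i,U_i]}\hat s_i(\s_{-i}))_i$ is a uniform strict contraction in $\s$, and then plan to route the price-sensitivity through a factor $1/(1-L_{\mathrm{inner}})$. But Proposition~\ref{prop:dsc} is proved via Rosen's diagonal-strict-convexity criterion, which gives uniqueness without any quantitative contraction constant; and, as you correctly note, the relevant second derivative can degenerate, so $L_{\mathrm{inner}}$ can be arbitrarily close to $1$. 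There is no way to extract the clean constants $c=3$ (VWAP) or $c=n$ (LOB) from that route.

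The paper avoids bounding $\mathrm{Lip}(\bar\s)$ altogether. In the VWAP case both $g$ and $\hat f$ depend only on $\sum_i\bar s_i$, so it suffices to bound $\bigl|\sum_i\partial_q\bar s_i\bigr|$ and $\bigl|\sum_i\partial_{\bar q}\bar s_i\bigr|$. Writing out the implicit feedback from banks in $I_0$ yields a linear system of the form
\[
\partial\,\bar\s_{I_0}=-\bigl(\operatorname{diag}(\mathbf{1}-\vec c)+\vec c\,\mathbf{1}^\top\bigr)^{-1}\vec c\sum_{j\notin I_0}\partial\bar s_j,\qquad \vec c\in[0,1)^{|I_0|},
\]
and a Sherman--Morrison computation gives $\mathbf{1}^\top\bigl(\operatorname{diag}(\mathbf{1}-\vec c)+\vec c\,\mathbf{1}^\top\bigr)^{-1}\vec c\le 1$. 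Thus the feedback amplifies the \emph{sum} $\sum_i\partial\bar s_i$ by a factor at most $1$, even though it may amplify individual coordinates strongly. This is the missing ingredient that produces $|\sum_i\partial_{\bar q}\bar s_i|+|\sum_i\partial_q\bar s_i|\le 3M/\delta$ and hence $c=3$. The LOB case proceeds coordinate-wise using the explicit form $\partial_{s_j}s_i^0=-\ind_{\{\bar s_j<s_i^0\}}/(n-(k_i-1))$ of the first-order-condition sensitivity, which is where the factor $n$ enters.
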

\begin{proof}[Sketch of proof]
The proof follows from the Banach fixed point theorem and is presented in Appendix \ref{sec:proof3}.
\end{proof}

%\blue{\red{Move this solely to response to referee}
%\begin{corollary}\label{cor:unique}
%Assume $h_i \leq a_i g(\sum_{j \neq i} a_j)$ for every bank $i$.
%Under VWAP or LOB structure and \blue{Assumptions~\ref{ass:bar-f-g}\eqref{ass:bar-f-g-3}-\eqref{ass:bar-f-g-4} and}~\ref{ass:idf}, there exists a unique clearing haircut and set of actualized prices $(q^*,\bar \q^*)$.
%\end{corollary}
%}

\begin{remark}
At this point we wish to recall Example \ref{ex:counterexample} which highlights a case of non-uniqueness of the clearing solution.  In that two banks setting, neither bank is fundamentally solvent since, by construction, $a \bar f(2a) < h$.  This highlights the importance of the assumption that all banks are fundamentally solvent in Theorem \ref{thm:unique} %\blue{and Corollary~\ref{cor:unique} (guaranteed by $h_i \leq a_i g(\sum_{j \neq i} a_j)$ for every bank $i$)} 
for the uniqueness of the clearing prices.  
Of course, even though we do not have uniqueness of the clearing price, we do have monotonicity between the two proposed equilibria as implied by Theorem~\ref{thm:tarski}. 
\end{remark}

\begin{remark}\label{rem:r}
With the consideration of existence and uniqueness of the clearing solution, the sensitivity of the equilibrium liquidations and prices to the repo rate $r$ is of great interest.  This is studied mathematically in Appendix~\ref{sec:r}.  This problem is intimately related to the problem of studying lenders in the model under construction -- the lenders interact with the borrowers through the interest rate only. (Such a model with clearing interest rates so that the cumulative lent amount is equal to the borrowed sum is beyond the scope of this work.) Intuitively, we would expect as interest rates rise, more cash is being lent and, thus, is available for borrowing.  Therefore, by studying the sensitivity of the clearing solutions to the repo rate $r$, we can quantify the (first-order) impacts of modeling lenders on the clearing prices. 

Intuitively, we expect that as the repo rate rises, and borrowing becomes more expensive the liquidation of the illiquid asset increase. It also follows from here that, the higher the interest rate, the lower the terminal asset price. Alternatively, from  a regulator's perspective, if the goal is to limit the extent of the fire sales, it can be achieved by controlling the interest rates, as was done recently in September 2019, and was also used extensively during the 2008 financial crisis (see \cite{quinnstanding} and \cite{cecchetti2009crisis} respectively). We refer to the case studies in Section~\ref{sec:cs} for visualizations of this notion.
\end{remark}

\section{Comparative statics}\label{sec:cs}
Before considering specific examples, we will first introduce a consideration for the computation of the clearing prices $(q,\bar\q) = (g(\sum_{i = 1}^n \bar s_i(q,\bar\q)),\bar \f(\bar\s(q,\bar\q))).$ This approach will always converge to the maximal price in both the VWAP and the LOB settings due to Theorems ~\ref{thm:tarski} and \ref{thm:maximal}. 
Specifically, these are computed via Picard iterations beginning from $(q^0,\bar \q^0) := {\bf{1}}_{n+1}$.  However, $\bar\s(q,\bar q)$ will require consideration for computation itself due to its game theoretic construction.  As provided in Proposition~\ref{prop:dsc} these liquidations exist and are unique.  In fact, due to the construction of the problem as discussed in the proof of that proposition, we are able to apply the algorithm provided in~\cite{rosen65}.  This is summarized in Algorithm~2 of~\cite{bichuch2019optimization} for the VWAP setting.  We wish to note that in the LOB setting, the computation can be improved significantly via an iterative approach of determining the banks liquidating the fewest number of assets.
%\blue{Algorithm for $s^*(q,\bar q)$ for VWAP comes from \cite{rosen65}}
%
%\blue{Algorithm to compute $s^*(q,\bar q)$ for LOB:
%\begin{enumerate}
%\item Fix $I = \{1,2,...,n\}$ and $s^1 = a$.  Initialize $k = 1$.
%\item\label{eq:lob-alg} Find $[k] = \argmin_{i \in I} s_i^*(q,\bar q;s^k)$ and $s_j^{k+1} = \begin{cases} \min_{i \in I} s_i^*(q,\bar q;s^k) &\text{if } j = [k] \\ s_j^{k} &\text{else} \end{cases}$
%\item Iterate $k = k+1$ and reduce $I = I \backslash \{[k]\}$.
%\item Terminate if $I = \emptyset$, otherwise return to step~\eqref{eq:lob-alg}.
%\end{enumerate}
%}

In this section we will consider two primary case studies.  The first is a consideration of the VWAP and LOB structures to determine their relative ordering, i.e., is one better than the other.  This is important from a mechanism design perspective as different markets consider the closing price using different rule sets.  The second case study we will consider is an implementation of European banking data to determine the impacts of interest rates and haircut functions on the clearing prices.

\subsection{Mechanism design}\label{sec:mechanism}
In this first case study, we will investigate two financial settings in detail in order to show that some system constructions find that VWAP has more total liquidations with less system-wide use of the repo markets than LOB, while other constructions have the reverse ordering.  In particular, we will first consider a system of $n$ identical banks and second a specific system of $n = 2$ banks only.

\subsubsection{Symmetric case study}\label{sec:symmetric}
Consider a system of $n\ge2$ identical banks.  Each of these banks has shortfall $h > 0$ and assets $a > 0$.  The prevailing repo rate is provided by $r \in (0,\frac{1}{3})$.  For the purposes of this example, consider the order book density $f(s) = 1 - \alpha s$ and haircut function $g(s) = \frac{1}{2} - \alpha s$ for $\alpha \in \(\frac{2r}{(1+r)(n+1)a} , \frac{1}{2na}\)$; notably these constructions satisfy Assumptions~\ref{ass:bar-f-g}\eqref{ass:bar-f-g-3}-\eqref{ass:bar-f-g-4} and~\ref{ass:idf} and taken so as to construct an example in which firms have a choice of behavior.  Consider now our two market mechanisms: VWAP and LOB.
\begin{enumerate}
\item \textbf{VWAP:} By construction $\bar f_i(\s) := 1 - \frac{\alpha}{2}\sum_{j = 1}^n s_j$ for every bank $i$ in the VWAP construction.  Additionally, we take advantage of the symmetric setup to conclude that all banks should follow the same strategy, i.e., $\s^{VWAP} = s^{VWAP} {\bf{1}}_n$ for some singleton $s^{VWAP} \in [0,a]$ and $\bar \q^{VWAP} = \bar q^{VWAP} {\bf{1}}_n$ for some singleton $\bar q^{VWAP} \in [0,1]$.  Consider game~\eqref{eq:s*q} for fixed values $(q,\bar q)$ with $q < \bar q$:
\begin{align*}
&s_i^*(q,\bar q {\bf{1}}_n) = \argmin_{s_i \in [0,a]} \left\{\frac{\alpha}{2}(1+r)\left(\sum_{j \neq i} s_j^*(q,\bar q {\bf{1}}_n) + s_i\right)s_i + r(h - s_i) \; \left| \; s_i \in \left[\frac{h - a q}{\bar q - q} , \frac{h}{\bar q}\right]\right.\right\}\\
&\qquad= \argmin_{s_i \in [0,a]} \left\{\frac{\alpha}{2}(1+r) s_i^2 + \left[\frac{\alpha}{2}(1+r)(n-1)s^{VWAP}(q,\bar q) - r\right]s_i + rh \; \left| \; s_i \in \left[\frac{h - a q}{\bar q - q} , \frac{h}{\bar q}\right]\right.\right\}\\
&\qquad= \frac{h - aq}{\bar q - q} \vee \left[\frac{r}{(1+r)\alpha} - \frac{n-1}{2}s^{VWAP}(q,\bar q)\right] \wedge \frac{h}{\bar q},
\end{align*}
if $h < a \bar q$ (and $s^{VWAP}(q,\bar q) = a$ if $h \geq a \bar q$).  In particular, this provides a single fixed point problem to find $s^{VWAP}(q,\bar q)$, i.e.,
\begin{align*}
&s^{VWAP}(q,\bar q) = \frac{h - aq}{\bar q - q} \vee \left[\frac{r}{(1+r)\alpha} - \frac{n-1}{2}s^{VWAP}(q,\bar q)\right] \wedge \frac{h}{\bar q},\\
&\qquad \Rightarrow \; s^{VWAP}(q,\bar q) = \frac{h - aq}{\bar q - q} \vee \left[\frac{2 r}{\alpha (1+r)(n+1)}\right] \wedge \frac{h}{\bar q}
\end{align*}
if $h < a \bar q$.
We wish to note that the existence of $s^{VWAP}(q,\bar q)$ justifies our choice of $\s^{VWAP} = s^{VWAP}{\bf{1}}_n$ as, due to Proposition~\ref{prop:dsc}, $\s^{VWAP}$ is unique and thus must equal $s^{VWAP}{\bf{1}}_n$.
Finally, it remains to find the equilibrium prices $(q^{VWAP},\bar q^{VWAP})$:
\begin{align*}
q^{VWAP} &= \begin{cases} 
   -\frac{1}{2} + \sqrt{1 - 2\alpha n h} &\text{if } h \in \H_1^{VWAP},\\ %[0 , \frac{2r}{\alpha(1+r)(n+1)}(1 - \frac{rn}{(1+r)(1+n)})) \\
   \frac{1}{2} - \frac{2rn}{(1+r)(n+1)} &\text{if } h \in \H_2^{VWAP},\\ %[\frac{2r}{\alpha(1+r)(n+1)}(1 - \frac{rn}{(1+r)(1+n)}) , \frac{2r}{\alpha(1+r)(n+1)}(\frac{1}{2}+\frac{rn}{(1+r)(n+1)}) + a(\frac{1}{2} - \frac{2rn}{(1+r)(n+1)}))\\
   1 - \alpha n a - \frac{1}{2}\sqrt{1 + 8\alpha n (h - a) + 4 (\alpha n a)^2} &\text{if } h \in \H_3^{VWAP},\\ %[\frac{2r}{\alpha(1+r)(n+1)}(\frac{1}{2}+\frac{rn}{(1+r)(n+1)}) + a(\frac{1}{2} - \frac{2rn}{(1+r)(n+1)}) , a(1 - \frac{\alpha}{2}na))\\
   \frac{1}{2} - \alpha n a &\text{if } h \in \H_4^{VWAP}, \end{cases}\\ %\geq a (1 - \frac{\alpha}{2} n a) \end{cases}\\
\end{align*}
 \begin{align*}
 \bar q^{VWAP} &= \begin{cases}
   \frac{1 + \sqrt{1 - 2\alpha n h}}{2} &\text{if } h \in \H_1^{VWAP},\\ %
   1 - \frac{rn}{(1+r)(n+1)} &\text{if } h \in \H_2^{VWAP},\\ %
   \frac{5}{4} - \frac{\alpha n a}{2} - \frac{1}{4}\sqrt{1 + 8\alpha n (h - a) + 4 (\alpha n a)^2} &\text{if } h \in \H_3^{VWAP},\\ %
   1 - \frac{\alpha}{2} n a &\text{if } h \in \H_4^{VWAP}, \end{cases} %\geq 
\end{align*}
with borrowing/liquidation regions
\begin{align*}
\H_1^{VWAP} &= \left[0 \; , \; \frac{2r}{\alpha(1+r)(n+1)}\left(1 - \frac{rn}{(1+r)(1+n)}\right)\right) ,\\
\H_2^{VWAP} &= \left[\frac{2r}{\alpha(1+r)(n+1)}\left(1 - \frac{rn}{(1+r)(1+n)}\right) \; , \right. \\
    &\qquad\qquad \left. \frac{2r}{\alpha(1+r)(n+1)}\left(\frac{1}{2}+\frac{rn}{(1+r)(n+1)}\right) + a\left(\frac{1}{2} - \frac{2rn}{(1+r)(n+1)}\right)\right),\\
\H_3^{VWAP} &= \left[\frac{2r}{\alpha(1+r)(n+1)}\left(\frac{1}{2}+\frac{rn}{(1+r)(n+1)}\right) + a\left(\frac{1}{2} - \frac{2rn}{(1+r)(n+1)}\right) \; , \; a\left(1 - \frac{\alpha}{2}na\right)\right),\\ 
\H_4^{VWAP} &= \left[a \left(1 - \frac{\alpha}{2} n a\right) \; , \; \infty\right).
\end{align*}
We wish to note that all square roots are well defined on the intervals on which they are considered.  Additionally, $q^{VWAP}$ and $\bar q^{VWAP}$ are continuous in $h$; as such the closures of the bounding intervals can be chosen arbitrarily. 
Though this setting does not satisfy the uniqueness conditions of Theorem~\ref{thm:unique}, the simplicity of the symmetric system still admits a unique clearing solution. In this case, the uniqueness condition of Theorem \ref{thm:unique} assuming $M=na$ becomes $3na\alpha <\frac{1}{2}$ and may be violated. 
%\blue{In $\frac{h}{\bar q}$ case: square root exists if and only if $h \leq \frac{1}{2\alpha n}$; this is satisfied if $h \leq a$ (if $h > a$ then solution will be the defaulting scenario by construction and thus is not relevant)}\\
%\blue{In $\frac{h - aq}{\bar q - q}$ case: square root exists if and only if $h \geq \frac{a}{2} - \frac{(1-2\alpha n a)^2}{8 \alpha n}$}

\item \textbf{LOB:} By construction $\bar f_{[i]}(\s) := 1 - \frac{\alpha}{2s_{[i]}}\[\sum_{k = 1}^{i-1} s_{[k]}(2s_{[i]} - s_{[k]}) + (n - (i-1))s_{[i]}^2\]$ for every bank $[i]$ (i.e., the bank liquidating the $i^{th}$ most assets) in the LOB construction.  Additionally, we take advantage of the symmetric setup to conclude that all banks should follow the same strategy, i.e., $\s^{LOB} = s^{LOB} {\bf{1}}_n$ for some singleton $s^{LOB} \in [0,a]$ and $\bar \q^{LOB} = \bar q^{LOB} {\bf{1}}_n$ for some singleton $\bar q^{LOB} \in [0,1]$.  Consider game~\eqref{eq:s*q} for fixed values $(q,\bar q)$ with $q < \bar q$:
%\blue{This is abuse of notation, as for LOB $\bar s_i (q, \bar \q)$, should we say $\bar s_i (q, \bar q{\bf{1}}_n)$ in what follows?}
\begin{equation*}
\resizebox{.95\textwidth}{!}{$
\begin{aligned}[t]
s_i^*(q,\bar q {\bf{1}}_n) &= \argmin_{s_i \in [0,a]} \left\{\left.\begin{array}{l} \frac{\alpha}{2}(1+r)\[ns_i^2\ind_{\{s_i \leq s^{LOB}(q,\bar q)\}}\right. \\ \left. + ((n-1)s^{LOB}(q,\bar q)^2 + 2(n-1)s^{LOB}(q,\bar q)s_i + s_i^2)\ind_{\{s_i > s^{LOB}(q,\bar q)\}}\] \\ \qquad + r(h - s_i) \end{array} \; \right| \; s_i \in \left[\frac{h - a q}{\bar q - q} , \frac{h}{\bar q}\right] \right\}\\
&= \begin{cases}
    \frac{h - aq}{\bar q - q} \vee \left[\frac{r}{\alpha(1+r)n}\right] \wedge \frac{h}{\bar q} &\text{if } \frac{r}{\alpha(1+r)n} \leq s^{LOB}(q,\bar q),\\
    \frac{h - aq}{\bar q - q} \vee \left[\frac{r}{\alpha(1+r)} - (n-1)s^{LOB}(q,\bar q)\right] \wedge \frac{h}{\bar q} &\text{if } \frac{r}{\alpha(1+r)n} > s^{LOB}(q,\bar q) ,\end{cases}
\end{aligned}
$}
\end{equation*}
if $h < a \bar q$ (and $s^{LOB}(q,\bar q) = a$ if $h \geq a \bar q$).  In particular, this provides a single fixed point problem to find $s^{LOB}(q,\bar q)$, i.e., if $h < a \bar q$
\begin{align*}
&s^{LOB}(q,\bar q) = \begin{cases}
    \frac{h - aq}{\bar q - q} \vee \left[\frac{r}{\alpha(1+r)n}\right] \wedge \frac{h}{\bar q} &\text{if } \frac{r}{\alpha(1+r)n} \leq s^{LOB}(q,\bar q),\\
    \frac{h - aq}{\bar q - q} \vee \left[\frac{r}{\alpha(1+r)} - (n-1)s^{LOB}(q,\bar q)\right] \wedge \frac{h}{\bar q} &\text{if } \frac{r}{\alpha(1+r)n} > s^{LOB}(q,\bar q), \end{cases}\\
&\qquad \Rightarrow \; s^{LOB}(q,\bar q) = \frac{h - aq}{\bar q - q} \vee \left[\frac{r}{\alpha(1+r)n}\right] \wedge \frac{h}{\bar q}
\end{align*}
as both provided cases result in the same fixed point.
We wish to note that the existence of $s^{LOB}(q,\bar q)$ justifies our choice of $\s^{LOB} = s^{LOB}{\bf{1}}_n$ as, due to Proposition~\ref{prop:dsc}, $\s^{LOB}$ is unique and thus must equal $s^{LOB}{\bf{1}}_n$.
Finally, it remains to find the equilibrium prices $(q^{LOB},\bar q^{LOB})$:
\begin{align*}
q^{LOB} &= \begin{cases} 
   -\frac{1}{2} + \sqrt{1 - 2\alpha n h} &\text{if } h \in \H_1^{LOB},\\ %[0 , \frac{r}{\alpha(1+r)n}(1 - \frac{r}{2(1+r)})) \\
   \frac{1}{2} - \frac{r}{1+r} &\text{if } h \in \H_2^{LOB},\\ %[\frac{r}{\alpha(1+r)n}(1 - \frac{r}{2(1+r)}) , \frac{r}{2\alpha(1+r)n}(1 + \frac{r}{1+r}) + a(\frac{1}{2} - \frac{r}{1+r})) \\
   1 - \alpha n a - \frac{1}{2}\sqrt{1 + 8\alpha n (h - a) + 4 (\alpha n a)^2} &\text{if } h \in \H_3^{LOB},\\ %[\frac{r}{2\alpha(1+r)n}(1 + \frac{r}{1+r}) + a(\frac{1}{2} - \frac{r}{1+r}) , a(1-\frac{\alpha}{2}na)) \\
   \frac{1}{2} - \alpha n a &\text{if } h \in \H_4^{LOB},\end{cases}\\ %\geq a (1 - \frac{\alpha}{2} n a) \end{cases}\\
\bar q^{LOB} &= \begin{cases}
   \frac{1 + \sqrt{1 - 2\alpha n h}}{2} &\text{if } h \in \H_1^{LOB},\\ 
   1 - \frac{r}{2(1+r)} &\text{if } h \in \H_2^{LOB},\\ 
   \frac{5}{4} - \frac{\alpha n a}{2} - \frac{1}{4}\sqrt{1 + 8\alpha n (h - a) + 4 (\alpha n a)^2} &\text{if } h \in \H_3^{LOB},\\ 
   1 - \frac{\alpha}{2} n a &\text{if } h \in \H_4^{LOB},\end{cases} %\geq 
\end{align*}
with borrowing/liquidation regions
\begin{align*}
\H_1^{LOB} &= \left[0 \; , \; \frac{r}{\alpha(1+r)n}\left(1 - \frac{r}{2(1+r)}\right)\right), \\
\H_2^{LOB} &= \left[\frac{r}{\alpha(1+r)n}\left(1 - \frac{r}{2(1+r)}\right) \; , \; \frac{r}{2\alpha(1+r)n}\left(1 + \frac{r}{1+r}\right) + a\left(\frac{1}{2} - \frac{r}{1+r}\right)\right) ,\\
\H_3^{LOB} &= \left[\frac{r}{2\alpha(1+r)n}\left(1 + \frac{r}{1+r}\right) + a\left(\frac{1}{2} - \frac{r}{1+r}\right) \; , \; a\left(1-\frac{\alpha}{2}na\right)\right) ,\\
\H_4^{LOB} &= \left[a \left(1 - \frac{\alpha}{2} n a\right) \; , \; \infty\right).
\end{align*}
We wish to note, as with the VWAP case above, that all square roots are well defined on the intervals on which they are considered.  Additionally, $q^{LOB}$ and $\bar q^{LOB}$ are continuous in $h$; as such the closures of the bounding intervals can be chosen arbitrarily. 
Though this setting does not satisfy the uniqueness conditions of Theorem~\ref{thm:unique}, the simplicity of the symmetric system still admits a unique clearing solution.  
In this case, the uniqueness condition of Theorem \ref{thm:unique} assuming $M=na$, and using the fact that $n\ge2$, becomes $n^3a\alpha <1$ and may be violated. 
%\blue{In $\frac{h}{\bar q}$ case: square root exists if and only if $h \leq \frac{1}{2\alpha n}$; this is satisfied if $h \leq a$ (if $h > a$ then solution will be the defaulting scenario by construction and thus is not relevant)}\\
%\blue{In $\frac{h - aq}{\bar q - q}$ case: square root exists if and only if $h \geq \frac{a}{2} - \frac{(1-2\alpha n a)^2}{8 \alpha n}$}
\end{enumerate}

Notably, $s^{VWAP}(q,\bar q) \geq s^{LOB}(q,\bar q)$ for any choice of $(q,\bar q)$ by construction.  In fact, if there exist $n \geq 2$ banks, then this inequality is strict at equilibrium on $\H_2^{VWAP} \cap \H_2^{LOB}$, i.e.,
\begin{equation*}
\resizebox{1\textwidth}{!}{$
h \in \left(\frac{r}{\alpha(1+r)n}\left(1-\frac{r}{2(1+r)}\right) \; , \; \frac{2r}{\alpha(1+r)(n+1)}\left(\frac{1}{2}+\frac{rn}{(1+r)(n+1)}\right) + a\left(\frac{1}{2} - \frac{2rn}{(1+r)(n+1)}\right)\right).
$}
\end{equation*}

In contrast, by construction of the order book density $f$, the borrowing by each firm at equilibrium (and therefore total system wide borrowing) is smaller under the VWAP framework than the LOB framework, i.e., $h - s^{VWAP}(q^{VWAP},\bar q^{VWAP})\bar q^{VWAP} \leq h - s^{LOB}(q^{LOB},\bar q^{LOB}) \bar q^{LOB}$, with strict ordering on the same interval as given above.

\subsubsection{A counterexample to the symmetric ordering}\label{sec:counterexample}
In contrast to the symmetric system above, we now wish to consider a system in which the VWAP setting results in fewer liquidations and more borrowing than the LOB framework.
To do this, let's consider a simple heterogeneous $n = 2$ bank setting with $r = 0.01$, $\a = (1,2)$, and $\h = (0.3,1.2)$.  For this example consider the same order book density function $f(s) = 1 - \alpha s$ and haircut function $g(s) = \frac{1}{2} - \alpha s$, but with the specific price impact parameter $\alpha = 0.05$. 
With this construction, the clearing liquidations and prices can be determined numerically to be
\begin{itemize}
\item $\s^{VWAP} = (0 , 0.4853)$ with $q^{VWAP} = 0.4757$ and $\bar \q^{VWAP} = (0.9879 , 0.9879).$
\item $\s^{LOB} = (0.0990 , 0.5080)$ with $q^{LOB} = 0.4696$ and $\bar \q^{LOB} = (0.9950 , 0.9828).$
\end{itemize}
As desired at the beginning of this example, total liquidations are less for both banks (i.e., $\s^{VWAP} < \s^{LOB}$), but borrowing by both banks has the opposite order (i.e., $h_i - s_i^{VWAP} \bar q_i^{VWAP} > h_i - s_i^{LOB} \bar q_i^{LOB},~i=1,2.$).  This is the opposite order from the symmetric case study considered above; as such, there is no consistent order between the VWAP and LOB settings that can be determined.  

\subsubsection{Discussion}
As shown in the prior two examples, there is no consistent ordering between the VWAP and LOB settings.  For symmetric systems and, more generally, systems close to symmetric, if the market regulators wish to promote borrowing over liquidations, then the LOB framework is preferable; however, for certain heterogeneous systems, the VWAP framework may be preferable to that same regulator.  As such, the use of stress testing of different market mechanisms is of the paramount importance in order to determine the optimal market mechanism.

We wish to make one final consideration on the comparison of the VWAP and LOB frameworks.  We conjecture that the distinction between the two setting occurs only if some bank is both liquidating and borrowing.  Most prior works, e.g.,~\cite{AFM16}, consider only the situation in which firms can only liquidate in order to cover their liabilities.  Without borrowing allowed, the VWAP and LOB frameworks will always coincide at the aggregate level.  As such the mechanism choice of $\bar \f$ is irrelevant when considered in the standard literature (which is written in a VWAP style manner).

\subsection{EBA case study}
We conclude this work with a consideration of financial system calibrated to 2011 European banking data.  This stress test data has been utilized in numerous prior studies for studying interbank liability networks (e.g., \cite{GV16,CLY14,feinstein2017multilayer}).  We will calibrate and utilize this EBA dataset in much the same way as in \cite{bichuch2019optimization}, i.e., to have a more realistic system but one that still requires heuristics and, as such, is for demonstration purposes only.

As a stylized bank balance sheet, we will consider two categories of assets: \emph{cash assets} $c_i$ and \emph{illiquid assets} $a_i$. We will additionally consider two categories of liabilities: \emph{external liabilities} $\bar p_i$ and \emph{capital} $C_i$. 
In order to determine these values, we calibrate the system as in~\cite{bichuch2019optimization} but ignoring all interbank obligations considered as cash so as to discount default contagion and focus solely on price-mediate contagion as discussed in the remainder of this work.  The total assets $T_i$ and capital $C_i$ are provided by this dataset directly for each bank $i$.  The external liabilities $\bar p_i = T_i - C_i$ are computed by balance sheet construction.  It remains to split the total assets into cash and illiquid assets; we make this split according to the tier 1 capital ratio $R_i$, i.e., $c_i = R_i T_i$ and $a_i = (1-R_i)T_i$.

In order to complete our model, we need to consider the remaining parameters of the system.  We set the market capitalization $M = \sum_{i = 1}^n a_i$ to be the total number of shares of the illiquid assets held by the banks.  For this example we consider the linear order book density function $f(s) = 1 - \alpha s$ and haircut function $g(s) = \frac{7}{10} - \alpha s$ for $\alpha = \frac{1}{300M}$ (i.e., a $0.30$ euro haircut is charged on top of the ``market price'' $f(s)$).  
By construction, this setting satisfies all conditions of Theorem~\ref{thm:unique}.
%\xblue{Though we conjecture the clearing prices are unique in all cases, we construct the maximal fixed point guaranteed to exist in Theorem~\ref{thm:tarski}. }
We will focus on the impacts of altering the interest rate environment in order to compare the VWAP and LOB settings.  This is undertaken in the prevailing low interest rate environment during the period from which this data is collected.  For this study, no external shock is applied to the financial system.
%\xblue{We will focus on the impacts of the interest rate environment $r \in [0\%,6\%]$ in order to compare the VWAP and LOB settings.}

\begin{figure}[t]
\centering
%\includegraphics[width=0.5\textwidth]{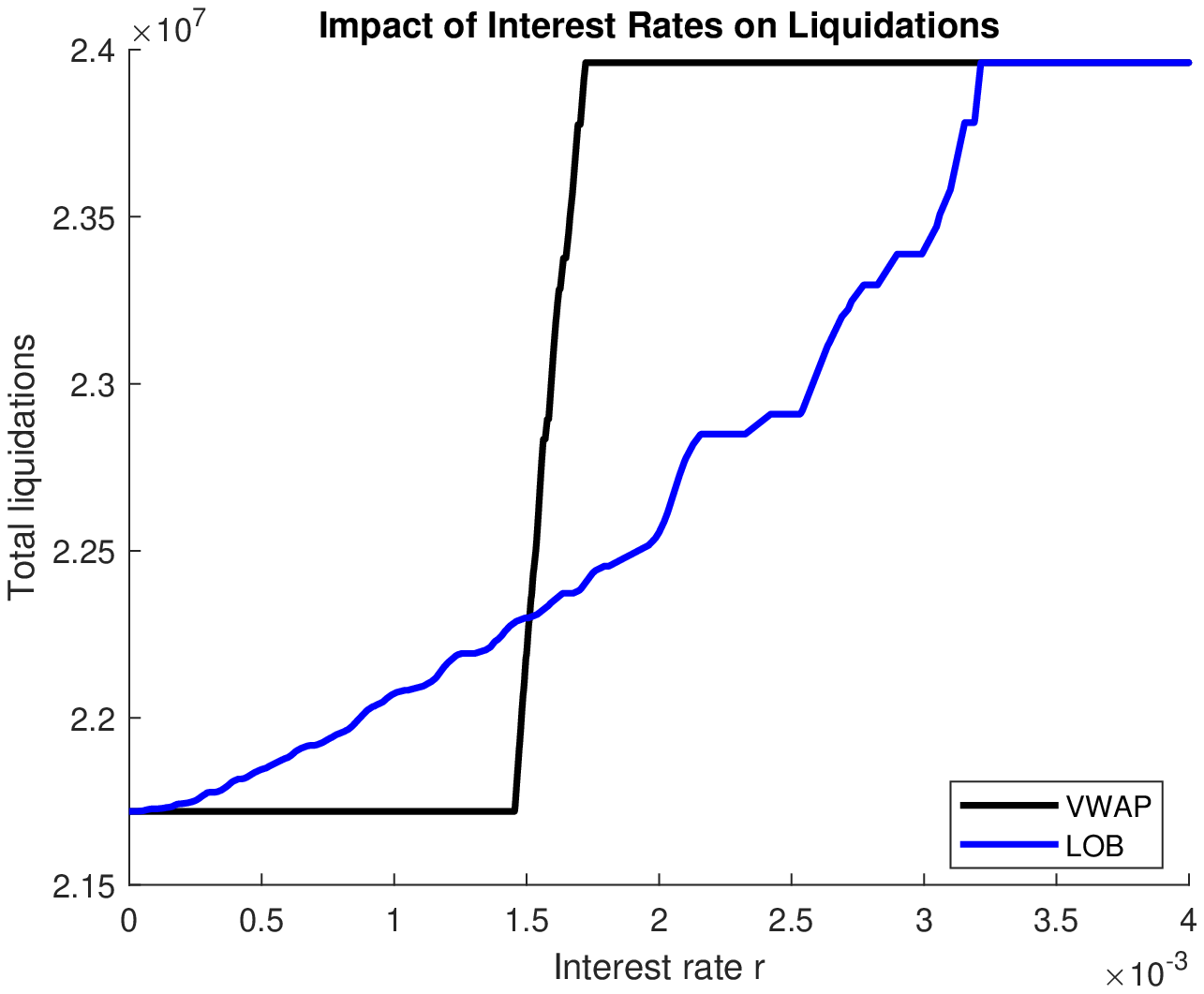}
%\begin{subfigure}[t]{0.3\textwidth}
\begin{subfigure}[t]{0.45\textwidth}
\centering
\includegraphics[width=\textwidth]{EBA-liquidations.eps}
\caption{Total liquidations}
\label{fig:EBA-liquidations}
\end{subfigure}
~
%\begin{subfigure}[t]{0.3\textwidth}
\begin{subfigure}[t]{0.45\textwidth}
\centering
\includegraphics[width=\textwidth]{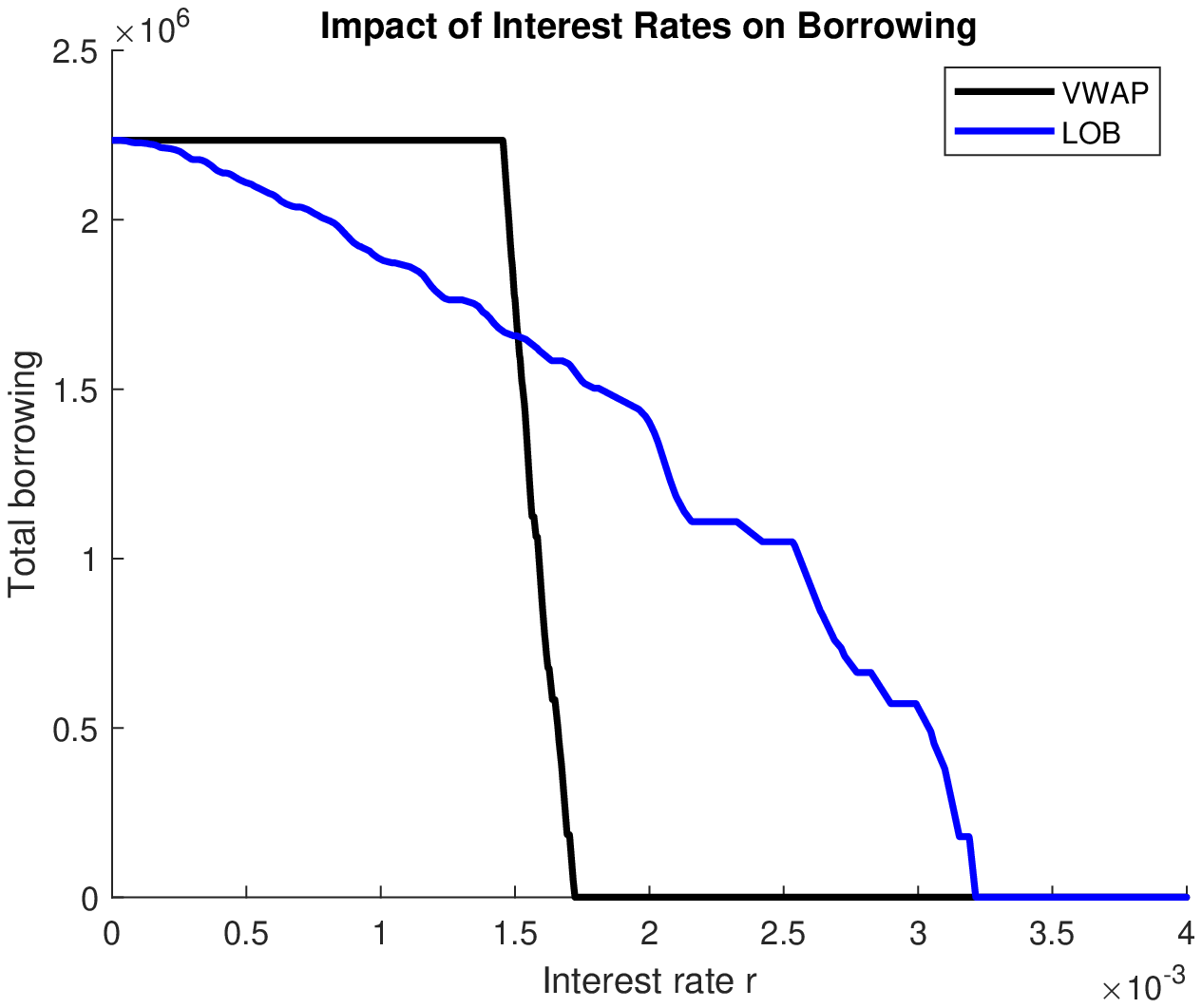}
\caption{Total borrowing}
\label{fig:EBA-borrowing}
\end{subfigure}
%~
%\begin{subfigure}[t]{0.3\textwidth}
%\begin{subfigure}[t]{0.45\textwidth}
%\centering
%\includegraphics[width=\textwidth]{EBA-defaults.eps}
%\caption{Total defaults}
%\label{fig:EBA-defaults}
%\end{subfigure}
\caption{Summary statistics of the European banking sector's response to a changing interest rate environment.}
\label{fig:EBA}
\end{figure}
For our consideration, we compare the VWAP and LOB settings while varying the interest rate environment.  The results of varying the interest rate %\xblue{$r \in [0\%,6\%]$} 
is displayed in Figure~\ref{fig:EBA}.  As expected, total liquidations (Figure~\ref{fig:EBA-liquidations}) increase as the interest rate increases, whereas the total borrowing (Figure~\ref{fig:EBA-borrowing}) is exactly the reverse of the total liquidations and, as such, is decreasing as the interest rate increases.  Notably, as discussed in the case studies of Section~\ref{sec:mechanism}, under some interest rate environments VWAP encourages more borrowing than LOB and vice versa under other interest rate environments.  %\xblue{Even stronger than this result, the number of defaulting firms (i.e., those banks selling all of their assets) differs for the two market mechanisms.}  
We find that, system-wide, there is less selling and more borrowing in the LOB setting for higher interest rates. % \xblue{if $r \in [0\%,0.81\%) \cup (2.51\%,6\%]$}.  
Notably, the LOB setting results in a non-smooth response as a function of the interest rate $r$.  This results from the heterogeneous prices actualized by all banks; due to these varying prices, each bank switches strategies at varying interest rates.  This is in contrast to the VWAP setting in which, though the banks are heterogeneous, the strategies of the banks mostly overlap.  With this notion, it becomes clear that LOB provides greater flexibility for an intervention to control fire sales through the manipulation of interest rates.

\section{Conclusion}
In this work, we have considered a model of a system of banks that need to raise funds to cover their liquidity shortfalls. These firms decide on an optimal combination to raise the money through borrowing in a repo market and selling an illiquid asset in a fire-sale, with both the haircut and the fire-sale prices dependent on actions of other banks. We focused on two frameworks to determine the fire-sale prices: the volume weighted average price and a notion of the limit order book in order to capture notions of pricing dynamics in this, otherwise, static model.
We found sufficient conditions for existence maximal and minimal -- as well as uniqueness -- of the Nash equilibrium in this game.  Finally, we have compared the VWAP and the LOB settings analytically when the banks are identical and perform a numerical study using the 2011 EBA data.

\bibliographystyle{apalike}
\small{\bibliography{bibtex2}}

\begin{thebibliography}{}

\bibitem[Afonso et~al., 2020]{afonso2020monetary}
Afonso, G., Kim, K., Martin, A., Nosal, E., Potter, S., and Schulhofer-Wohl, S.
  (2020).
\newblock Monetary policy implementation with an ample supply of reserves.
\newblock {\em FRB of New York Staff Report}, (910).

\bibitem[Amini et~al., 2016]{AFM16}
Amini, H., Filipovi\'{c}, D., and Minca, A. (2016).
\newblock Uniqueness of equilibrium in a payment system with liquidation costs.
\newblock {\em Operations Research Letters}, 44(1):1--5.

\bibitem[Banerjee and Feinstein, 2020]{banerjee2019price}
Banerjee, T. and Feinstein, Z. (2020).
\newblock Price mediated contagion through capital ratio requirements.
\newblock {\em arXiv preprint}.
\newblock arXiv:1910.12130.

\bibitem[Bichuch and Chen, 2020]{bichuch}
Bichuch, M. and Chen, K. (2020).
\newblock Systemic risk: the effect of market confidence.
\newblock {\em Intl. J. of Theoretical \& Applied Finance}.
\newblock To appear.

\bibitem[Bichuch and Feinstein, 2019]{bichuch2019optimization}
Bichuch, M. and Feinstein, Z. (2019).
\newblock Optimization of fire sales and borrowing in systemic risk.
\newblock {\em SIAM Journal on Financial Mathematics}, 10:68--88.

\bibitem[Braouezec and Wagalath, 2018]{braouezec2016risk}
Braouezec, Y. and Wagalath, L. (2018).
\newblock Risk-based capital requirements and optimal liquidation in a stress
  scenario.
\newblock {\em Review of Finance}, 22(2):747--782.

\bibitem[Braouezec and Wagalath, 2019a]{braouezec2017strategic}
Braouezec, Y. and Wagalath, L. (2019a).
\newblock Strategic fire-sales and price-mediated contagion in the banking
  system.
\newblock {\em European Journal of Operational Research}, 274(3):1180--1197.

\bibitem[Braouezec and Wagalath, 2019b]{braouezec2019strategic}
Braouezec, Y. and Wagalath, L. (2019b).
\newblock Strategic fire-sales and price-mediated contagion in the banking
  system.
\newblock {\em European Journal of Operational Research}, 274(3):1180--1197.

\bibitem[Brunnermeier, 2009]{B09}
Brunnermeier, M.~K. (2009).
\newblock Deciphering the liquidity and credit crunch 2007-2008.
\newblock {\em The Journal of Economic Perspectives}, 23(1):77--100.

\bibitem[Caballero and Simsek, 2013]{caballero2013fire}
Caballero, R.~J. and Simsek, A. (2013).
\newblock Fire sales in a model of complexity.
\newblock {\em The Journal of Finance}, 68(6):2549--2587.

\bibitem[Caccioli et~al., 2014]{caccioli2014}
Caccioli, F., Shrestha, M., Moore, C., and Farmer, J.~D. (2014).
\newblock Stability analysis of financial contagion due to overlapping
  portfolios.
\newblock {\em Journal of Banking \& Finance}, 46:233--245.

\bibitem[Capponi and Larsson, 2015]{CL15}
Capponi, A. and Larsson, M. (2015).
\newblock Price contagion through balance sheet linkages.
\newblock {\em Review of Asset Pricing Studies}, 5(2):227--253.

\bibitem[Cecchetti, 2009]{cecchetti2009crisis}
Cecchetti, S.~G. (2009).
\newblock Crisis and responses: the federal reserve in the early stages of the
  financial crisis.
\newblock {\em Journal of Economic Perspectives}, 23(1):51--75.

\bibitem[Chen et~al., 2016]{CLY14}
Chen, N., Liu, X., and Yao, D.~D. (2016).
\newblock An optimization view of financial systemic risk modeling: The network
  effect and the market liquidity effect.
\newblock {\em Operations Research}, 64(5):1089--1108.

\bibitem[Cifuentes et~al., 2005]{CFS05}
Cifuentes, R., Shin, H.~S., and Ferrucci, G. (2005).
\newblock Liquidity risk and contagion.
\newblock {\em Journal of the European Economic Association}, 3(2-3):556--566.

\bibitem[Cont and Schaanning, 2019]{CS19}
Cont, R. and Schaanning, E. (2019).
\newblock Monitoring indirect contagion.
\newblock {\em Journal of Banking and Finance}, 104:85--102.

\bibitem[Duarte and Eisenbach, 2018]{DE18}
Duarte, F. and Eisenbach, T. (2018).
\newblock Fire-sale spillovers and systemic risk.
\newblock {\em Federal Reserve Bank of New York Staff Reports}.

\bibitem[Eisenberg and Noe, 2001]{EN01}
Eisenberg, L. and Noe, T.~H. (2001).
\newblock Systemic risk in financial systems.
\newblock {\em Management Science}, 47(2):236--249.

\bibitem[Feinstein, 2017]{feinstein2015illiquid}
Feinstein, Z. (2017).
\newblock Financial contagion and asset liquidation strategies.
\newblock {\em Operations Research Letters}, 45(2):109--114.

\bibitem[Feinstein, 2019]{feinstein2017multilayer}
Feinstein, Z. (2019).
\newblock Obligations with physical delivery in a multi-layered financial
  network.
\newblock {\em SIAM Journal on Financial Mathematics}, 10(4):877--906.

\bibitem[Feinstein, 2020]{feinstein2019leverage}
Feinstein, Z. (2020).
\newblock Capital regulation under price impacts and dynamic financial
  contagion.
\newblock {\em European Journal of Operational Research}, 281(2):449--463.

\bibitem[Feinstein and El-Masri, 2017]{feinstein2015leverage}
Feinstein, Z. and El-Masri, F. (2017).
\newblock The effects of leverage requirements and fire sales on financial
  contagion via asset liquidation strategies in financial networks.
\newblock {\em Statistics \& Risk Modeling}, 34(3-4):109--114.

\bibitem[Feinstein and Halaj, 2020]{FH2020}
Feinstein, Z. and Halaj, G. (2020).
\newblock Interbank asset-liability networks with fire sale management.
\newblock {\em Bank of Canada Staff Working Paper Series}, (2020-41).

\bibitem[Gandy and Veraart, 2016]{GV16}
Gandy, A. and Veraart, L.~A. (2016).
\newblock A {B}ayesian methodology for systemic risk assessment in financial
  networks.
\newblock {\em Management Science}, 63(12):4428--4446.

\bibitem[Gorton and Metrick, 2012]{GM09}
Gorton, G.~B. and Metrick, A. (2012).
\newblock Securitized banking and the run on repo.
\newblock {\em Journal of Financial Economics}, 104(3):425--451.

\bibitem[Greenwood et~al., 2015]{GLT15}
Greenwood, R., Landier, A., and Thesmar, D. (2015).
\newblock Vulnerable banks.
\newblock {\em Journal of Financial Economics}, 115(3):3--28.

\bibitem[Ihrig et~al., 2020]{ihrig2020fed}
Ihrig, J., Senyuz, Z., and Weinbach, G.~C. (2020).
\newblock The fed’s “ample-reserves” approach to implementing monetary
  policy.
\newblock DOI: 10.17016/FEDS.2020.022.

\bibitem[Quinn et~al., 2020]{quinnstanding}
Quinn, S., Roberds, W., and Kahn, C.~M. (2020).
\newblock Standing repo facilities, then and now.
\newblock {\em Federal Reserve Bank of Atlanta's Policy Hub}, (1).

\bibitem[Rogers and Veraart, 2013]{RV13}
Rogers, L.~C. and Veraart, L.~A. (2013).
\newblock Failure and rescue in an interbank network.
\newblock {\em Management Science}, 59(4):882--898.

\bibitem[Rosen, 1965]{rosen65}
Rosen, J.~B. (1965).
\newblock Existence and uniqueness of equilibrium points for concave n-person
  games.
\newblock {\em Econometrica}, 33(3):520--534.

\bibitem[Weber and Weske, 2017]{AW_15}
Weber, S. and Weske, K. (2017).
\newblock The joint impact of bankruptcy costs, fire sales and cross-holdings
  on systemic risk in financial networks.
\newblock {\em Probability, Uncertainty and Quantitative Risk}, 2(1):9.

\end{thebibliography}

\newpage

\appendix
\section{Proofs of Section~\ref{sec:model}}
\subsection{Proof of Proposition~\ref{prop:nash-equiv}}\label{sec:proof-sec2}
\begin{proof}
First, assume that $\s^{**} \in\D$ is a Nash equilibrium of \eqref{eq:s*q}, and let $1\le i\le n$.  We will prove that $\s^{**}$ is a Nash equilibrium of \eqref{eq:s*} by considering all 4 cases that $s_i^{**}$ can be, i.e., at the unconstrained optimum, matching its shortfall through liquidations alone, borrowing as much as allowable, and being insolvent. 
\begin{itemize}
\item Let $s^{**}_i$ be equal to the \emph{unconstrained} optimal liquidations, i.e., 
$$\frac{\d}{\d s_i}\(  s_i \left(1 - \bar f_i(s_i,\s_{-i}^{**})\right) + r\left(h_i - s_i \bar f_i(s_i,\s_{-i}^{**})\right)\) \Big\vert_{s_i = s_i^{**}} =0.$$  
Given the behavior of all other banks $\s^{**}_{-i}$, from concavity of $s_i \mapsto s_i \bar f_i(s_i,\s^{**}_{-i})$, we have that $s^{**}_i$ is the optimal liquidation of bank $i$.  Therefore, $s^*_i = s^{**}_i$ since it is also feasible for \eqref{eq:s*} by construction of $q,\bar\q$.
\item Let $s^{**}_i = \frac{h_i}{\bar q_i}$ be such that bank $i$ matches its shortfall through liquidations alone.  Since this is optimal and the upper boundary for \eqref{eq:s*q}, it must follow that 
$$\frac{\d}{\d s_i}\(  s_i \left(1 - \bar f_i(s_i,\s_{-i}^{**})\right) + r\left(h_i - s_i \bar f_i(s_i,\s_{-i}^{**})\right)\) \Big\vert_{s_i = s_i^{**}} \le0.$$  
Therefore, given the behavior of all other banks $\s^{**}_{-i}$, from concavity of $s_i \mapsto s_i \bar f_i(s_i,\s^{**}_{-i})$ we have that $s^{**}_i$ is optimal for \eqref{eq:s*} if both $s^{**}_i$ is feasible, and $s^{**}_i + \epsilon$ is infeasible for any $\epsilon > 0$.  Feasibility of $s^{**}_i$ is trivial by construction of $q,\bar\q$. Invisibility of $s_i = s^{**}_i + \epsilon$, with $\epsilon > 0$, follows from that facts $s_i^{**}\bar q_i = h_i$ and that $s_i \mapsto s_i \bar f_i(s_i,\s^{**}_{-i})$ is strictly increasing by Assumption \ref{ass:bar-f-g}.
% then $s_i \bar f_i(s_i,\s^{**}_{-i}) > s_i^{**} \bar f_i(\s^{**}) = s_i^{**}\bar q_i = h_i$ is infeasible.
%
\item Let $s^{**}_i = \[\frac{h_i - a_i q}{\bar q_i - q}\]^+$ be such that bank $i$ is borrowing as much as allowable.  Since this is optimal and the lower boundary for \eqref{eq:s*q}, it must follow that 
$$\frac{\d}{\d s_i}\(  s_i \left(1 - \bar f_i(s_i,\s_{-i}^{**})\right) + r\left(h_i - s_i \bar f_i(s_i,\s_{-i}^{**})\right)\) \Big\vert_{s_i = s_i^{**}} \ge0.$$
Therefore, given the behavior of all other banks $\s^{**}_{-i}$, from concavity of $s_i \mapsto s_i \bar f_i(s_i,\s^{**}_{-i})$ we have that $s^{**}_i$ is optimal for \eqref{eq:s*} if both $s^{**}_i$ is feasible and $s^{**}_i - \epsilon$ is infeasible for any $\epsilon > 0$.  Feasibility of $s^{**}_i$ is trivial by construction of $q,\bar\q$. Additionally, for $\epsilon>0$, $s^{**}_i - \epsilon$ is infeasible because $s_i^{**} \bar f_i(\s^{**}) + (a_i - s_i^{**}) g\(\sum_{j = 1}^n s_j^{**}\) = s_i^{**} \bar q_i + (a_i - s_i^{**})q = h_i$ and $s_i\mapsto s_i \bar f_i(s_i,\s^{**}_{-i}) + (a_i - s_i)g\(s_i + \sum_{j \neq i} s_j^{**}\)$ is strictly increasing by Assumption \ref{ass:bar-f-g}. 
%consider $s_i = s^{**}_i - \epsilon$ (for any $\epsilon > 0$), if $s_i < 0$ then the result is trivial and if $s_i \geq 0$ then $s_i \bar f_i(s_i,\s^{**}_{-i}) + (a_i - s_i)g\(s_i + \sum_{j \neq i} s_j^{**}\) < s_i^{**} \bar f_i(\s^{**}) + (a_i - s_i^{**}) g\(\sum_{j = 1}^n s_j^{**}\) = s_i^{**} \bar q_i + (a_i - s_i^{**})q = h_i$ is infeasible.
%
\item Let $s^{**}_i = a_i$ be such that bank $i$ is insolvent.  This situation occurs if, and only if, $a_i \bar q_i < h_i$.  It remains to show that every $s_i \in [0,a_i]$ is infeasible for \eqref{eq:s*} given the behavior $\s^{**}_{-i}$ of all other banks.  
Indeed, using the fact that $s_i\mapsto s_i \bar f_i(s_i,\s^{**}_{-i}) + (a_i - s_i)g\(s_i + \sum_{j \neq i} s_j^{**}\)$ is strictly increasing, we have that 
$s_i \bar f_i(s_i,\s^{**}_{-i}) + (a_i - s_i) g\(s_i + \sum_{j \neq i} s_j^{**}\) \leq a_i \bar f_i(a_i,\s^{**}_{-i}) = a_i \bar q_i < h_i$ and the lower boundary condition is violated making $s_i$ infeasible.
\end{itemize}
One degenerate case is if $s_i^{**} = a_i$ but the bank is not defaulting. In this case, we must have that $a_i \bar q_i \le h_i$ and  $a_i(\bar q_i - q)  \ge {h_i - a_i q},$ which together we have that $a_i \bar q_i =h_i$, and this is the only point in the constrained set. Clearly, this will also be a Nash equilibrium of \eqref{eq:s*}.

Vice versa, let $\s^* \in\D$ be a Nash equilibrium of \eqref{eq:s*}.  Since $s_i \mapsto s_i \bar q_i, s_i\mapsto s_i(\bar q_i -q)$ are both strictly increasing functions, we can prove that $\s^*$ is a Nash equilibrium of \eqref{eq:s*q} similarly. 
\end{proof}

\section{Proofs of Section~\ref{sec:main}}
\subsection{Proof of Proposition~\ref{prop:dsc}}\label{sec:proof1}
In both the VWAP and LOB settings, for a fixed $(q,\bar q_1, ..., \bar q_n) \in \Qh$ the existence of an equilibrium $\bar \s(q,\bar q_1, ..., \bar q_n)$ follows along the same steps as the proof of Theorem \ref{thm:exist}. We next show the uniqueness of $\bar \s(q,\bar q_1, ..., \bar q_n)$ by utilizing the results of~\cite{rosen65} on convex games.
\subsubsection{Volume weighted averaged price}
\begin{proof}
In this case, the uniqueness of $\bar \s(q,\bar q_1, ..., \bar q_n)$ follows from \cite{bichuch2019optimization}[Theorem 3.2], as soon as we verify that the assumptions of that theorem hold. 

Recall that $\bar f_i $ is independent of the index $i$.  Also, note that we can write $\bar f_i$ as a function of the total liquidation $s_{-i}$, where
\begin{align}
s_{-i} &= \sum\limits_{j\ne i,j=1}^ns_j,\label{eq:s-i}\\
\bar f_i (s_{i}, \s_{-i}) &=: \hat f(s_i + s_{-i}) =  \frac{1}{s_{i} + s_{-i}}  \int_0^{s_{i} + s_{-i} } f(s)ds.
\label{eq:VWAP-sum}
\end{align}
We assume that $f$ satisfies Assumption \ref{ass:idf}, and proceed to verify that \cite{bichuch2019optimization} [Assumption 2.1] is also satisfied by $\hat f$. Indeed, $\hat f'(s) =-\frac1{s^2}  \int_0^{s } f(u)du + \frac{f(s)}{s} < -\frac{sf(s)}{s^2} + \frac{f(s)}{s} =0,$ if $s>0$ and otherwise $\hat f'(0) = \frac12 f'(0)\le0$. 
It is also easily seen that $\frac{d^2}{ds^2}(s \hat f(s)) = f'(s) <0.$  

Lastly we need to show that $\hat f''\ge 0.$ First, $\hat f''(0)=\frac13f''(0)\ge0.$ We then calculate that 
$s^2\hat f''(s) = \frac2{s} \int_0^s f(u)du - 2 f(s) + s f'(s).$ Since $f$ is convex, we have that $f(s) - f(0) \le s f'(s),$ and thus it is sufficient to show that  $\frac2{s} \int_0^s f(u)du -f(0)-  f(s)\ge0.$
Using the fact that $f$ is convex, %\blue{do we need this to be strict? what about linear case?},  
we have that $\lambda f(s_1) + (1-\lambda) f(s_2) \le f(\lambda s_1 + (1-\lambda)s_2), ~\lambda\in[0,1]$. Integrating over $\lambda\in[0,1]$ gives $\frac{f(s_1) + f(s_2)}{2} \le \frac1{s_2 -s_1}\int_{s_1}^{s_2} f(u)du,$ which gives the desired result. 
\end{proof}

\subsubsection{Limit order book}
\begin{proof}
Recall from \cite{rosen65} that for $\s\in\R^n$, the function $\s\mapsto H(\s;{\boldsymbol \rho})$ is diagonally strictly convex, if for some (fixed) ${\boldsymbol \rho}\in\R_{+}^n$ and for every $\s^0, \s^1\in \R^n,~\s^0 \neq \s^1$, we have $(\s^1 - \s^0)^\T \gamma(\s^0; {\boldsymbol \rho}) - (\s^1 - \s^0)^\T \gamma(\s^1; {\boldsymbol \rho}) <0$, where %the $i^{th}$ component $g_i(s;\rho) = \frac{\d}{\d s_i} H_i(s;\rho).$ 
\begin{align}\gamma(\s;{\boldsymbol \rho})   =  \(\begin{array}{c}  \d_{s_1} H_1(\s;\rho_1)  \\  \vdots \\  \d_{s_n} H_n(\s;\rho_n)\end{array}\) = \(\begin{array}{c} \rho_1 \(1 - (1+r) f\(\sum_{\ell = 1}^{k_1} (n - (\ell-1))(s_{[\ell]} - s_{[\ell-1]})\)\) \\  \vdots \\ \rho_n \(1 - (1+r) f\(\sum_{\ell = 1}^{k_n} (n - (\ell-1))(s_{[\ell]} - s_{[\ell-1]})\)\) \end{array}\),
\end{align}
where $k_i$ is such that $s_{[k_i]} = s_i$.  
Additionally, \cite[Theorem 6]{rosen65} shows that a sufficient condition for $H$ to be diagonally strictly convex is if $\Gamma(\s;{\boldsymbol \rho}) + \Gamma(\s;{\boldsymbol \rho})^\T$ is a symmetric positive definite matrix for every $\s \in \R^n$ and some ${\boldsymbol \rho} \in \R_{+}^n$, where $\Gamma$ is the Jacobian matrix of $\gamma$ with respect to $\s$. 
Without loss of generality, for fixed value $\s$, assume $k_i = i$ for every bank.

Set $\rho_i = \frac1{1+r}$ then
\begin{align}
\[\Gamma(\s;{\boldsymbol \rho}) + (\Gamma(\s;{\boldsymbol \rho}))^\T\]_{ij} &= -(1 + \ind_{\{i=j\}} [2(n-i)+1]) f'\(\sum_{\ell = 1}^{i \vee j - 1} s_{\ell} + (n - (i \vee j - 1))s_{i \vee j}\).
\end{align}
Thus, in full matrix notation, we find
\begin{equation*}
\resizebox{1\textwidth}{!}{$
\begin{aligned}[t]
&\Gamma(\s;{\boldsymbol \rho}) + \Gamma(\s;{\boldsymbol \rho})^\T = A(\s) + \sum_{j = 1}^n B_j(\s),\\
&A(\s) = -\diag\([2(n-i)+1] f'\(\sum_{\ell = 1}^{i-1} s_\ell + (n - (i-1))s_i\)\),\\
&B_j(\s) = \begin{cases} \[f'\(\sum\limits_{\ell = 1}^{j-1} s_\ell + (n - (j-1))s_j\) - f'\(\sum\limits_{\ell = 1}^j s_\ell + (n - j) s_{j+1}\)\] \(\begin{array}{cc} {\bf{1}}_{j \times j} & {\bf{0}}_{j \times (n-j)} \\ {\bf{0}}_{(n-j) \times j} & {\bf{0}}_{(n-j) \times (n-j)} \end{array}\) &\text{if } j < n, \\ -f'\(\sum_{\ell = 1}^n s_\ell\) {\bf{1}}_{n \times n} &\text{if } j = n. \end{cases}
\end{aligned}
$}
\end{equation*}
For any liquidations $\s$, by construction, the matrix $A(\s)$ is positive definite and $B_j(\s)$ is positive semidefinite (by nondecreasing property of $f'$). The uniqueness of $\bar \s(q,\bar q_1, ..., \bar q_n)$ follows from \cite{rosen65}[Theorem 2].
\end{proof}

\subsection{Proof of Theorem~\ref{thm:tarski}}\label{sec:proof2}
Our goal is to apply Tarski's fixed point theorem, to do which, we need to prove that 
\begin{align}
\sum_{i=1}^n \bar s_i(q^\downarrow,\bar \q^\downarrow) \ge \sum_{i=1}^n \bar s_i(q^\uparrow,\bar \q^\uparrow), 
\label{eq:Tarski-ineq}
\end{align}
for $(q^\downarrow,\bar \q^\downarrow) \leq (q^\uparrow,\bar \q^\uparrow)$.  
%We again divide the proof to VWAP and LOB cases:
%\subsubsection{Volume weighted average price}
\begin{proof}
Recall the definitions of $\hat f$ and $s_{-i}$ given in \eqref{eq:VWAP-sum} and \eqref{eq:s-i} respectively. Therefore, we can assume that $\bar q \define \bar q_1 =...=\bar q_n.$  First we note that for $i=1, ..., n$
\begin{equation}
\label{eq:s_i*-vwap}
\bar s_i(q,\bar q {\bf{1}}_n) = \begin{cases} a_i &\text{if } h_i \geq a_i \bar q, \\ \left(\frac{h_i - a_i q}{\bar q - q}\right)^+ \vee \left[s_i^0(\bar s_{-i}(q, \bar q {\bf{1}}_n)) \wedge \frac{h_i}{\bar q}\right] &\text{if } h_i < a_i \bar q, \end{cases}
\end{equation}
where $s_i^0$ is the solution to
\begin{align}
1 - (1+r)(\hat f(s_i^0 +\bar s_{-i}(q, \bar q {\bf{1}}_n) ) + s_i^0 \hat f'(s_i^0 + \bar s_{-i}(q, \bar q {\bf{1}}_n)) )= 0.
\label{eq:s-0}
\end{align}

With this construction we wish to note that bank $i$ is defaulting and has no other option but to liquidate all its assets if and only if $h_i \geq a_i \bar q$. Indeed, as noted previously in the body of this work, in the opposite case $h_i < a_i \bar q$, we have that:
\begin{enumerate}
\item $\frac{h_i}{\bar q} < a_i$ if and only if $h_i < a_i \bar q$.
\item $\frac{h_i - a_i q}{\bar q - q} < a_i$ if and only if $h_i < a_i \bar q$. %\; \Leftrightarrow \; h_i - a_i q < a_i \bar q - a_i q \; \Leftrightarrow \; h_i < a_i \bar q$.
\item $\frac{h_i - a_i q}{\bar q - q} < \frac{h_i}{\bar q}$ if and only if $h_i < a_i \bar q$. %\; \Leftrightarrow \; h_i \bar q - a_i q \bar q < h_i \bar q - h_i q \; \Leftrightarrow \; h_i q < a_i q. \bar q \; \Leftrightarrow \; h_i < a_i \bar q$.
\end{enumerate}
Therefore, $\bar s_i(q,\bar q {\bf{1}}_n),~i=1, ...,n $ is well defined in \eqref{eq:s_i*-vwap}, and $\bar s_i(q,\bar q) <a_i$ in all those cases.

First consider the case when all banks keep at the same liquidation strategy, in other words the definition of $\bar s_i$ in \eqref{eq:s_i*-vwap} is equal to the same term (i.e., among $a_i$, $\frac{h_i}{\bar q}$, $\left(\frac{h_i - a_i q}{\bar q - q}\right)^+$, and $s_i^0$). Then for $i=1, ..., n$:
\begin{itemize}
\item If $\bar s_i = a_i$ then $\partial_q  \bar s_i = \partial_{\bar q}  \bar s_i= 0$.
\item If $\bar s_i = \frac{h_i}{\bar q}$, then $\partial_q  \bar s_i=0, \partial_{\bar q}\bar s_i <0$.  %$\partial \leq 0$ decreasing in $\bar q$ and constant in $q$
\item If $\bar s_i = \left(\frac{h_i - a_i q}{\bar q - q}\right)^+$, first assume that $h_i\ge a_iq$, in addition to $h_i< a_i\bar q$. The former results in $\d_{\bar q} \bar s_i\le 0$, while it follows from the latter that $\d_q\bar s_i< 0$.  If, instead, $h_i < a_iq$ then $\d_{\bar q} \bar s_i = \d_q\bar s_i = 0$.
Note that we have also used our assumption that $\bar q > q$. 
\item The last case to consider is when $\bar s_i = s_i^0$. This is the most interesting case because 
$(s_i^0)' \in (-1,0]$ as shown in \cite{bichuch2019optimization}[Theorem 3.2], therefore the derivative has the opposite sign $\partial_{\bar q}s_i^0 = (s_i^0)'{\partial_{ \bar q}}s_{-i}\ge0$.
This case, requires a more careful analysis as follows.

Let $I_0$ be the set of banks $j=1, ...,n$, such that $\bar s_j = s^0_j$, then differentiating \eqref{eq:s-0} w.r.t. $\bar q$, and using the fact that $(s_i^0)' \in (-1,0]$, we see that
$$\d_{\bar q} {\vec{\bar s_{I_0}}} = -(\operatorname{diag}({\bf{1}}_{|I_0|} - \vec{c}) + \vec{c}{\bf{1}}_{|I_0|}^\T)^{-1} \vec{c} \sum_{j \not\in I_0} \d_{\bar q} \bar s_j,$$ for some $\vec{c} \in [0,1)^{|I_0|}$.  
First, we wish to show that $\operatorname{diag}({\bf{1}}_{|I_0|}-\vec{c}) + \vec{c}{\bf{1}}_{|I_0|}^\T$ is invertible:
\begin{align}
\operatorname{det}&\left(\operatorname{diag}({\bf{1}}_{|I_0|}-\vec{c}) + \vec{c}{\bf{1}}_{|I_0|}^\T\right) = \operatorname{det}\left(\begin{array}{cccc} 1 & c_{i_1} & \cdots & c_{i_1} \\ c_{i_2} & 1 & \cdots & c_{i_2} \\ \vdots & \vdots & \ddots & \vdots \\ c_{i_{|I_0|}} & c_{i_{|I_0|}} & \cdots & 1\end{array}\right)\\
&= \operatorname{det}\left(\begin{array}{cccc} 1 & -(1-c_{i_1}) & \cdots & -(1-c_{i_1}) \\ c_{i_2} & 1-c_{i_2} & \cdots & 0 \\ \vdots & \vdots & \ddots & \vdots \\ c_{i_{|I_0|}} & 0 & \cdots & 1-c_{i_{|I_0|}} \end{array}\right)\\
&= \left(1 + (1-c_{i_1}) \sum_{i \in I_0 \backslash \{i_1\}} \frac{c_i}{1-c_i}\right)\prod_{i \in I_0 \backslash \{i_1\}} (1-c_i)\\
&= \left(1 + (1-c_{i_1}) \sum_{i \in I_0} \frac{c_i}{1-c_i} - (1-c_{i_1}) \frac{c_{i_1}}{1-c_{i_1}}\right) \prod_{i \in I_0 \backslash \{i_1\}} (1-c_i)\\
&= \left(1 + \sum_{i \in I_0} \frac{c_i}{1-c_i}\right) \prod_{i \in I_0} (1-c_i),
\end{align}
where the 2nd line follows from subtracting the first column from every subsequent column and the 3rd line by using the Schur complement to determine the determinant.
Thus we find that $$\operatorname{det}\left(\operatorname{diag}({\bf{1}}_{|I_0|} - \vec{c}) + \vec{c}{\bf{1}}_{|I_0|}^\T\right) = \left(1 + \sum_{i \in I_0} \frac{c_i}{1 - c_i}\right) \prod_{i \in I_0} (1 - c_i) > 0.$$
\end{itemize}
Taking this all together:
\begin{align}
\sum_{i = 1}^n \d_{\bar q} \bar s_i &= \sum_{j \not\in I_0}\d_{\bar q} \partial \bar s_j - {\bf{1}}_{|I_0|}^\T (\operatorname{diag}({\bf{1}}_{|I_0|} - \vec{c}) + \vec{c}{\bf{1}}_{|I_0|}^\T)^{-1} \vec{c} \sum_{j \not\in I_0} \d_{\bar q}\bar s_j\\
&= \left(1 - {\bf{1}}_{|I_0|}^\T (\operatorname{diag}({\bf{1}}_{|I_0|} - \vec{c}) + \vec{c}{\bf{1}}_{|I_0|}^\T)^{-1} \vec{c}\right) \sum_{j \not\in I_0} \d_{\bar q}\bar s_j.
\label{eq:deriv-s*}
\end{align}
Moreover $1 - {\bf{1}}_{|I_0|}^\T (\operatorname{diag}({\bf{1}}_{|I_0|} - \vec{c}) + \vec{c}{\bf{1}}_{|I_0|}^\T)^{-1} \vec{c} \geq 0$, since:
\begin{align}
{\bf{1}}_{|I_0|}^\T (\operatorname{diag}({\bf{1}}_{|I_0|}-\vec{c}) + \vec{c}{\bf{1}}_{|I_0|}^\T)^{-1} \vec{c} &= {\bf{1}}_{|I_0|}^\T \left(\operatorname{diag}({\bf{1}}_{|I_0|}-\vec{c})^{-1} - \frac{\operatorname{diag}({\bf{1}}_{|I_0|}-\vec{c})^{-1} \vec{c} {\bf{1}}_{|I_0|}^\T \operatorname{diag}({\bf{1}}_{|I_0|}-\vec{c})^{-1}}{1 + {\bf{1}}_{|I_0|}^\T \operatorname{diag}({\bf{1}}_{|I_0|}-\vec{c})^{-1} \vec{c}}\right) \vec{c} \\
&= \sum_{i \in I_0} \frac{c_i}{1-c_i} - \frac{1}{1 + \sum_{i \in I_0} \frac{c_i}{1-c_i}}\sum_{i \in I_0}\sum_{j \in I_0} \frac{c_i c_j}{(1-c_i)(1-c_j)} \\
&= \left(1 - \frac{1}{1 + \sum_{i \in I_0} \frac{c_i}{1-c_i}}\sum_{j \in I_0} \frac{c_j}{1-c_j}\right)\sum_{i \in I_0} \frac{c_i}{1-c_i}\\
&= \left(1 + \sum_{i \in I_0} \frac{c_i}{1-c_i}\right)^{-1} \sum_{i \in I_0} \frac{c_i}{1-c_i} \leq 1,
\end{align}
where the first equality follows from the Sherman-Morrison matrix identity.

It now follows from \eqref{eq:deriv-s*} that $\sum_{i = 1}^n \d_{\bar q}\bar s_i \le 0$, as desired. The same calculation also shows that $\sum_{i = 1}^n \d_{ q}\bar s_i \le 0$, and therefore \eqref{eq:Tarski-ineq} holds.  

Finally, in the case, that some banks may switch liquidation strategies, we use the fact that the mappings $\bar s_i(\cdot, \cdot),s^0_i(\cdot)~i=1, ...,n$ are continuous.
If there is a switch in strategies for bank $i$ at some fixed point $q_0, \bar q_0$, then by continuity, of all the mappings in \eqref{eq:s_i*-vwap} it follows that both one sided derivatives
$\sum_{i = 1}^n \d^{+}_{\bar q_0}\bar s_i,\sum_{i = 1}^n \d^{-}_{\bar q_0}\bar s_i\le 0$. Therefore $\sum_{i=1}^n\bar s_i$ is decreasing in $\bar q$. Similar result also holds for $q$. We conclude that \eqref{eq:Tarski-ineq} holds.

%Therefore the summation is nonincreasing in both $q$ and $\bar q$.  
\end{proof}

\subsection{Proof of Theorem~\ref{thm:maximal}}\label{sec:proof2lob}
Before providing the proof of this result, we wish to provide an auxiliary result.
Within this proof, we make extensive use of notation $s_i^0(\bar\s_{-i})$ to denote the (unique) solution to the first order condition
\begin{align}
1 - (1+r) f\(\sum_{j = 1}^n \ind_{\{\bar s_j < s_i\}}\bar s_j + (n - (k_i - 1))s_i\) = 0,
\label{eq:LOB-FOC}
\end{align}
where we recall that $k_i$ is (the minimal integer) such that $s_{[k_i]} = s_i$.
\begin{proposition}\label{prop:lob-order}
Consider the LOB structure and Assumption~\ref{ass:idf}.  Let $(q^*,\bar\q^*)$ be some set of equilibrium prices with associated liquidations $\s^* = \bar\s(q^*,\bar\q^*)$.
Define
\begin{align*}
I_0& \define \left\{ i\in\{1, ..., n\} \; | \; s_i^* = s_i^0(\s_{-i}^*)\right\},~I_U \define \left\{ i\in\{1, ..., n\} \; | \; s_i^* = \frac{h_i}{\bar q_i^*}  \right\},\\
I_a& \define\left \{ i\in\{1, ..., n\} \; | \; s_i^* = a_i\right\},~I_L \define \left\{ i\in\{1, ..., n\} \; | \; s_i^* =  \left(\frac{h_i - a_i q^*}{\bar q_i^* - q^*}\right)^+ \right\}.
\end{align*}
\begin{enumerate}
\item~$s_i^* \leq s_j^*$ if $h_i \leq h_j$ and $i,j \in I_0 \cup I_U$.
\item~$s_i^* \leq s_j^*$ if $i \in I_U$ and $j \in I_0$.
\item~$s_i^* \leq s_j^*$ for any $i \in I_0 \cup I_U$ and $j \in I_L$.
\end{enumerate}
\end{proposition}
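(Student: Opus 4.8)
The plan is to collapse all three orderings onto the monotonicity of two one‑variable functions that depend only on the fixed equilibrium profile $\s^*$. First I would set $N(u):=\#\{\ell : s_\ell^*\ge u\}$ — the number of banks still liquidating at book ``depth'' $u$ — and define the cumulative consumed depth $\beta(u):=\int_0^u N(\tau)\,d\tau$ and the cumulative cash function $\Phi(s):=\int_0^s f(\beta(\tau))\,d\tau$ on $[0,s_{\max}]$, where $s_{\max}:=\max_\ell s_\ell^*$. Since $N\ge 1$ on $[0,s_{\max}]$ and $f>0$, both $\beta$ and $\Phi$ are strictly increasing there; in particular $s_i^*\le s_j^*\Longleftrightarrow\beta(s_i^*)\le\beta(s_j^*)$. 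This equivalence is the device that turns the proposition into arithmetic.

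The technical heart is the identity, valid for every bank $i$,
\[ s_i^*\,\bar f_i(\s^*)=\Phi(s_i^*),\qquad \frac{\d}{\d s_i}\bigl(s_i\,\bar f_i(s_i,\s_{-i}^*)\bigr)\Big|_{s_i=s_i^*}=f\bigl(\beta(s_i^*)\bigr). \]
To prove it I would write the LOB formula for $s_i^*\bar f_i(\s^*)$ as the sum over $p=1,\dots,k_i$ of $\frac1{n-p+1}\int_{L_{p-1}}^{L_p}f(\sigma)\,d\sigma$, where $s_{[1]}^*\le\cdots\le s_{[n]}^*$ are the order statistics of $\s^*$, $s_i^*=s_{[k_i]}^*$, and $L_p=\sum_{\ell\le p}(n-\ell+1)(s_{[\ell]}^*-s_{[\ell-1]}^*)$. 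On the block $\tau\in(s_{[p-1]}^*,s_{[p]}^*]$ exactly $n-p+1$ banks are active, so $\beta$ is affine with slope $n-p+1$ there and sends that block onto $[L_{p-1},L_p]$; the substitution $\sigma=\beta(\tau)$ turns the $p$‑th term into $\int_{s_{[p-1]}^*}^{s_{[p]}^*}f(\beta(\tau))\,d\tau$, and summing up to $k_i$ gives $\Phi(s_i^*)$. Differentiating the same block decomposition (only the top block moves under a small change of $s_i$; at ties I would use one‑sided derivatives, as in the proof of Proposition~\ref{prop:dsc}) gives the marginal identity. Write $\beta_i:=\beta(s_i^*)$ and $B^*:=f^{-1}(\tfrac1{1+r})$ — if this value is not attained on $[0,M]$ then $I_0=\varnothing$ and the $I_0$‑assertions are vacuous.

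Next I would read off an ``optimality dictionary'' from the convex per‑bank problem~\eqref{eq:s*q}: for $i\notin I_a$ the feasible interval is $[l_i,h_i/\bar q_i^*]$ with $l_i=(\tfrac{h_i-a_iq^*}{\bar q_i^*-q^*})^+$, the objective $\psi_i(s)=s+rh_i-(1+r)s\bar f_i(s,\s_{-i}^*)$ is convex with $\psi_i'(s_i^*)=1-(1+r)f(\beta_i)$, and convexity together with the location of the minimiser gives $\beta_i=B^*$ on $I_0$, $\beta_i\le B^*$ on $I_U$, $\beta_i\ge B^*$ on $I_L$; moreover $i\in I_0\cup I_U$ forces $s_i^*\le h_i/\bar q_i^*$, i.e.\ $\Phi(s_i^*)=s_i^*\bar q_i^*\le h_i$, with equality on $I_U$. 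The three claims then fall out: (2) is $\beta_i\le B^*=\beta_j$; (3) is $\beta_i\le B^*\le\beta_j$; and for (1), with $h_i\le h_j$, if $j\in I_U$ then $\Phi(s_i^*)\le h_i\le h_j=\Phi(s_j^*)$ and strict monotonicity of $\Phi$ finishes it, while if $j\in I_0$ then $\beta_i\le B^*=\beta_j$ does.

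I expect the change‑of‑variables identity in the second step to be the main obstacle — it is where the piecewise LOB bookkeeping must be organized correctly, and where the non‑smooth points (several banks liquidating equal amounts) require the one‑sided‑derivative care already used in Proposition~\ref{prop:dsc}. A secondary nuisance is the degenerate case $l_i=h_i/\bar q_i^*$, equivalently $h_i=a_i\bar q_i^*$ (a bank exactly on its solvency boundary, so its feasible interval is a single point), which has to be dispatched separately, much as the degenerate case in the proof of Proposition~\ref{prop:nash-equiv}.
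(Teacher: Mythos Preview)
Your proposal is correct and rests on the same two facts the paper uses: (i) $s\mapsto s\,\bar f_i(s,\s_{-i}^*)$ is strictly increasing (your $\Phi$), and (ii) the unconstrained first-order points coincide across banks (your $\beta_i=B^*$ for $i\in I_0$). The paper argues these more tersely --- (i) is invoked ``by definitions of the LOB pricing,'' and (ii) is obtained by a swapping argument together with the uniqueness result of Proposition~\ref{prop:dsc} --- and then runs essentially the same case split you do. Your change of variables $\sigma=\beta(\tau)$ supplies an explicit, self-contained derivation of the marginal identity $\partial_{s_i}\bigl(s_i\bar f_i\bigr)\big|_{\s^*}=f(\beta_i)$ (which the paper simply states as the FOC~\eqref{eq:LOB-FOC}), and your KKT dictionary $\beta_i\le B^*$, $=B^*$, $\ge B^*$ on $I_U$, $I_0$, $I_L$ unifies all three claims at once without appealing to the earlier uniqueness proposition. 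Substantively the two proofs coincide; what your packaging buys is independence from Proposition~\ref{prop:dsc} and a cleaner treatment of ties via the single monotone map $\beta$.
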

\begin{proof}
\begin{enumerate}
\item Assume there exists some $i,j \in I_0 \cup I_U$ so that $s_i^* > s_j^*$ and $h_i \leq h_j$.  
First, assume $j \in I_U$, then $s_i^* \bar f_i(\s^*) > s_j^* \bar f_j(\s^*) = h_j \geq h_i$ which forms a contradiction.  The inequality holds by definitions of the LOB pricing $\bar f$, and the fact that $s_i^* > s_j^*$, and thus in order statistics the index of $ s_j^*$ is also smaller than that of $s_i^*$. %Similarly, if $i\in I_U$, we have a contradiction since $h_i  = s_i^* \bar f_i(\s^*) > s_j^* \bar f_j(\s^*)  \ge h_j$. Lastly if both $i,j\in I_0$, we have by uniqueness of $s^0$, that $\bar s_i(q^*,\bar q^*) = s_j^0(\s_{-j}^*)$.
%\iffalse
Second, assume $j \in I_0$. Fundamentally, $s_i^0(\s_{-i}^*) = s_j^0(\s_{-j}^*)$ for any pair of banks $i,j$ by a simple contradiction argument of the uniqueness argument of Proposition~\ref{prop:dsc} as otherwise $\bar s_i(q^*,\bar q^*) = s_j^0(\s_{-j}^*)$ and $\bar s_j(q^*,\bar q^*) = s_i^0(\s_{-i}^*)$ is a distinct equilibrium liquidation solution. As $s_i^* = \min\{s_i^0(\s_{-i}^*) , \frac{h_i}{\bar q_i^*}\}$, we recover a contradiction.  
%\fi
%
\item For any bank $i \in I_0 \cup I_U$ then $s_i^* = \min\{s_i^0(\s_{-i}^*),\frac{h_i}{\bar q_i^*}\}$.  As described above, $s_i^0(\s_{-i}^*) = s_j^0(\s_{-j}^*)$ for any pair of banks $i,j$.  Therefore the desired result immediately follows. 
\item By construction $s_j^* \geq s_j^0(\s_{-j}^*) = s_i^0(\s_{-i}^*) \geq s_i^*$ where the equality follows from the equivalence of $s^0$ across banks.
\end{enumerate}
\end{proof}

We will now consider the proof of Theorem~\ref{thm:maximal}.
\begin{proof}[Proof of Theorem~\ref{thm:maximal}]
Within this proof and without loss of generality, we will reorder the banks so that the shortfalls are in increasing order, i.e., $h_1 \leq h_2 \leq ... \leq h_n$.  
Where appropriate within this proof, consider the extension of the order book density function $f(s) := f(s \wedge M)$ to $\R_+$.

Consider, first, the relaxation of~\eqref{eq:s*} without the lower bound for every bank $i$:
\begin{equation}\label{eq:s*relax}
\begin{split}
&\argmin_{s_i \in [0,a_i]} \; s_i \left(1 - \bar f_i(s_i,\s_{-i})\right) + r\left(h_i - s_i \bar f_i(s_i,\s_{-i})\right)\\
&\qquad \text{s.t.} \quad s_i\bar f_i(s_i,\s_{-i}) \leq h_i.
\end{split}
\end{equation}
We will demonstrate that this relaxed problem has a unique equilibrium solution $\hat\s^*$.  First, there exists an equilibrium solution following the same arguments utilized in the proof of Theorem~\ref{thm:exist}.
Now, consider a fixed set $\hat I_a$ of banks liquidating their entire holdings in this relaxed problem with the modified relaxation:
\begin{equation*}
\hat s_i(\hat I_a) := \begin{cases} \argmin_{s_i \geq 0} \left\{s_i \left(1 - \bar f_i(s_i,\hat\s_{-i}(\hat I_a))\right) + r\left(h_i - s_i \bar f_i(s_i,\hat\s_{-i}(\hat I_a))\right) \; | \; s_i \bar f_i(s_i,\hat\s_{-i}(\hat I_a)) \leq h_i\right\} &\text{if } i \not\in \hat I_a \\ a_i &\text{if } i \in \hat I_a \end{cases}
\end{equation*}
Due to the logic of Proposition~\ref{prop:lob-order}, $\hat\s(\hat I_a)$ is unique for any $\hat I_a \subseteq \{1,...,n\}$ as it is constructed sequentially for all banks $i \not\in \hat I_a$.
By construction, $\hat\s^*$ is an equilibrium solution to~\eqref{eq:s*relax} if and only if $\hat\s^* = \hat\s(\hat I_a^*)$ for $I_a^* = \{i \; | \; \hat s_i^* = a_i\}$.
Assume now there exist two distinct equilibrium solutions $\hat\s^1 \neq \hat\s^2$ to the relaxed problem~\eqref{eq:s*relax}.  By the prior argument, this can occur if and only if these equilibria differ on the set of fully liquidating banks $\hat I_a^1 \neq \hat I_a^2$.  Take the minimal index $i = \min [(\hat I_a^1 \cup \hat I_a^2) \backslash (\hat I_a^1 \cap \hat I_a^2)]$; without loss of generality assume $i \in \hat I_a^1$.  
By the logic of Proposition~\ref{prop:lob-order}, and a sequential construction of $\hat\s(\cdot)$, it must follow that $\hat s_j^1 = \hat s_j^2$ for any $j < i$.  Letting $s_i^U(\s) := \inf\{s_i \in \R_+ \; | \; s_i \bar f_i(s_i,\s_{-i}) \geq h_i\}$.  We immediately conclude
\begin{align*}
\hat s_i^1 &= \min\{a_i \; , \; s_i^0(\hat s^1_1,..., \hat s^1_{i-1},a_{i+1},....,a_n) \; , \; s_i^U(\hat s^1_1,..., \hat s^1_{i-1},a_{i+1},....,a_n)\} \\
&= \min\{a_i \; , \; s_i^0(\hat s^2_1,..., \hat s^2_{i-1},a_{i+1},....,a_n) \; , \; s_i^U(\hat s^2_1,..., \hat s^2_{i-1},a_{i+1},....,a_n)\} = \hat s_i^2,
\end{align*}
%where $J_i^+ := \{j > i\}, ~J_i^- := \{j < i\}.$ 
which is a contradiction and thus it must follow that~\eqref{eq:s*relax} has a unique equilibrium solution.

Denote the unique equilibrium liquidations of the relaxed problem~\eqref{eq:s*relax} by $\hat\s^*$.  
We will now demonstrate that $\s^*$ is an equilibrium liquidation of~\eqref{eq:s*} if and only if $\s^* = L(\s^*)$ for $L: [\hat\s^*,\a] \to [\hat\s^*,\a]$ defined component-wise as 
$$L_i(\s^*) := \inf\{s_i \in [\hat s_i^*,a_i] \; | \; s_i \bar f_i(s_i,\s_{-i}^*) + (a_i - s_i)g(s_i + \sum_{j\neq i} s_j^*) \geq h_i \wedge (a_i \bar f_i(a_i,\s_{-i}^*))\}.$$
First, assume $\s^* = L(\s^*)$ and let $s_i^L(\s) := \inf\{s_i \in \R_+ \; | \; s_i \bar f_i(s_i,\s_{-i}) + (a_i - s_i) g(s_i + \sum_{j \neq i} s_j) \geq h_i \wedge (a_i \bar f_i(a_i,\s_{-i}))\}$.  Then $\s^* = L(\s^*) = \s^L(\s^*) \vee \hat\s^* = \s^L(\s^*) \vee \left[\s^0(\hat\s^*) \wedge \s^U(\hat\s^*) \wedge \a\right]$.  In fact, by Proposition~\ref{prop:lob-order} and construction of the LOB inverse demand function, $\s^0(\hat\s^*) = \s^0(\s^*)$ and $\s^U(\hat\s^*) = \s^U(\s^*)$.  Therefore, by construction of~\eqref{eq:s*}, $\s^*$ defines an equilibrium liquidation strategy.
Second, assume $\s^*$ is an equilibrium liquidation strategy of~\eqref{eq:s*}.  By the same argument as before, $\s^U(\hat\s^*) = \s^U(\s^*)$ and $\s^0(\hat\s^*) = \s^0(\s^*)$.  Therefore, $\s^* = \s^L(\s^*) \vee \left[\s^0(\s^*) \wedge \s^U(\s^*) \wedge \a\right] = \s^L(\s^*) \vee \hat\s^* = L(\s^*)$.
Finally, the desired result follows by Tarski's fixed point theory as $L: [\hat\s^*,\a] \to [\hat\s^*,\a]$ is nondecreasing by construction of the LOB and haircut functions.

\end{proof}

\subsection{Proof of Theorem~\ref{thm:unique}}\label{sec:proof3}
To show uniqueness, we consider the equilibrium prices, as a mapping of $(q^*,\bar \q^*)$ to liquidating positions of banks $\bar \s(q^*,\bar \q^*)$, and then to the resulting prices, and show the uniqueness of a fixed point to this mapping. 
To simplify notation throughout this proof, let $\Q_i,~i\in\{1,n\},$ denote the set of attainable prices. The case of VWAP corresponds to $i=1$ and $\Q_1:= \left\{(g(s),\hat f(s)) \; | \; s \in [0,M]\right\}$ and the case of LOB corresponds to $i=1$, and  $\Q_n:= \left\{(g(\sum_{i = 1}^n s_i),\bar f_1(\s), ..., f_n(\s)) \; | \; \s \in \D\right\}$.
Moreover, for convenience define
\begin{align}
I_0& \define \left\{ i\in\{1, ..., n\} \; | \; \bar s_i = s_i^0\right\},~I_U \define \left\{ i\in\{1, ..., n\} \; | \; \bar s_i = \frac{h_i}{\bar q_i}  \right\},\label{eq:def-I1}\\
I_a& \define\left \{ i\in\{1, ..., n\} \; | \; \bar s_i = a_i\right\},~I_L \define \left\{ i\in\{1, ..., n\} \; | \; \bar s_i =  \left(\frac{h_i - a_i q}{\bar q_i - q}\right)^+ \right\}.\label{eq:def-I2}\
\end{align}
As before, we divide the proof into the VWAP and LOB cases:
\subsubsection{Volume weighted average price}\label{sec:subsecproof3}
\begin{proof}
We first fix $\bar q = \hat f(s),~q = g(s)$ for some $s \in [0,M]$ (recall the definition of $\hat f$ from~\eqref{eq:VWAP-sum}) and look for an equilibrium $\bar s_i(q,\bar q {\bf{1}}_n) = s_i^{*}(\sum_{j\ne i}\bar s_j(q,\bar q {\bf{1}}_n),q,\bar q{\bf{1}}_n)$ for all $i=1, ..., n$. That is for the modified Nash equilibrium given by \eqref{eq:s*q} and formulated explicitly in \eqref{eq:s_i*-vwap}.

The next goal is to show $(q,\bar q) \mapsto (\Phi(q,\bar q),\bar \Phi(q,\bar q)) =(g(\sum_{j = 1}^n \bar s_j(q,\bar q {\bf{1}}_n)), \hat f(\sum_{j=1}^n \bar s_j(q,\bar q {\bf{1}}_n) )),  $ is a contraction mapping. That is, our goal is to show that $\abs{ \bar \Phi(q^1,\bar q^1) - \bar \Phi(q^2,\bar q^2)} \leq \bar L \norm{(q^1,\bar q^1) -(q^2,\bar q^2)}_{\infty}$, and $\abs{ \Phi(q^1,\bar q^1) -  \Phi(q^2,\bar q^2)} \leq L \norm{(q^1,\bar q^1) -(q^2,\bar q^2)}_{\infty}$ with $L,\bar L < 1$  for any attainable set of prices $(q^1,\bar q^1),$ $(q^2,\bar q^2) \in \Q_1$.  Without loss of generality, for this proof we will assume $q^1 \leq q^2$; therefore $(q^1,\bar q^2) \in \Qh$ as well.

Indeed, with the convention that $0/0 = 0$:
\begin{align}
\frac{\abs{\bar \Phi(q^1,\bar q^1) - \bar \Phi(q^2,\bar q^2)}}{\norm{(q^1,\bar q^1) - (q^2,\bar q^2)}_{\infty}} &\leq \frac{\abs{\bar \Phi(q^1,\bar q^1) - \bar \Phi(q^1,\bar q^2)}}{\norm{(q^1,\bar q^1) - (q^2,\bar q^2)}_{\infty}} + \frac{\abs{\bar \Phi(q^1,\bar q^2) - \bar \Phi(q^2,\bar q^2)}}{\norm{(q^1,\bar q^1) - (q^2,\bar q^2)}_{\infty}} \\
&\leq \frac{\abs{\bar \Phi(q^1,\bar q^1) - \bar \Phi(q^1,\bar q^2)}}{\abs{\bar q^1 - \bar q^2}} + \frac{\abs{\bar \Phi(q^1,\bar q^2) - \bar \Phi(q^2,\bar q^2)}}{\abs{q^1 - q^2}} \\
&= \frac{1}{\abs{\bar q^1 - \bar q^2}}\abs{\hat f\left(\sum_{j = 1}^n \bar s_j(q^1,\bar q^1 {\bf{1}}_n)\right) - \hat f\left(\sum_{j = 1}^n \bar s_j(q^1,\bar q^2 {\bf{1}}_n)\right)}\\
    &\qquad + \frac{1}{\abs{q^1 - q^2}}\abs{\hat f\left(\sum_{j = 1}^n \bar s_j(q^1,\bar q^2 {\bf{1}}_n)\right) - \hat f\left(\sum_{j = 1}^n \bar s_j(q^2,\bar q^2 {\bf{1}}_n)\right)}\\
&\leq -\hat f'(0)\left(\max_{(q,\bar q) \in \Q_1} \abs{\sum_{j = 1}^n \d_{\bar q} \bar s_j(q,\bar q {\bf{1}}_n)} + \max_{(q,\bar q) \in \Q_1} \abs{\sum_{j = 1}^n \d_q \bar s_j(q,\bar q {\bf{1}}_n)}\right). \label{eq:contract1}
\end{align}
Similarly for $\Phi(q,\bar q)$. 
Thus to be a contraction mapping, it is sufficient to show that 
\begin{align}
-\hat f'(0) \(\max_{(q,\bar q)\in \Q_1 } \abs{ \sum_{j=1}^n\d_{\bar q}\bar s_j(q,\bar q {\bf{1}}_n)} +\max_{(q,\bar q)\in\Q_1 } \abs{ \sum_{j=1}^n\d_{q}\bar s_j(q,\bar q {\bf{1}}_n)}\)  <1,\\
-g'(0) \(\max_{(q,\bar q)\in\Q_1 } \abs{ \sum_{j=1}^n\d_{\bar q}\bar s_j(q,\bar q {\bf{1}}_n)} +\max_{(q,\bar q)\in \Q_1} \abs{ \sum_{j=1}^n\d_{q}\bar s_j(q,\bar q {\bf{1}}_n)}\)  <1.
%\label{eq:contract2}
\end{align}

In order to show this, consider the sensitivity of $\bar s(q,\bar q {\bf{1}}_n)$ with respect to $q,\bar q$. Recall the construction of $\bar s$ given by \eqref{eq:s_i*-vwap}. Recall the definitions of $I_U, I_L, I_0$ from \eqref{eq:def-I1} and \eqref{eq:def-I2}.
Assume that $a_i$, $\frac{h_i}{\bar q}$, $s_i^0(\sum_{j \neq i} \bar s_j(q,\bar q {\bf{1}}_n) )$, $\frac{h_i - a_i q}{\bar q - q}$ are all different for all $i=1,..., n$, so that together with the continuity of $s^0$ it follows that $\bar s$ is differentiable with respect to $q, \bar q$ and its derivatives for a given bank $i$ are given by %\xblue{the following system of equations}
\begin{align}
\d_{\bar q}\bar s_i(q,\bar q {\bf{1}}_n) = &\Bigg(-\ind_{\{i\in I_U\}} \frac{h_i}{\bar q^2} -\ind_{\{i\in I_L\}} \frac{h_i - a_i q}{(\bar q-q)^2} + (s_i^0)'(\sum_{j\ne i} \bar s_j(q,\bar q {\bf{1}}_n)) (\sum_{j\ne i} \d_{\bar q}\bar s_j(q,\bar q {\bf{1}}_n))  \ind_{\{i\in I_0\}}    \Bigg),
\\
\d_{q}\bar s_i(q,\bar q {\bf{1}}_n) = &\Bigg(\ind_{\{i\in I_L\}} \frac{h_i - a_i \bar q}{(\bar q-q)^2} + (s_i^0)'(\sum_{j\ne i} \bar s_j(q,\bar q {\bf{1}}_n)) (\sum_{j\ne i} \d_{q}\bar s_j(q,\bar q {\bf{1}}_n))  \ind_{\{i\in I_0\}}    \Bigg).\label{eq:s-bar-vwap}
\end{align}
Here, 
the derivative of the optimal liquidations ($s_i^0(s_{-i})$) can be found via implicit differentiation:
$
(s_i^0)'(s_{-i}) = -\frac{\hat f'(s_{-i} + s_i^0(s_{-i})) + s_i^0(s_{-i}) \hat f''(s_{-i} + s_i^0(s_{-i}))}{2\hat f'(s_{-i} + s_i^0(s_{-i})) + s_i^0(s_{-i}) \hat f''(s_{-i} + s_i^0(s_{-i}))}.%\label{eq:s0-prime}
$
Therefore  $(s_i^0)'(s_{-i})\in (-1,0]$ for all banks $i$ such that $\bar s_i = s_i^0$ if $\hat f'(s) + s \hat f''(s) \le0$ for every $s \in [0,M]$. %\xred{and any $x \in [0,M]$. It is sufficient to consider $x=0$, in other words, it is sufficient that the condition $f'(s) + s f''(s) \blue{\leq}\xred{<}0$ holds}.

Solving the system \eqref{eq:s-bar-vwap}, it follows that %for $i=1, ...,n$
\begin{align}
&\d_{\bar q}\bar \s(q,\bar q) \\
&= -\( I - \diag\( \[(s_i^0)'(\sum_{j\ne i} \bar s_j(q,\bar q {\bf{1}}_n)) (\sum_{j\ne i} \d_{\bar q}\bar s_j(q,\bar q {\bf{1}}_n))  \ind_{\{i\in I_0\}} \]_{i = 1,...,n}\)\({\bf{1}}_{n\times n} - I\)\)^{-1} \\
&\quad\times\(\diag\(\[\ind_{\{i\in I_U\}} \]_{i=1,...,n} \)\frac{\h}{\bar q^2} +\diag\(\[\ind_{\{i\in I_L\}} \]_{i=1,...,n} \)\frac{\h - q\a}{(\bar q - q)^2}\),~~~~~~~
\label{eq:bar-s'}\\
&\d_{q}\bar \s(q,\bar q) \\
&= \( I - \diag\( \[(s_i^0)'(\sum_{j\ne i} \bar s_j(q,\bar q {\bf{1}}_n)) (\sum_{j\ne i} \d_{q}\bar s_j(q,\bar q {\bf{1}}_n))  \ind_{i\in I_0\}}\]_{i = 1,...,n}\)\({\bf{1}}_{n\times n} - I\)\)^{-1} \\
&\quad\times\diag\(\[\ind_{i\in I_L\}} \]_{i=1,...,n} \)\frac{\h -  \bar q\a}{(\bar q - q)^2}.
\end{align}
Using the fact that $(s_i^0)'(s_{-i})\in (-1,0]$ for $i=1,..., n$ as follows from the sufficient assumption of the theorem, it thus follows that 
\begin{equation*}
\resizebox{1\textwidth}{!}{$
\begin{aligned}[t]
&\left|{\bf{1}}_n^\T \d_{\bar q}\bar \s(q,\bar q)\right| \label{eq:bar-s'-bnd}\\
&\le \max_{\vec d\in [0,1)^n} \left|{\bf{1}}_n^\T (I + \diag(\vec d)({\bf{1}}_{n\times n} - I))^{-1}\( \diag\(\[\ind_{\{d_i=0,i\in I_U\}}\]_{i=1,...,n} \)\frac{\h}{\bar q^2}+\diag\(\[\ind_{\{d_i=0,i\in I_L\}}\]_{i=1,...,n} \)\frac{\h-q\a}{(\bar q-q)^2}\)\right|.
\end{aligned}
$}
\end{equation*}
To compute this maximum, let $B(\vec d) := I + \diag(\vec d)({\bf{1}}_{n\times n} - I) = \diag\({\bf{1}}_n - \vec d\) + \vec d {\bf{1}}_n^\T$. 
By the Sherman-Morrison formula %\xred{, $B^{-1} = A^{-1} -\frac1{1+{\bf{1}}_{n\times n} A^{-1} d} A^{-1}d {\bf{1}}_{n\times n}^\T A^{-1}.$ This can be calculated as}
$B(\vec d)^{-1} = \diag\({\bf{1}}_n-\vec d\)^{-1} -\frac1{1+{\bf{1}}_n^\T \diag\({\bf{1}}_n-d\)^{-1} \vec d} \diag\({\bf{1}}_n-\vec d\)^{-1}\vec d {\bf{1}}_n^\T \diag\({\bf{1}}_n-\vec d\)^{-1}$. %\\
It now follows that for any $j = 1,...,n$
\begin{align}
\sum_{i = 1}^n \(B(\vec d)^{-1}\)_{ij} \ind_{\{d_j = 0\}} &= \frac{1}{1 + \sum_{k = 1}^n \frac{d_k}{1-d_k}} \(1 + \sum_{k = 1}^n \frac{d_k}{1-d_k} - \sum_{k \neq j} \frac{d_k}{1-d_k}\) \ind_{\{d_j = 0\}} = \frac{\ind_{\{d_j = 0\}}}{1 + \sum_{k = 1}^n \frac{d_k}{1-d_k}}. ~~~\label{eq:B-bnd}
\end{align}
Together with Remark \ref{remark:solvency} we conclude that 
\begin{align}
&\max_{(q,\bar q)\in \Q_1}\abs{{\bf{1}}_n^\T \d_{\bar q}\bar \s(q,\bar q)}  \le\max_{(q,\bar q)\in \Q_1,\vec d\in [0,1)^n}\Bigg\vert {\bf{1}}_n^\T B(\vec d)^{-1}\label{eq:deriv-bnd}\\
&\qquad\qquad\qquad\qquad\times \( \diag\(\[\ind_{\{d_i=0,i\in I_U\}}\]_{i=1,...,n} \)\frac{\h}{\bar q^2}+\diag\(\[\ind_{\{d_i=0,i\in I_L\}}\]_{i=1,...,n} \)\frac{\h-q \a}{(\bar q-q)^2}\) \Bigg\vert \\
&\le \max_{\bar q\in [\hat f(M),1]}\abs{{\bf{1}}_n^\T\frac{\frac{\h}{\bar q}\wedge \a} {\bar q}} + \max_{(q,\bar q)\in\Q_1}\abs{{\bf{1}}_n^\T\frac{\frac{\h-q \a}{\bar q-q}\wedge \a} {\bar q-q}}\le \max_{\bar q\in [\hat f(M),1]} \frac{\sum_{i=1}^n a_i }{\bar q} +\max_{(q,\bar q)\in\Q_1} \frac{\sum_{i=1}^n a_i }{\bar q-q} \\
&\le \frac{M}{\hat f(M)} + \frac{M}{\min_{s \in [0,M]} \left(\hat f(s) - g(s)\right)}.
\end{align}
Similarly,
\begin{align}
&\max_{(q,\bar q)\in \Q_1}\abs{{\bf{1}}_n^\T \d_{q}\bar \s(q,\bar q)}  \le\max_{(q,\bar q)\in \Q_1,\vec d\in [0,1)^n}\Bigg\vert {\bf{1}}_n^\T B(\vec d)^{-1}
 \diag\(\[\ind_{\{d_i=0,i\in I_L\}}\]_{i=1,...,n} \)\frac{\h-\bar q\a}{(\bar q-q)^2} \Bigg\vert \\
&\le  \max_{(q,\bar q)\in\Q_1}\abs{{\bf{1}}_n^\T\frac{\frac{\h-\bar q\a}{\bar q-q}} {\bar q-q}}\le \max_{(q,\bar q)\in\Q_1} \frac{\sum_{i=1}^n a_i }{\bar q-q} \le  \frac{M}{\min_{s \in [0,M]} \left(\hat f(s) - g(s)\right)},
\end{align}
where in the last inequality we have used that fact that $a_i\ge \frac{h_i-a_i q}{\bar q-q}\ge \frac{h_i-a_i  \bar q}{\bar q-q} = -a_i + \frac{h_i-a_i q}{\bar q-q}\ge -a_i. $
Recalling \eqref{eq:contract1}, we conclude that $(\Phi,\bar \Phi)$ is a contraction mapping if $-3 M (\hat f'(0)\wedge \bar g'(0)) < \min_{s \in [0,M]} \left(\hat f(s)-g(s)\right)$. Finally, it can be seen that $\hat f'(s) =\frac{f(s) - \hat f(s) }{s}$. Therefore, $\hat f'(0) = \frac12 f'(0).$  

Recall that it was assumed that $a_i, \frac{h_i}{\bar q} , \frac{h_i-a_i q}{\bar q-q},s_i^0(\sum_{j \neq i} \bar s_j(q,\bar q {\bf{1}}_n) )$ are all different. If this assumption is violated, say $  s_i^0(\sum_{j \neq i} \bar s_j(q,\bar q {\bf{1}}_n) )<\frac{h_i-a_i q}{\bar q-q}= \frac{h_i}{\bar q}$, then we need to consider one-sided derivatives. In that case, the derivative from the right $\d_{\bar q+} \bar s_i(q,\bar q {\bf{1}}_n) =-\frac{h}{\bar q^2}$, while the derivative from the left $\d_{\bar q-}\bar s_i(q,\bar q {\bf{1}}_n) = -\frac{h_i-a_i q}{(\bar q-q)^2}.$ In this case, both one-sided derivatives would satisfy \eqref{eq:deriv-bnd}. The other cases, can be treated similarly. 

\end{proof}

\subsubsection{Limit order book}
\begin{proof}
We first fix $\bar \q = \bar \f(\s),~q = g(\sum_{i = 1}^n s_i)$ for some $\s \in \D$ and look for an equilibrium $\bar s_i(q,\bar \q) = s_i^{*}(\bar \s_{-i}(q,\bar \q),q,\bar \q)$ which is explicitly provided by 
\begin{equation}
\label{eq:s_i*-lob}
\bar s_i(q,\bar \q) = \begin{cases} a_i &\text{if } h_i \geq a_i \bar q_i, \\ \left(\frac{h_i - a_i q}{\bar q_i - q}\right)^+ \vee \left[s_i^0(\bar\s_{-i}(q,\bar\q)) \wedge \frac{h_i}{\bar q_i}\right] &\text{if } h_i < a_i \bar q_i \end{cases}
\end{equation}
where $s_i^0(\bar \s_{-i})$ solves the first order condition 
\begin{align}
1 - (1+r) f\(\sum_{j = 1}^n \ind_{\{\bar s_j < s_i\}}\bar s_j + (n - (k_i - 1))s_i\) = 0,
\nonumber %\label{eq:LOB-FOC}
\end{align}
where we recall that $k_i$ is such that $s_{[k_i]} = s_i$.
For simplicity, we will continue to assume that $\bar q_1 \ge \bar q_2 \ge ... \ge \bar q_n$. 

The next goal is to show $(q,\bar \q) \mapsto (\Phi(q,\bar \q),\bar \Phiv(q,\bar \q)) =(g(\sum_{j = 1}^n \bar s_j(q,\bar \q)), \bar \f_1(\bar \s(q,\bar \q)))$ is a contraction mapping, i.e., to show that $\norm{ \bar \Phiv(q^1,\bar \q^1) - \bar \Phiv(q^2,\bar \q^2)}_{\infty} \le \bar L \norm{(q^1,\bar \q^1) -(q^2,\bar \q^2)}_{\infty}$ and $\abs{ \Phi(q^1,\bar \q^1) -  \Phi(q^2,\bar \q^2)} \le L \norm{(q^1,\bar \q^1) -(q^2,\bar \q^2)}_{\infty}$ with $L,\bar L < 1$ for any $(q^1,\bar\q^1),(q^2,\bar\q^2) \in \Q_n$.  Without loss of generality, for this proof we will assume $q^1 \leq q^2$; therefore $(q^1,\bar\q^2) \in \Qh$.
%\xblue{The next goal is to show $(q,\bar \q) \mapsto (\Phi(q,\bar \q),\bar \Phiv(q,\bar \q)) =(g(\sum_{j = 1}^n \bar s_j(q,\bar \q)), \bar \f_1(\bar \s(q,\bar \q)))$ is a contraction mapping. Our goal is to show that $\norm{ \bar \Phiv(q^1,\bar \q) - \bar \Phiv(q^2,\bar \q)}+\norm{\bar \Phiv(q,\bar \q^1) - \bar \Phiv(q,\bar \q^2)} \le \bar L \norm{\bar \q^1 -\bar \q^2}\vee\abs{ q^1 - q^2} ,$ and $\abs{ \Phi(q,\bar \q^1) -  \Phi(q,\bar \q^2)}+\abs{  \Phi(q^1,\bar \q) - \Phi(q^2,\bar \q)} \le L \norm{\bar \q^1 -\bar \q^2}\vee\abs{ q^1 -q^2},$ with $L,\bar L < 1$ for any \blue{$(q^1,\bar\q^1),(q^2,\bar\q^2) \in \Q_n$ and $(q,\bar\q) \in \Qh$ such that $(q^1,\bar\q),(q^2,\bar\q),(q,\bar\q^1),(q,\bar\q^2) \in \Qh$.} From this it follows that $(\Phi,\bar \Phiv)$ is a contraction mapping in the $\vert\vert \cdot\vert\vert_\infty$ norm.}

Indeed, with the convention that $0/0 = 0$, for any $1 \le j \le n$: 
\begin{align}
&\frac{\abs{\bar \Phi_j(q^1,\bar \q^1) - \bar \Phi_j(q^2,\bar \q^2)}}{\norm{(q^1,\bar \q^1) - (q^2,\bar \q^2)}_{\infty}} \leq \frac{\abs{\bar \Phi_j(q^1,\bar \q^1) - \bar \Phi_j(q^1,\bar \q^2)}}{\norm{(q^1,\bar \q^1) - (q^2,\bar \q^2)}_{\infty}} + \frac{\abs{\bar \Phi_j(q^1,\bar \q^2) - \bar \Phi_j(q^2,\bar \q^2)}}{\norm{(q^1,\bar \q^1) - (q^2,\bar \q^2)}_{\infty}} \\
&\qquad \leq \sum_{k = 1}^n \frac{\abs{\bar \Phi_j(q^1,\bar \q^2_{\{1,...,k-1\}},\bar \q^1_{\{k,...,n\}}) - \bar \Phi_j(q^1,\bar \q^2_{\{1,...,k\}},\bar \q^1_{\{k+1,...,n\}})}}{\norm{\bar \q^1 - \bar \q^2}_{\infty}} + \frac{\abs{\bar \Phi_j(q^1,\bar \q^2) - \bar \Phi_j(q^2,\bar \q^2)}}{\abs{q^1 - q^2}} \\
&\qquad \leq \sum_{k = 1}^n \frac{\abs{\bar \Phi_j(q^1,\bar \q^2_{\{1,...,k-1\}},\bar \q^1_{\{k,...,n\}}) - \bar \Phi_j(q^1,\bar \q^2_{\{1,...,k\}},\bar \q^1_{\{k+1,...,n\}})}}{\abs{\bar q^1_{k} - \bar q^2_{k}}} + \frac{\abs{\bar \Phi_j(q^1,\bar \q^2) - \bar \Phi_j(q^2,\bar \q^2)}}{\abs{q^1 - q^2}} \\
&\qquad = \sum_{k = 1}^n \frac{1}{\abs{\bar q^1_{k} - \bar q^2_{k}}} \abs{\bar f_j\left(\bar\s(q^1,\bar \q^2_{\{1,...,k-1\}},\bar \q^1_{\{k,...,n\}})\right) - \bar f_j\left(\bar\s(q^1,\bar \q^2_{\{1,...,k\}},\bar \q^1_{\{k+1,...,n\}})\right)} \\
&\qquad\qquad + \frac{1}{\abs{q^1 - q^2}} \abs{\bar f_j\left(\bar\s(q^1,\bar \q^2)\right) - \bar f_j\left(\bar\s(q^2,\bar \q^2)\right)} \\
&\qquad \leq \sum_{k = 1}^n \max_{(q,\bar\q) \in \Q_n} \abs{\sum_{i = 1}^n \inf_{\s\in\D^o}\d_{s_i}\bar f_j( \s)\d_{\bar q_k} \bar s_i(q,\bar \q)} + \max_{(q,\bar\q)\in \Q_n} \abs{\sum_{i = 1}^n  \inf_{\s\in\D^o}\d_{s_i}\bar f_j( \s)}\d_q \bar s_i(q,\bar \q).
%&= \frac{1}{\abs{\bar \q^1 - \bar \q^2}}\norm{\bar \f\left(\sum_{j = 1}^n \bar s_j(q^1,\bar \q^1)\right) - \bar \f\left(\sum_{j = 1}^n \bar s_j(q^1,\bar \q^2)\right)}\\
%    &\qquad + \frac{1}{\abs{q^1 - q^2}}\abs{\bar \f\left(\sum_{j = 1}^n \bar s_j(q^1,\bar \q^2)\right) - \bar \f\left(\sum_{j = 1}^n \bar s_j(q^2,\bar \q^2)\right)}\\
%&\leq -\hat f'(0)\left(\max_{(q,\bar \q) \in \Q_1} \abs{\sum_{j = 1}^n \d_{\bar \q} \bar s_j(q,\bar \q)} + \max_{(q,\bar \q) \in \Q_1} \abs{\sum_{j = 1}^n \d_q \bar s_j(q,\bar \q)}\right). \label{eq:contract1}
\end{align}
%\blue{Specify that all the above derivatives are at zero?}
Similarly for $\Phi(q,\bar \q)$. 
Thus to be a contraction mapping, it is sufficient to show that for every $j=1, ..., n$
\begin{align}
 \sum_{k=1}^n\max_{(q,\bar q)\in \Q_n} \abs{ \sum_{i=1}^n\inf_{\s\in\D^o}\d_{s_i}\bar f_j( \s) \d_{\bar q_k}\bar s_i(q,\bar \q)}+ \max_{(q,\bar q)\in \Q_n} \abs{ \sum_{i=1}^n\inf_{\s\in\D^o}\d_{s_i}\bar f_j( \s)\d_q\bar s_i(q,\bar \q)}  <1,\label{eq:contract2.1}\\
-g'(0) \left(\sum_{k=1}^n\max_{(q,\bar q)\in \Q_n} \abs{ \sum_{i=1}^n\d_{\bar q_k}\bar s_i(q,\bar \q)}- \max_{(q,\bar q)\in\Q_n } \abs{ \sum_{i=1}^n\d_q\bar s_i(q,\bar \q)} \right)  <1.
\label{eq:contract2.2}
\end{align}

In order to show this, consider the sensitivity of $\bar \s(q,\bar \q)$ with respect to $q,\bar \q$. %Recall the construction of $s^{*}$ given by \eqref{eq:s*}; here we will replace $s_i^L(s_{-i})$ with $h_i/q$. 
Recall again the definitions of $I_U, I_L, I_0$ from \eqref{eq:def-I1} and \eqref{eq:def-I2}.
Assume that $a_i, \frac{h_i}{\bar q_i} , s_i^0(\bar \s_{-i}(q,\bar \q) ),$  $\frac{h_i - a_i q}{\bar q_i - q}$ are all different for all $i=1,..., n$. Note that for different $i$, some of these quantities may be equal, namely, we must have $s_i^0 =s_j^0$, since if there is a solution $s^0$, it is unique. Similar to the proof in Section \ref{sec:subsecproof3}, otherwise, one sided derivatives can be considered.
Together with the continuity of $s^0$ it follows that $\bar s_i$ is differentiable with respect to $q, \bar \q$ and its derivatives for a given bank $i$ are given by %\xblue{the following system of equations}
\begin{equation*}
\resizebox{1\textwidth}{!}{$
\begin{aligned}[t]
\d_{\bar q_k}\bar s_i(q,\bar \q) = &\Bigg(-\ind_{\{i=k,i\in I_U\}} \frac{h_i}{\bar q_i^2} -\ind_{\{i=k,i\in I_L\}} \frac{h_i - a_i q}{(\bar q_i-q)^2} + \nabla s_i^0 (\bar \s_{-i}(q,\bar \q))  \cdot 
\[ \d_{\bar q_k} \bar s_{j} (q,\bar \q) \ind_{\{\bar s_j < s_i^0\}}   \]_{j=1, ..., n, j\ne i}
\ind_{\{i\in I_0\}}    \Bigg),
\label{eq:s-bar-lob}\\
\d_q\bar s_i(q,\bar \q) = &\Bigg(\ind_{\{i\in I_L\}} \frac{h_i - a_i \bar q_i}{(\bar q_i-q)^2} +  \nabla s_i^0 (\bar \s_{-i}(q,\bar \q))  \cdot \d_q \bar \s_{-i} (q,\bar \q)   
\ind_{\{i\in I_0\}}    \Bigg).
\end{aligned}
$}
\end{equation*}
Here, 
the derivative of the optimal liquidations ($s_k^0(\s_{-i})$) can be found via implicit differentiation of $1 - (1+r) f\(\sum_{j = 1}^n \ind_{\{\bar s_j < s_i^0\}} \bar s_j + (n - k_{i+1})s_i\) = 0$ to be
$\d_{s_j} s_i^0(\s_{-i}) =  -\frac{ \ind_{\{\bar s_j < s_i^0\}} }{n - (k_i-1)}$.
Set $\bar q' =\min_{j,k}\d_{s_k} \bar f_j({\bf{0}}_n) <0.$ 
%Similarly to before, we have that $\d_{s_k} \bar f_j({\bf{0}}_n) = \frac12 f'({\bf{0}}_n)\ind_{\{1\le k\le j\}}.$ 
Recall that  $\frac{h_i}{\bar q_i^2}, \frac{h_i - a_i q}{(\bar q_i-q)^2} $$ \le \frac{a_i}{\min_{\s \in \D} \left(\bar f_j(\s) - g(\sum_{i = 1}^n s_i)\right)}.$ 
Thus, for any $i_0\in I_0$, we have that $\sum_{i\in I_0} \d_{\bar q_{j}} \bar s_i(q,\bar \q)>-  \ind_{\{\bar s_j < s_{i_0}^0\}} \sum\limits_{ k\in\{k \colon \bar s_k<s_{i_0}^0\} }  \frac{a_k}{\min_{\s \in \D} \left(\bar f_j(\s) - g(\sum_{i = 1}^n s_i)\right)}$, for $j=1, ... , n.$ 
Therefore, we have that $ \abs{\sum_{k=l}^m\d_{\bar q_j}\bar s_k(q,\bar \q)} \le \sum\limits_{ k\in\{k \colon \bar s_k\ne s_{i_0}^0\} } \frac{a_k}{\min_{\s \in \D} \left(\bar f_j(\s) - g(\sum_{i = 1}^n s_i)\right)} \le \frac{M}{\min_{\s \in \D} \left(\bar f_j(\s) - g(\sum_{i = 1}^n s_i)\right)},$
for any $1\le l\le m \le n.$ 
 %\le \frac{n^2}{2} 2 \frac{M}{\min_{\s \in \D} \left(\bar f_j(\s) - g(\sum_{i = 1}^n s_i)\right)}$

We conclude that for any $j=1, ..., n$, we have that
$$\sum_{k=1}^n\max_{(q,\bar q)\in \Q_n} \abs{ \sum_{i=1}^n\inf_{\s\in\D^o}\d_{s_i}\bar f_j( \s)\d_{\bar q_k}\bar s_i(q,\bar \q)} \le  \abs{\bar q'} \frac{n M}{\min_{\s \in \D} \left(\bar f_j(\s) - g(\sum_{i = 1}^n s_i)\right)}.$$
Similarly, since
$ \abs{ \sum_{i=1}^n \d_q\bar s_i(q,\bar \q)} \le   \frac{M}{\min_{\s \in \D,m} \left(\bar f_m(\s) - g(\sum_{i = 1}^n s_i)\right)},$ we get that
$$\max_{(q,\bar q)\in \Q_n} \abs{ \sum_{i=1}^n\inf_{\s\in\D^o}\d_{s_i}\bar f_j( \s)\d_q\bar s_i(q,\bar \q)} \le \abs{\bar q'} \frac{M}{\min_{\s \in \D} \left(\bar f_j(\s) - g(\sum_{i = 1}^n s_i)\right)}.$$ 
Recalling \eqref{eq:contract2.1} we conclude that $\bar \Phiv$ is a contraction mapping if 
$$
- nM \min_{i,j}\inf_{\s\in\D^o}\d_{s_i}\bar f_j( \s)    <\min_j \min_{\s \in \D} \left(\bar f_j(\s)-g(\sum_{i = 1}^n s_i)\right).
$$ 
Finally, a technical by a straightforward calculation reveals that $\inf_{\s\in\D^o}\d_{s_i}\bar f_j( \s) =\frac{n f'(0)}{2},$ and we conclude that the condition becomes $-\frac{n^2 f'(0)}{2}   <\min_j \min_{\s \in \D} \left(\bar f_j(\s)-g(\sum_{i = 1}^n s_i)\right).
$

Similar 
$- nM g'(0)   < \min_j \min_{\s \in \D} \left(\bar f_j(\s)-g(\sum_{i = 1}^n s_i)\right)$ ensures that $\Phi$ is a contraction mapping.

\end{proof}

\section{Sensitivity of the clearing solutions to interest rates $r$}\label{sec:r}
In this section we consider the assumptions of Theorem \ref{thm:unique} with the goal of investigating the sensitivity of the (unique) equilibrium to the interest rate $r$. 
This provides the first-order impacts of lenders on the clearing solutions as the lenders interact with the borrowers in the model proposed herein through the repo rate $r$ only.  Therefore, though we do not model potential lenders (i.e., banks with negative shortfall), we can make some conclusions on how the system behavior will change based on the amount of cash available to the lenders -- the more cash available the lower the repo rate $r$.
To simplify notation, for this section we write $\bar s_i := \bar s_i(q,\bar q {\bf{1}}_n)$ or $\bar s_i := \bar s_i(q,\bar \q)$ where the values of $(q,\bar q)$ and $(q,\bar \q)$ is clear from context for the VWAP and LOB settings respectively.
In the following, we derive $\d_r \bar \s$, the derivatives of the equilibrium liquidations w.r.t.\ $r$. We then provide conditions under which the system-wide total liquidations increase with increase of $r$. 

%As discussed briefly in Remark~\ref{rem:r}.  \red{[COPIED DIRECTLY FROM REMARK] Intuitively, we expect that as interest rate rises, and the borrowing becomes more expensive the liquidation of the illiquid asset increase. It also follows from here that the terminal asset price is lower, the higher the interest rate. Alternatively, from  a regulator's perspective, if the goal is to limit the extent of the fire sales, it can be achieved by controlling the interest rates, as was done most recently in  September 2019, and was also used extensively during the crisis (see \cite{quinnstanding} and \cite{cecchetti2009crisis} respectively). }

\subsection{Volume weighted average price}
%\begin{proof}
Initially, as in the prior proofs, assume that for each $i=1, ..., n$, the possible solutions to the optimization for $\bar s_i$ from \eqref{eq:s_i*-vwap}, namely $a_i, \frac{h_i}{\bar q} , \frac{h_i-a_i q}{\bar q-q},s_i^0(\sum_{j \neq i} \bar s_j )$, are all different. We want to study $\d_{r} \bar s_i$ for $i=1, ..., n$. %Note that unless $\bar s_i = s_i^0$, then $\d_{r} \bar s_i=0.$ 
From the previous assumption it follows that
\begin{align}
\d_{r} \bar s_i &= \begin{cases} 0 &\text{if } i \in I_a, \\ -\frac{h_i}{\bar q^2} \d_r \bar q &\text{if } i \in I_U, \\ \(\frac{h_i - a_i \bar q}{(\bar q - q)^2}\)\d_r q - \(\frac{h_i - a_i q}{(\bar q - q)^2}\)\d_r \bar q &\text{if } i \in I_L, \\ \d_r s_i^0 &\text{if } i \in I_0\end{cases}
\end{align}
where $I_a,I_U,I_L,I_0$ were defined in \eqref{eq:def-I1} and \eqref{eq:def-I2}. %denote the partition of the set of banks based on the construction of $\bar s$. \red{[specify explicitly]}

Before continuing, we will consider $\d_r \bar s_i$ for $i \in I_0$.  By construction, we have
\begin{align}
&-( \hat f(s_i^0 + \sum_{j \neq i} \bar s_j) + s_i^0  \hat f'(s_i^0 +\sum_{j \neq i} \bar s_j)) - (1+r) ( 2  \hat f'(s_i^0 +\sum_{j \neq i} \bar s_j) + s_i^0  \hat f''(s_i^0 + \sum_{j \neq i} \bar s_j)) \d_{r} \bar s_i\\
&\qquad  - (1+r) \sum_{j\ne i}( \hat f'(s_i^0 + \sum_{j \neq i} \bar s_j) + s_i^0  \hat f''(s_i^0 +\sum_{j \neq i} \bar s_j))  \d_{r}\bar s_j=0.
\end{align}
Recall that every bank $i \in I_0$ will satisfy the same condition, i.e., $\d_r \bar s_i = \d_r \bar s_j$ for every $i,j \in I_0$.  For notational simplicity let $s^0 = s_i^0,\d_r s^0 = \d_r s_i^0$ for arbitrary $i \in I_0$.  Let $c = \frac{ \hat f'(|I_0|s^0 + \sum_{j \not\in I_0} \bar s_j) + s^0  \hat f''(|I_0|s^0 +\sum_{j \not\in I_0} \bar s_j)}{ 2\hat f'(|I_0|s^0 + \sum_{j \not\in I_0} \bar s_j) + s^0  \hat f''(|I_0|s^0 +\sum_{j \not\in I_0} \bar s_j)}$ and $d = -\frac{\hat f(|I_0|s^0 + \sum_{j \not\in I_0} \bar s_j) + s^0 \hat f'(|I_0|s^0 + \sum_{j \not\in I_0} \bar s_j)}{(1+r)(2 \hat f'(|I_0|s^0 + \sum_{j \not\in I_0} \bar s_j) + s^0 \hat f''(|I_0|s^0 + \sum_{j \not\in I_0} \bar s_j))}$. Recall that by our Assumption \ref{ass:bar-f-g}, $0\le c<1$ and $d > 0$. Therefore, it can be shown that
\begin{equation}
\d_r s^0 = \frac{d}{1 + c(|I_0|-1)} - \frac{c}{1 + c(|I_0|-1)}\sum_{j \not\in I_0} \d_r \bar s_j.
\end{equation}

We can now consider the joint sensitivity of the haircut $q$ and price $\bar q$ to interest rates:
\begin{align}
\d_r q &= \left[\sum_{i \in I_0} \d_r s^0 + \sum_{i \not\in I_0} \d_r \bar s_i\right]g'(\sum_{i = 1}^n \bar s_i)\\
    &= \left[\frac{|I_0|d}{1 + c(|I_0|-1)} + \frac{1 - c}{1 + c(|I_0|-1)}\sum_{j \not\in I_0} \d_r \bar s_i\right] g'(\sum_{i = 1}^n \bar s_i)\\
\d_r \bar q &= \left[\sum_{i \in I_0} \d_r s^0 + \sum_{i \not\in I_0} \d_r \bar s_i\right] \hat f'(\sum_{i = 1}^n \bar s_i)\\
    &= \left[\frac{|I_0|d}{1 + c(|I_0|-1)} + \frac{1 - c}{1 + c(|I_0|-1)}\sum_{j \not\in I_0} \d_r \bar s_i\right] \hat f'(\sum_{i = 1}^n \bar s_i).
\end{align}
To simplify notation, let $\tilde c = \frac{1 - c}{1 + c(|I_0|-1)}$ and $\tilde d = \frac{|I_0|d}{1 + c(|I_0|-1)}$.  Therefore
\begin{align}
\d_r q &= \left[\tilde d + \tilde c \left(\frac{h - a \bar q}{(\bar q - q)^2} [\ind_{\{i \in I_L\}}]_i- \frac{h}{\bar q^2}\right) \d_r q- \tilde c\left(\frac{h - a q}{(\bar q - q)^2} [\ind_{\{i \in I_L\}}]_i  [\ind_{\{i \in I_U\}}]_i\right)\d_r \bar q \right] g'(\sum_{i = 1}^n \bar s_i), \\
\d_r \bar q &= \left[\tilde d + \tilde c \left(\frac{h - a \bar q}{(\bar q - q)^2} [\ind_{\{i \in I_L\}}]_i- \frac{h}{\bar q^2}\right) \d_r q- \tilde c\left(\frac{h - a q}{(\bar q - q)^2} [\ind_{\{i \in I_L\}}]_i  [\ind_{\{i \in I_U\}}]_i\right)\d_r \bar q \right]\hat f'(\sum_{i = 1}^n \bar s_i).
\end{align}
That is, the sensitivity of the haircut and prices $(q,\bar q)$ w.r.t.\ the interest rate $r$ is the solution of a linear system
\begin{align}
\left(\begin{array}{c} \d_r q \\ \d_r \bar q \end{array}\right) &= \left[I - W\right]^{-1} \left(\begin{array}{c} g'(\sum_{i = 1}^n \bar s_i) \tilde d \\ \hat f'(\sum_{i = 1}^n \bar s_i) \end{array}\right)\\
&= \left[I + \frac{\left(\begin{array}{c} g'(\sum_{i = 1}^n \bar s_i) \\ \hat f'(\sum_{i = 1}^n \bar s_i) \end{array}\right)\left(\begin{array}{cc} \frac{h - a \bar q}{(\bar q - q)^2} [\ind_{\{i \in I_L\}}]_i & -\left[\frac{h - aq}{(\bar q - q)^2} [\ind_{\{i \in I_L\}}]_i + \frac{h}{\bar q^2} [\ind_{\{i \in I_U\}}]_i \right] \end{array}\right) \tilde c}{1 - \tilde c\left[\left(\frac{h - a \bar q}{(\bar q - q)^2} [\ind_{\{i \in I_L\}}]_i\right)g'(\sum_{i = 1}^n \bar s_i) - \left(\frac{h - a q}{(\bar q - q)^2} [\ind_{\{i \in I_L\}}]_i + \frac{h}{\bar q^2} [\ind_{\{i \in I_U\}}]_i\right)\hat f'(\sum_{i = 1}^n \bar s_i)\right]}\right] \\
&\quad\times\left(\begin{array}{c} g'(\sum_{i = 1}^n \bar s_i) \tilde d \\ \hat f'(\sum_{i = 1}^n \bar s_i) \end{array}\right),\\
W &= \left(\begin{array}{c} g'(\sum_{i = 1}^n \bar s_i) \\ \hat f'(\sum_{i = 1}^n \bar s_i) \end{array}\right)\left(\begin{array}{cc} \frac{h - a \bar q}{(\bar q - q)^2} [\ind_{\{i \in I_L\}}]_i & -\left[\frac{h - aq}{(\bar q - q)^2} [\ind_{\{i \in I_L\}}]_i + \frac{h}{\bar q^2} [\ind_{\{i \in I_U\}}]_i \right] \end{array}\right) \tilde c.
\end{align}

Moreover, it also follows that
\begin{align}
\d_r \sum_{i = 1}^n \bar s_i &= \frac{\d_r q}{g'(\sum_{i = 1}^n \bar s_i)} \\
&= 1+ \frac{\tilde c\left(\left(\frac{h - a \bar q}{(\bar q - q)^2} [\ind_{\{i \in I_L\}}]_i\right)g'(\sum_{i = 1}^n \bar s_i) - \left(\frac{h - a q}{(\bar q - q)^2} [\ind_{\{i \in I_L\}}]_i + \frac{h}{\bar q^2} [\ind_{\{i \in I_U\}}]_i\right)\hat f'(\sum_{i = 1}^n \bar s_i)\right) }{1 - \tilde c\left[\left(\frac{h - a \bar q}{(\bar q - q)^2} [\ind_{\{i \in I_L\}}]_i\right)g'(\sum_{i = 1}^n \bar s_i) - \left(\frac{h - a q}{(\bar q - q)^2} [\ind_{\{i \in I_L\}}]_i + \frac{h}{\bar q^2} [\ind_{\{i \in I_U\}}]_i\right)\hat f'(\sum_{i = 1}^n \bar s_i)\right]}\\
&=\frac1{1 - \tilde c\left[\left(\frac{h - a \bar q}{(\bar q - q)^2} [\ind_{\{i \in I_L\}}]_i\right)g'(\sum_{i = 1}^n \bar s_i) - \left(\frac{h - a q}{(\bar q - q)^2} [\ind_{\{i \in I_L\}}]_i + \frac{h}{\bar q^2} [\ind_{\{i \in I_U\}}]_i\right)\hat f'(\sum_{i = 1}^n \bar s_i)\right]}.
\end{align}
It follows that $\d_r \sum_{i = 1}^n \bar s_i>0$ if $\left(\frac{h - a \bar q}{(\bar q - q)^2} [\ind_{\{i \in I_L\}}]_i\right)g'(\sum_{i = 1}^n \bar s_i) - \left(\frac{h - a q}{(\bar q - q)^2} [\ind_{\{i \in I_L\}}]_i + \frac{h}{\bar q^2} [\ind_{\{i \in I_U\}}]_i\right)\hat f'(\sum_{i = 1}^n \bar s_i)<\frac1{\tilde c},$ which happens if, for example, $\hat f'$ is small enough.
%Additionally, we conclude that higher interest rates, will increase $\sum_{i=1}^n \bar s_i$ in the Nash equilibrium and therefore will decrease the terminal asset price. When the difference assumption is one sided derivatives show that the conclusion still holds. 

%\end{proof}

\subsection{Limit order book}
%\begin{proof}
Initially, again assume that for each $i=1, ..., n$, the possible solutions ($a_i,\frac{h_i}{\bar q_i},\frac{h_i - a_i q}{\bar q_i - q},s_i^0(\sum_{j \neq i} \bar s_j)$) to the optimization \eqref{eq:s_i*-lob} are all different. As in the VWAP case, we want to study $\d_r \bar s_i$ for $i\in\{1, ..., n\}.$ From the previous assumption it follows that
\begin{equation}
\d_r \bar s_i = \begin{cases} 0 &\text{if } i \in I_a, \\ 
-\frac{h_i}{\bar q_i^2} \d_r \bar q_i &\text{if } i \in I_U, \\ 
\left(\frac{h_i - a_i \bar q_i}{(\bar q_i - q)^2}\right) \d_r q - \left(\frac{h_i - a_i q}{(\bar q_i - q)^2}\right) \d_r \bar q_i &\text{if } i \in I_L ,\\ 
\d_r s_i^0 &\text{if } i \in I_0 \end{cases}
\end{equation}
where $I_a,I_U,I_L,I_0$ were defined in \eqref{eq:def-I1} and \eqref{eq:def-I2}. %where $I_a,I_U,I_L,I_0$ denote the partition of the set of banks based on the construction of $\bar s$.

Recall $s_i^0(\bar s_{-i})$ solves the first order condition 
\begin{align}
1 - (1+r) f\(\sum_{j = 1}^n \ind_{\{\bar s_j < s_i\}}\bar s_j + (n - (k_i - 1))s_i\) = 0,
%\label{eq:LOB-FOC}
\end{align}
where $k_i$ is such that $s_{[k_i]} = s_i$. As noted in the proof of Theorem~\ref{thm:unique} in the LOB case, we have that $\bar s_i=s_i^0$ is unique, and independent of $i$.   In fact, $s_i^0 = s_j^0$ for every $i,j \in I_0$.  We will denote this common value as $s^0$.
If $\bar s_i = s^0$ then from implicit differentiation of \eqref{eq:LOB-FOC}, we get that 
\begin{align}
\d_{r} s^0 = -\frac{ f\(\sum_{j = 1}^n \ind_{\{\bar s_j < s^0\}}\bar s_j + (n - (|I_0 \cup I_L| - 1))s^0\) }{(1+r)  f'\(\sum_{j = 1}^n \ind_{\{\bar s_j < s^0\}}\bar s_j + (n - (|I_0 \cup I_L| - 1))s^0\)  (n - (|I_0 \cup I_L| - 1))} - \frac{\sum_{i \in I_U} \d_r \bar s_i}{n - (|I_0 \cup I_L| - 1)}.
%\label{eq:dsi*}
\end{align}

Now we want to consider the case of $\d_r \bar s_i$ for $i \in I_U$.  Notably, $\bar s_i < s^0$ for $i \in I_U$ by construction (see \eqref{eq:s_i*-lob}).  Therefore for such banks, there is no change to the attained prices $\bar q_i$ by a change in the interest rate, i.e., $\d_r \bar q_i = 0$ for $i \in I_U$.  This allows us to simplify $\d_r s^0$.

We can now consider the joint sensitivity of the haircut $q$ and prices $\bar \q$ to interest rates:
\begin{equation*}
\resizebox{1\textwidth}{!}{$
\begin{aligned}[t]
\d_r q &= \left[\sum_{i \in I_0 \cup I_L} \d_r \bar s_i\right] g'(\sum_{i = 1}^n \bar s_i) \\
    &= \left[-\frac{|I_0|f\(\sum_{j = 1}^n \ind_{\{\bar s_j < s^0\}}\bar s_j + (n - (|I_0 \cup I_L| - 1))s^0\) }{(1+r)  f'\(\sum_{j = 1}^n \ind_{\{\bar s_j < s^0\}}\bar s_j + (n - (|I_0 \cup I_L| - 1))s^0\)  (n - (|I_0 \cup I_L| - 1))} + \sum_{i \in I_L} \d_r \bar s_i\right] g'(\sum_{i = 1}^n \bar s_i),\label{eq:d_r_q}\\
\d_r \bar q_i &= \begin{cases} 0 &\text{if } i \in I_a \cup I_U, \\ 
-\frac{f\(\sum_{j = 1}^n \ind_{\{\bar s_j < s^0\}}\bar s_j + (n - (|I_0 \cup I_L| - 1))s^0\) }{(1+r)  f'\(\sum_{j = 1}^n \ind_{\{\bar s_j < s^0\}}\bar s_j + (n - (|I_0 \cup I_L| - 1))s^0\)  (n - (|I_0 \cup I_L| - 1))} \d_{s_i} \bar f_i(\bar \s) &\text{if } i \in I_0, \\
 -\frac{|I_0| f\(\sum_{j = 1}^n \ind_{\{\bar s_j < s^0\}}\bar s_j + (n - (|I_0 \cup I_L| - 1))s^0\) }{(1+r)  f'\(\sum_{j = 1}^n \ind_{\{\bar s_j < s^0\}}\bar s_j + (n - (|I_0 \cup I_L| - 1))s^0\)  (n - (|I_0 \cup I_L| - 1))} \d_{s_{i_0}} \bar f_i(\bar \s) + \sum_{\substack{j \in I_L \\ \bar s_j \leq \bar s_i}} \left(\d_r \bar s_j\right) \d_{s_j} \bar f_i(\bar s) &\text{if } i \in I_L, \end{cases}
\end{aligned}
$}
\end{equation*}
for arbitrary $i_0 \in I_0$.

To simplify notation, let $\tilde c =-\frac{f\(\sum_{j = 1}^n \ind_{\{\bar s_j < s^0\}}\bar s_j + (n - (|I_0 \cup I_L| - 1))s^0\) }{(1+r)  f'\(\sum_{j = 1}^n \ind_{\{\bar s_j < s^0\}}\bar s_j + (n - (|I_0 \cup I_L| - 1))s^0\)  (n - (|I_0 \cup I_L| - 1))} $. % and $\tilde d = -\frac{|I_0|f\(\sum_{j = 1}^n \ind_{\{\bar s_j < s^0\}}\bar s_j + (n - (|I_0 \cup I_L| - 1))s^0\) }{(1+r)  f'\(\sum_{j = 1}^n \ind_{\{\bar s_j < s^0\}}\bar s_j + (n - (|I_0 \cup I_L| - 1))s^0\)  (n - (|I_0 \cup I_L| - 1))}$.
It then follows that
\begin{align}
\left(\begin{array}{c} \d_r \bar{\vec  q} \\ \d_r  q \end{array}\right) = \vec W^{-1} \vec b,
\end{align}
where
\begin{align}
\vec W&= (w_{i,j})_{1\le i,j\le n+1},\\
\vec b &= (b_i)_{i=1, ...,n+1},\\
w_{i,j} &= \ind_{\{i=j<n+1\}} + \ind_{\{i,j\in I_L, \bar s_i\ge \bar s_j\}}   \frac{h_j-a_{j} q_j}{\(\bar q_j-q\)^2}   \d_{s_j}\bar f_i(\bar \s) -\ind_{\{i\in I_L, j=n+1\}}\sum_{k\in I_L, \bar s_k\le \bar s_i}  \frac{h_k-a_{k}\bar q_k}{\(\bar q_k-q\)^2}\d_{s_k}\bar f_i(\bar \s) , \\
w_{n+1, j} &= \ind_{\{j\in I_L\}} \frac{h_j-a_{j} q}{\(\bar q_j-q\)^2} g'+ \ind_{\{j =n+1\}} \(1-g'\sum_{i\in\I_L}\frac{h_i-a_{i}q}{\(\bar q_i-q\)^2} \),\\
b_{i} &= \tilde c\d_{s_i} \bar f_i(\bar \s) \ind_{\{i\in I_0\}} + \abs{I_0}\tilde c\d_{s_{i_0}} \bar f_i(\bar \s) \ind_{\{i\in I_L\}} + |I_0| \tilde c g'   \ind_{\{i=n+1\}}. 
\end{align}
We note, without loss of generality, that if we assume that for any $i\in I_a, j\in I_U, k\in I_0, l\in I_L,$ we have that $i<j<k<l$, and that for any $i,j\in I_L$, such that $i<j$ then $\bar s_i\le \bar s_j$, we then have that $\vec W $ is lower triangular, but has an addition of one full $n+1$ column. $\vec W$ is invertible, and we can find its inverse as follows: 
Note that $\vec W$ can be written as 
$$\vec W = \(\vec W_0+  (0,0, ..., 0,1)^\T \[ \ind_{\{j\in I_L\}} \frac{h_j-a_{j} q}{\(\bar q_j-q\)^2} g'- \ind_{\{j =n+1\}} g'\sum_{i\in\I_L}\frac{h_i-a_{i}q}{\(\bar q_i-q\)^2} \]_{j=1, ..., n+1}\)^\T,$$ 
where $\vec W_0 = \vec D\( I + \vec N\)$ with $$\vec D =  \diag\(\[ 1+ \ind_{\{j\in I_L\}}   \frac{h_j-a_{j} q_j}{\(\bar q_j-q\)^2}   \d_{s_j}\bar f_j(\bar \s)\]_{j=1, ...,n}  \)$$ is a diagonal matrix and a nilpotent matrix $$\vec N = \[ \(\ind_{\{i,j\in I_L, \bar s_i> \bar s_j\}}   \frac{h_j-a_{j} q_j}{\(\bar q_j-q\)^2}   \d_{s_j}\bar f_i(\bar \s)\)_{j,i}\]_{1\le i,j\le n+1}.$$
Note that $\vec N $ is such that $\vec N^{n+1}=0.$ Therefore, we have that $$\vec W_0^{-1} = \( I + \vec N\)^{-1} \vec D^{-1} = \(I - \vec N + \vec N^{2} + ... + (-1)^n\vec N^n\) \vec D^{-1}.$$ Finally,
\begin{align}
\vec W^{-1} = \(\vec W_0^{-1} - \frac{\vec W_0^{-1}  (0,0, ..., 0,1)^\T \[ \ind_{\{j\in I_L\}} \frac{h_j-a_{j} q}{\(\bar q_j-q\)^2} g'- \ind_{\{j =n+1\}} g'\sum_{i\in\I_L}\frac{h_i-a_{i}q}{\(\bar q_i-q\)^2} \]_{j=1, ..., n+1}  \vec W_0^{-1}}{1 +  \[ \ind_{\{j\in I_L\}} \frac{h_j-a_{j} q}{\(\bar q_j-q\)^2} g'- \ind_{\{j =n+1\}} g'\sum_{i\in\I_L}\frac{h_i-a_{i}q}{\(\bar q_i-q\)^2} \]_{j=1, ..., n+1} \vec W_0^{-1} (0,0, ..., 0,1)^\T}\)^\T.
\end{align}

To calculate $\d_r \sum_{i=1}^n \bar s_i$, recall that $\d_r \bar q_i = 0$ for $i \in I_U$, and from \eqref{eq:d_r_q} it follows that
\begin{align}
\d_r \sum_{i=1}^n \bar s_i = \sum_{i \in I_0 \cup I_L} \d_r \bar s_i = \frac{\d_r q}{g'(\sum_{i = 1}^n \bar s_i)} = \frac{(0,0, ..., 0,1)^\T \vec W^{-1} \vec b}{g'(\sum_{i = 1}^n \bar s_i)} . 
\end{align}
It follows that $\d_r \sum_{i = 1}^n \bar s_i\ge0$ if $(0,0, ..., 0,1)^\T \vec W^{-1} \vec b\le0.$

\end{document}